%% file: main.tex
\documentclass[11pt,letterpaper]{article}

\usepackage{preamble}
\usepackage{authblk}

\title{Improved Learning with Structure:\\
Fine-Grained Complexity of Minimum Consistent Subset}

\author[1]{Robert Ganian%
  \thanks{\href{mailto:rganian@ac.tuwien.ac.at}{\texttt{rganian@ac.tuwien.ac.at}}}}
\author[2]{Manolis Vasilakis%
  \thanks{\href{mailto:emmanouil.vasilakis@dauphine.eu}{\texttt{emmanouil.vasilakis@dauphine.eu}}}}
\author[1]{Simon Wietheger%
  \thanks{\href{mailto:swietheger@ac.tuwien.ac.at}{\texttt{swietheger@ac.tuwien.ac.at}}}}

\affil[1]{Algorithms and Complexity Group, TU Wien, Austria}
\affil[2]{Universit\'{e} Paris Dauphine -- PSL, CNRS UMR7243, LAMSADE, France}

\date{}

\begin{document}

\maketitle

\begin{abstract}
\input{Sections/abstract}
\end{abstract}

\input{Sections/introduction}
\input{Sections/preliminaries}

\input{Sections/tw-algo}

\section{Establishing Optimality via Lower Bounds}

The aim of this section is to rule out any substantial improvements to \cref{thm:twalg}. For our first---and most challenging---lower bound, we rely on a careful reduction from a well-studied colored variant of the \textsc{Independent Set} problem.

\problemdef{\MCI (\MCIshort)}
{Graph $G$, a partition $V_1, \ldots, V_q$ of $V(G)$ for some $q,n\in \N$ such that $G[V_i]$ is a clique and $|V_i|=n$ for every $i\in[q]$.}
{Decide if $G$ has an independent set of size~$q$.}

Intuitively, each of the vertex subsets $V_1, \ldots, V_q$ is often viewed as having the same color, and the aim is then to find an independent set spanning each of the $q$ colors.
It is known that unless the Exponential Time Hypothesis fails, \MCI cannot be solved in time $f(q) n^{o(q)}$ for any computable function $f$~\cite{books/CyganFKLMPPS15}.

We now formalize the underlying construction.

\input{Sections/unweighted-reduction}

In combination with the aforementioned ETH-based lower bound for \MCI, \cref{lem:reduction:unweighted-c-pw-fvs-k} directly yields a matching lower bound for \cref{thm:twalg}.

\begin{theorem}\label{thm:whard:unweighted-c-pw-fvs-k}
    Even on instances where $c=2$, {\MCS} cannot be solved in time $f(k+\pw+\fvs)\cdot n^{o(k+\pw+\fvs)}$ under the \textup{ETH} for any computable function $f$.
\end{theorem}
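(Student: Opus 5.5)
The theorem follows by composing the reduction of \cref{lem:reduction:unweighted-c-pw-fvs-k} with the known ETH lower bound for \MCI. I assume that lemma states a polynomial-time reduction taking an instance $(G, V_1,\dots,V_q)$ of \MCI with clique size $n$ to an equivalent instance of \MCS with $c=2$ colors. I further assume the target instance has a set size bound $k$, pathwidth $\pw$ and feedback vertex number $\fvs$ each bounded by $\mathcal{O}(q)$, and polynomially many points in $n+q$. If the lemma's exact bounds differ, for instance $k+\pw+\fvs \le g(q)$ with $g$ linear, the argument goes through unchanged as long as the dependence on $q$ is linear.

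The proof is by contradiction. Suppose an algorithm $\mathcal{A}$ solves \MCS with $c=2$ in time $f(k+\pw+\fvs)\cdot N^{o(k+\pw+\fvs)}$, where $N$ is the size of the \MCS instance. Given an \MCI instance, we apply the reduction in time $\mathrm{poly}(n,q)$ and obtain an instance with $N \le (nq)^{d}$ for some constant $d$ and $k+\pw+\fvs \le \alpha q$ for some constant $\alpha$. Running $\mathcal{A}$ on it takes time $f'(q)\cdot (nq)^{d\cdot o(q)}$ with $f'(q)=\max_{p\le \alpha q} f(p)$, which is computable.

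We must then put this in the form $f''(q)\, n^{o(q)}$. Since $q^{o(q)}$ depends only on $q$, it is absorbed into $f''$. The exponent $d\cdot o(\alpha q)$ is still $o(q)$, and the polynomial reduction time is dominated. Care is needed with the $o(\cdot)$ notation: the standard convention is that the $o(q)$ refers to some fixed function $h(q)\in o(q)$, and composing it with the linear bound and the monotone hull keeps it in $o(q)$. Because the true parameter may be smaller than $\alpha q$, we may assume without loss of generality that $h$ is nondecreasing, so that substituting the upper bound $\alpha q$ is valid. The result is an algorithm for \MCI running in time $f''(q)\, n^{o(q)}$, contradicting the ETH lower bound cited from \cite{books/CyganFKLMPPS15}.

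The only delicate step is this bookkeeping: confirming that the lemma gives a bound on $k+\pw+\fvs$ that is linear in $q$ and independent of $n$, and that the instance size stays polynomial. Everything else is routine substitution.
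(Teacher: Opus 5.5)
Your proposal is correct and is the paper's own argument: \cref{lem:reduction:unweighted-c-pw-fvs-k} gives a polynomial-time reduction to {\MCS} with $c=2$ and $k+\pw+\fvs\in\bO(q)$, and composing it with the $f(q)\,n^{o(q)}$ ETH lower bound for {\MCI} yields the theorem. The paper states this in a single sentence, and your handling of the $o(\cdot)$ bookkeeping (monotone hulls, absorbing $q^{o(q)}$ into the parameter function) is the routine detail it leaves implicit.
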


Moreover, recall that \cref{cor:td} yields a running time of $2^{\bO(\td^2 + c\cdot\td)}\cdot n^{\bO(1)}$ for {\MCS} on unweighted graphs of treedepth $\td$.
By combining \cref{lem:reduction:unweighted-c-pw-fvs-k} with a careful reduction from \ThreeSAT\ to \MCI we show that, under the ETH, this dependence on $\td$ is essentially optimal.

\begin{theorem}\label{thm:treedepth:lower-bound}
    {\MCS} cannot be solved in time $2^{o(\td^2)}n^{\bO(1)}$ under the \textup{ETH}, even for $c=2$.
\end{theorem}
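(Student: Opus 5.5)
We chain three reductions: \ThreeSAT{} to \MCI, then \MCI{} to {\MCS} via \cref{lem:reduction:unweighted-c-pw-fvs-k}, and finally a bound on the treedepth of the resulting instance. Suppose there were an algorithm for {\MCS} running in time $2^{o(\td^2)}n^{\bO(1)}$. We would like to solve a \ThreeSAT{} instance with $N$ variables and $\bO(N)$ clauses (sparsification lemma) in time $2^{o(N)}$, contradicting the ETH. To this end we choose a parameter $q$ and build an \MCI{} instance with $q$ colour classes, each of size $n = 2^{\bO(N/q)}$. For this we partition the variables into $q$ blocks of size $N/q$ and let $V_i$ contain one vertex per assignment of block $i$. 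To encode clauses, which may touch up to three blocks, we first regroup the clauses so that each one spans at most two blocks. Alternatively, we add clause-choice classes: one class per group of $\bO(N/q)$ clauses, whose vertices pick a satisfying literal per clause. Each such vertex is made adjacent to every block-assignment vertex that falsifies the chosen literal. Pairs of assignment vertices in the same class are adjacent, since each $V_i$ is a clique. In every variant the number of classes stays $\Theta(q)$ and the sizes stay $2^{\bO(N/q)}$; classes are padded with isolated-duplicate dummy vertices so that all $|V_i|$ are equal.

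Next we apply \cref{lem:reduction:unweighted-c-pw-fvs-k}, which produces an equivalent {\MCS} instance with $c=2$, polynomial size in $qn$, and parameters $k,\pw,\fvs$ all in $\bO(q)$. The key additional step is to show that the treedepth is also small. I expect to argue $\td = \bO(q)$ plus an additive term that is at most logarithmic in $n$, i.e.\ $\td = \bO(q + \log n)$. The reason is that the constructed graph presumably has a modulator of $\bO(q)$ vertices (the "selection" hubs) whose removal leaves components that are paths or trees of depth $\bO(\log n)$, or stars. Paths of length $\mathrm{poly}(n)$ have treedepth $\bO(\log n)$.

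With $n = 2^{\Theta(N/q)}$ we get $\td = \bO(q + N/q)$. Choosing $q = \sqrt{N}$ gives $\td = \bO(\sqrt N)$, so $\td^2 = \bO(N)$. A $2^{o(\td^2)}$ algorithm therefore runs in $2^{o(N)}\cdot 2^{\bO(\sqrt N)}$ time on this instance, which is $2^{o(N)}$ overall. This contradicts the ETH.

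The main obstacle is precisely the treedepth bound: \cref{lem:reduction:unweighted-c-pw-fvs-k} only controls pathwidth and feedback vertex set, whereas treedepth can be as large as $\log$ of the size even on paths. We must therefore open up the construction and verify that the gadgets representing the $n$ vertices of each colour class form components of depth $\bO(\log n)=\bO(N/q)$ after deleting $\bO(q)$ vertices, rather than something like $\mathrm{poly}(n)$. If some gadget is a long path of length $\mathrm{poly}(n)$, we rely on the $\log$-depth bound for paths. A second, more routine, point is making the \ThreeSAT{}$\to$\MCI{} encoding produce exactly $\Theta(q)$ clique classes of size $2^{\bO(N/q)}$; the grouping of clauses into classes must avoid increasing the count beyond $\bO(q)$.
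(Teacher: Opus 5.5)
Your overall architecture matches the paper's proof. Both reduce {\ThreeSAT} to {\MCIshort} with $\Theta(\sqrt{\cdot})$ clique classes of size $2^{\bO(\sqrt{\cdot})}$, apply \cref{lem:reduction:unweighted-c-pw-fvs-k}, and balance $q$ against $\log n$ to get $\td=\bO(\sqrt{\cdot})$.

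The step you single out as the main obstacle is already handled. The lemma explicitly states $\td(H)\in\bO(q+\log n)$, and its proof is the modulator argument you sketch: a set $F$ of $5q+1$ hub vertices whose removal leaves trees consisting of a few centres with paths of length $\bO(n)$, and such paths have treedepth $\bO(\log n)$. So you can cite it directly rather than reopen the construction.

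The genuine difference is the {\ThreeSAT}$\to${\MCIshort} encoding.
\begin{itemize}
\item The paper partitions the \emph{clauses} into $\sqrt m$ groups of $\sqrt m$ clauses. Each class consists of the partial assignments to the at most $3\sqrt m$ variables of a group that satisfy all of its clauses, with edges between inconsistent pairs. This needs no auxiliary classes and no sparsification, since a $2^{o(m)}$ algorithm already contradicts the ETH.
\item Your variable blocks plus clause-choice (literal-selection) classes are also correct. However, they rely on sparsification ($m=\bO(N)$) to keep the number of clause classes at $\bO(q)$.
\end{itemize}

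Two details of your write-up need fixing.
\begin{itemize}
\item The first option, regrouping clauses so that each touches at most two blocks, is not possible in general. A clause can hit three distinct blocks, and {\MCIshort} only expresses pairwise conflicts, so you need the clause-choice classes.
\item The padding must use true twins: copies of an existing vertex with the same neighbourhood outside its class, as the paper does. A dummy vertex adjacent only to its own clique is a free choice. In a clause class it bypasses the clause constraints, and in a variable class it leaves that block's literals unconstrained, so soundness breaks.
\end{itemize}
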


\begin{proof}
    We reduce {\ThreeSAT} to {\MCS} via {\MCI}.
    Let $\psi$ be a 3-CNF formula with variable set
    $X = \{x_1, \ldots, x_n\}$ and clause set
    $C = \{c_1, \ldots, c_m\}$.
    By padding with dummy clauses if necessary, we may assume that $q = \sqrt{m}$ is an integer.
    Partition $C$ into $q$ groups $C^1, \ldots, C^q$, each of size $q$.
    For every $i \in [q]$, let $V^i$ be the set of variables appearing in
    the clauses of $C^i$, so $|V^i| \le 3q$, and let
    $\Phi^i = \{\phi^i_1, \ldots, \phi^i_{N_i}\}$ be the set of partial
    assignments to $V^i$ that satisfy every clause in $C^i$. Note that $N_i \le 2^{3q}$.

    We construct an instance $(G, q)$ of {\MCI} by taking $V_i = \Phi^i$ for $i \in [q]$: each vertex class corresponds to one block of clauses, and each vertex represents a partial assignment that satisfies all clauses in its block. We make every $V_i$ a clique in $G$, so that any independent set contains at most one assignment per block. For every pair of partial assignments from different blocks that disagree on a shared variable, we add an edge between the corresponding vertices.

    Any independent set of $G$ of size $q$ thus consists of one assignment per block that are pairwise consistent and jointly satisfy every clause of $\psi$, witnessing satisfiability. Conversely, any satisfying assignment of $\psi$ restricts to one partial assignment per $V^i$; these are pairwise consistent and locally satisfying, so the corresponding vertices form an independent set in $G$.

    We may assume that each $V_i$ contains exactly $2^{3q}$ vertices, by duplicating vertices (with identical neighborhoods) within each class as needed. We may also assume that no $V_i$ is empty, as otherwise $\psi$ would be trivially unsatisfiable.

    The above reduces {\ThreeSAT} to an instance of {\MCI} with $q = \sqrt{m}$ vertex classes and $2^{3\sqrt{m}}$ vertices per class.
    Applying \cref{lem:reduction:unweighted-c-pw-fvs-k} produces an equivalent instance of {\MCS} with $c = 2$ and treedepth $\bO(q + \log(2^{3\sqrt{m}})) = \bO(\sqrt{m})$.
    The whole reduction can be carried out in $2^{\bO(\sqrt{m})}$ time.
    Hence an algorithm running in time $2^{o(\td^2)} N^{\bO(1)}$ on $N$-vertex instances of {\MCS} would yield a $2^{o(m)}$-time algorithm for {\ThreeSAT}, contradicting the ETH.
\end{proof}

\input{Sections/weighted-reduction}

\input{Sections/tree-reduction}

\input{Sections/vc-algos}

\input{Sections/conclusion}

% \input{Sections/improved-vc-algos}

\bibliographystyle{plainurl}
\bibliography{refs}

\end{document}

%% file: Sections/abstract.tex
Instance selection is a vital technique for mitigating the computational bottlenecks
of nearest-neighbor classification in large-scale supervised clustering.
A classical theoretical formulation of this objective is the \textsc{Minimum Consistent Subset (MCS)} problem.
While recent research has explored its complexity on unweighted graphs to uncover
structural boundaries of tractability, arbitrary metric spaces are much more accurately
modeled by (edge-)weighted graphs.

In this paper, we develop a comprehensive fine-grained complexity map of \textsc{MCS} on
both unweighted and weighted graphs. As our main result, we introduce a
$3^{c \cdot(\mathrm{tw}+1)}\cdot n^{\mathrm{tw}+\mathcal{O}(1)}$ algorithm for $n$-vertex $c$-colored
\textsc{MCS} instances on weighted graphs of treewidth $\mathrm{tw}$, substantially improving upon
the previous state-of-the-art algorithm for unweighted \textsc{MCS} on trees both in terms
of generality and running time. We complement this positive result with a series of lower
bounds that rule out asymptotic improvements to the running time for both
weighted and unweighted graphs under the Exponential Time Hypothesis (ETH). Moreover,
we improve the recent slightly superexponential vertex-cover based algorithm for unweighted
\textsc{MCS} (AAAI 2026) to a single-exponential one, and rule out further improvements to
subexponential running times under the ETH.
Together, our results strictly delineate the algorithmic boundaries of consistent subset selection
across diverse metric structures.

%% file: Sections/introduction.tex
\section{Introduction}

Clustering forms a foundation for a wide array of tasks across artificial intelligence.
Intuitively, given a collection of data points in a metric space, the goal of clustering is
to partition these into clusters so that proximate points are grouped into the same cluster.
While multiple methods for clustering have been considered to date, one that is prominently
employed in the machine learning setting is supervised clustering.
There, the process begins with a labeled training dataset---formally,
this can be viewed as a subset of points mapped via a coloring function, where each of the $c$
colors represents a distinct class or cluster.
The initial dataset is used to identify a set of cluster centers.
Once the cluster centers are computed, any new, unlabeled point is classified using a
nearest-neighbor approach: it receives the color of a closest cluster center.

However, scaling this nearest-neighbor framework to large datasets introduces severe
computational bottlenecks.
As the volume of data grows, identifying optimal cluster centers and querying nearest neighbors
becomes highly resource-intensive.
To mitigate this, a prominent line of research focuses on instance selection, which aims to
extract a compact subset of the training data that fully preserves
the original classification behavior.
A classical formulation of this objective is the \textsc{Metric Minimum Consistent Subset}
(\textsc{MMCS}) problem, introduced by Hart already in 1968~\cite{Hart}: Given a set of $n$
$c$-colored points in a metric space and a budget $k$, either compute a set $S$ of at most
$k$ points such that every point $p\not \in S$ has a nearest point $p'$ in $S$ with the same
color (i.e., among all points in $S$ with minimum distance to $p$, at least one must have the
same color as $p$) or determine that no such \emph{consistent set} $S$ of size at most $k$
exists.%
\footnote{We remark that here we state the decision version of \textsc{MMCS} to provide crisper
complexity-theoretic statements; every algorithm provided in this manuscript can also solve the
optimization version and output a set $S$ as witness for positive instances.}
%the optimization version---where we seek to minimize $k$---is equivalent to it.}.

Since its introduction, the computational complexity of \textsc{MMCS} has been studied in a variety
of settings and has been shown to be surprisingly resistant to classical computational approaches:
it is not only \NP-hard, but cannot even be solved in time
$f(k,d)\cdot n^{o(k^{1-\frac{1}{d}})}$, where $d$ is the dimensionality of the metric space
and $f$ is an arbitrary computable function, unless the Exponential Time Hypothesis (ETH)
fails~\cite{Chitnis22}.
Furthermore, \textsc{MMCS} is known to remain intractable even in Euclidean metric spaces~\cite{Wilfong,KhodamoradiKR18}.

\paragraph{Consistent Subsets in Graph Metrics.}
The lower bounds mentioned above motivated a more detailed study of \textsc{MMCS} in graph
metrics---that is, on instances where the data points are vertices of a graph $G$ and their
pairwise distances are the shortest-path distances in $G$.
While the problem remains \NP-hard on unweighted graphs~\cite{Wilfong}, this perspective
enables a fine-grained study of the problem's boundaries of tractability by taking into
account the structure of $G$.
A prominent recent result in this direction is that of Banik, Parthasarathi, Raman, Roy, and
Sahu~\cite{BanikPRRS26}, who obtained an algorithm solving the problem in time
$\vc^{\bO(\vc)}\cdot n^{\bO(1)}$ where $\vc$ is the minimum size of a vertex cover in an
unweighted graph $G$ (the so-called \emph{vertex cover number} of $G$).
Here, we follow their terminology and denote the restriction of \textsc{MMCS} to
unweighted graph metrics simply as \MCSshort.

Prior to that result, a number of recent research papers have obtained polynomial-time algorithms
for {\MCSshort} on various classes of unweighted graphs, including bicolored trees~\cite{fct/DeyMN21},
caterpillars~\cite{dam/DeyMN23}, and other more specialized classes~\cite{Manna25,Manna26}.
Notably, the former was generalized to a $2^{6c}\cdot n^{\bO(1)}$ algorithm on unweighted
trees~\cite{fsttcs/BanikDMMNMRRS24}, and the same paper also excluded polynomial-time
tractability for such trees.
%by establishing \NP-hardness when the number $c$ of colors.

\paragraph{Weighted Graph Metrics.}
While the body of previous work on graph metrics is impressive and non-trivial, all of the algorithmic
results mentioned above apply only to metric spaces that can be represented as unweighted graphs.
A natural way to model \emph{all} metric spaces is to work with edge-weighted graphs: every metric
space directly corresponds to a complete graph with edge weights encoding pairwise distances,
but whenever distances are transitive the model will naturally avoid direct edges (yielding the
same benefits as the unweighted graph model).
In view of the above, here we primarily focus on {\MCSshort} on edge-weighted graphs:

\problemdef{\WMCS~(\WMCSshort)}
{$n$-vertex graph $G$ with positive edge-weight function $w \colon E(G) \to \mathbb{Q}_{>0}$, integers $k,c \in \N$, and a vertex coloring $\col \colon V(G) \to [c]$.}
{Decide if there exists a consistent vertex set $S$ of size at most $k$.}

% \simon{I think we never use that edge weights are natural numbers, shall we instead have weights
% in $\mathbb{Q}$?}
% \manolis{Does this mess up Fullkerson or Dijkstra?
% Otherwise, I think that indeed we can just have positive weights in $\mathbb{Q}$.}
% \simon{You mean Floyd-Warshall for APSP? (ah, and dijsktra for weighted VC). Then yes, there shouldn't be any problem with rational weights }

It is worth noting that since {\WMCSshort} on edge-weighted complete graphs already captures arbitrary metric spaces, we cannot hope for efficient algorithms solving
{\WMCSshort} on dense graph classes (such as unit interval graphs~\cite{Manna25} or graphs of
bounded neighborhood diversity~\cite{BanikPRRS26}).
Nevertheless, there was previously no known obstacle that would prevent lifting the
recently developed algorithms for sparse graph classes to the weighted setting.

\paragraph{Our Contributions.}
The primary aim of this article is to obtain a comprehensive and detailed understanding of the
running time required to solve \WMCSshort.
However, as a by-product of our insights and results, we also substantially improve both the
upper and lower running time bounds for the previously studied \MCSshort.

In our main result, we show that {\WMCSshort} can be solved efficiently not only on trees, but
on all ``tree-like'' graphs when the number $c$ of colors is small.
Here, the notion of tree-likeness is captured by the well-established \emph{treewidth} $\tw$ of
the graph~\cite{RobertsonS84}: trees have a treewidth of $1$ while, e.g., an $n$-vertex complete
graph (which is far from tree-like) has a treewidth of $n-1$.
We prove:

\begin{restatable}{theorem}{twalg}\label{thm:twalg}
    {\WMCS} can be solved in time $3^{c\cdot(\tw+1)}\cdot n^{\tw+\bO(1)}$.
\end{restatable}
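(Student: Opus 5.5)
We solve the problem by dynamic programming over a nice tree decomposition $(T,\{X_t\})$ of $G$ of width $\tw$. We first compute all pairwise distances $\mathrm{dist}(u,v)$ in polynomial time (Floyd--Warshall works with positive rational weights). The factor $n^{\tw+\bO(1)}$ comes from guessing, for every bag vertex, a single vertex of the solution: its \emph{nearest representative}. The factor $3^{c(\tw+1)}$ comes from tracking, for each bag vertex and each color, a three-valued status.

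Consistency can be rephrased as follows. For each $v\notin S$ let $d_S(v)=\min_{s\in S}\mathrm{dist}(v,s)$. Then $S$ is consistent if and only if every such $v$ has some $s\in S$ with $\col(s)=\col(v)$ and $\mathrm{dist}(v,s)=d_S(v)$. Equivalently, for each vertex $v$ there is a value $r_v=\min_{s\in S,\col(s)=\col(v)}\mathrm{dist}(v,s)$, and no $s'\in S$ of any colour may satisfy $\mathrm{dist}(v,s')<r_v$.

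The state at node $t$ therefore stores, for each $x\in X_t$, a guessed vertex $a_x\in V(G)$. This vertex is meant to be a closest solution vertex of colour $\col(x)$ to $x$, and it fixes the radius $r_x=\mathrm{dist}(x,a_x)$; there are $n^{|X_t|}$ such guesses. For each $x\in X_t$ and each colour $j\in[c]$, the state also stores a status in $\{$\emph{forbidden}, \emph{satisfied}, \emph{pending}$\}$. Roughly, this records whether a solution vertex of colour $j$ has already been placed on the processed side at distance $<r_x$ (which would be an immediate violation, so it only matters as a constraint), at distance exactly $r_x$, or not yet.

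The crucial observation making this local is that shortest paths from a forgotten vertex $v$ to a solution vertex $s$ on the other side of a separator pass through some bag vertex $x$. Hence $\mathrm{dist}(v,s)=\min_{x\in X_t}(\mathrm{dist}(v,x)+\mathrm{dist}(x,s))$, and violations or witnesses for forgotten vertices can be certified via the bag vertices' radii. Concretely, when forgetting a vertex $v$ we require that its nearest same-coloured solution vertex is realised either inside the processed part or through some bag vertex $x$, with $\mathrm{dist}(v,x)+r_x$ matching $r_v$. We also require that no processed solution vertex lies closer than $r_v$; this is checked as vertices are introduced, since the guess $a_v$ fixes $r_v$ globally. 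Introduce and forget nodes update statuses using precomputed distances. Join nodes combine the per-colour statuses of the two children, with equal guesses $a_x$ required on both sides. Combining the three-valued statuses colourwise gives the $3^{c(\tw+1)}$ factor, using fast subset-convolution-style merging or a simple monotone combination rule so that the join does not square the count.

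The main obstacle is correctness of these local certificates. For a forgotten vertex $v$, we must ensure that every solution vertex introduced later, on the unprocessed side, is not strictly closer than $r_v$, even though $v$ is no longer in the bag. The plan is to charge this to bag vertices. If $s$ is introduced later and $\mathrm{dist}(v,s)=\mathrm{dist}(v,x)+\mathrm{dist}(x,s)$ for a separator vertex $x$, then we need $\mathrm{dist}(x,s)\ge r_v-\mathrm{dist}(v,x)$. So when forgetting $v$, we tighten for each bag vertex $x$ a lower bound on the distance from $x$ to any future solution vertex. These bounds must be representable by the guesses $a_x$, that is, by $r_x$ together with the colour statuses, rather than by arbitrary numbers; otherwise the state space would not be bounded by $n^{\tw+1}$. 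Proving that $r_x$, the radius to the \emph{globally} nearest solution vertex, suffices, and that the triangle inequality lets all the forgotten constraints collapse into it, is where I expect most of the work. I would attempt this by choosing each $a_x$ as a globally nearest solution vertex (any colour) and letting the colour statuses encode which colours attain that minimum.
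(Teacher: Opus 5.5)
There is a genuine gap. Your parameter accounting is right: one distance value per bag vertex, and a three-valued status per (bag vertex, colour). Your radius is also already the right quantity, since for a consistent $S$ a nearest same-coloured solution vertex is a nearest solution vertex, so $r_x=\dist(x,S)$.

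The gap is in the colour information, together with the step you yourself leave open. As you define them, the statuses only describe the processed part. Moreover, ``forbidden'' is an infeasible state: a solution vertex at distance $<r_x$ contradicts $r_x=\dist(x,S)$. So you effectively record only the set of colours already seen at distance $r_x$. That cannot certify a forgotten vertex $v$ whose closest same-coloured solution vertex $s$ lies beyond the separator. Suppose the shortest $v$--$s$ path runs through a bag vertex $x$ of a different colour, with $\dist(v,x)+r_x=r_v$. Then you need a commitment that colour $\col(v)$ \emph{will} occur at distance exactly $r_x$ from $x$. The guess $a_x$ only commits to $\col(x)$, and ``pending'' commits to nothing. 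Remembering the concrete vertex $a_x$ does not help either, since nothing can check that $a_x$ gets selected after $x$ leaves the bag. Your closing sentence points in the right direction, but it changes what the three statuses mean, and the checks that make the DP sound are exactly what is missing.

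The missing ingredient is what the paper does. For each bag vertex $v_i$, store three things: the global distance $f_i$; a \emph{promised} colour set $R_i\ni\col(v_i)$ of solution vertices at that distance; and the subset $C_i\subseteq R_i$ already witnessed in the processed part. The three states per colour are then $\notin R_i$, $\in R_i\setminus C_i$, and $\in C_i$. Two checks make this sound:
\begin{itemize}
\item \emph{Coherence} for every ordered pair of bag vertices: $\dist(v_i,v_j)+f_j\ge f_i$, with $R_j\subseteq R_i$ and $C_j\subseteq C_i$ in the tight case.
\item \emph{Finalization} when forgetting $v$ with $f_v>0$: require $R_v=C_v\cup\bigcup_{\dist(v,v_i)+f_i=f_v}R_i$. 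This hands $v$'s promise over to the remaining bag promises. If $f_v=0$, then $v$ becomes selected and $\col(v)$ is added to $C_i$ for the tight bag vertices.
\end{itemize}

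The ``collapse'' you expect to be most of the work then follows by a short induction along the decomposition. A solution vertex $s$ processed later reaches $v$ only through some current bag vertex $x$. The bound $\dist(x,\cdot)\ge f_x$ on processed solution vertices is maintained as an invariant. Coherence then gives $\dist(v,s)\ge\dist(v,x)+f_x\ge f_v$. Hence no lower bounds for forgotten vertices ever need to be stored.

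At join nodes, the children must agree on all $(f_i,R_i)$, and one sets $C_i=C_i^{(1)}\cup C_i^{(2)}$. This is a cover product over $\bigcup_i\{i\}\times R_i$. Summing $2^{|R_i|}$ over all $R_i\ni\col(v_i)$ gives $(2\cdot 3^{c-1})^{b}\le 3^{cb}$. This is how the join stays within $3^{c(\tw+1)}$ instead of squaring the number of states, which your ``monotone combination rule'' leaves unspecified.
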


On a high level, the proof of \cref{thm:twalg} follows the well-established framework of dynamic programming over a nice
tree decomposition. However, the information that we store in the dynamic programming tables
does not concern only the part of the solution already in the subtree of the corresponding node, but
also the part that is outside of the subtree in the form of ``guessed'' information about a global solution.
Apart from that, in order to keep the exponential
dependence on $c$ under control, we make use of convolution-style techniques for efficiently combining tables
on the join nodes of a tree decomposition~\cite[Section~10]{books/CyganFKLMPPS15}.

Crucially, \cref{thm:twalg} has significant implications on the state of the art.
First, it implies a $9^{c} \cdot n^{\bO(1)}$ algorithm for {\MCSshort} on unweighted trees,
thus significantly improving over the previous $2^{6c} \cdot n^{\bO(1)}$ algorithm of Banik et al.~\cite{fsttcs/BanikDMMNMRRS24}.

Second, we complement \cref{thm:twalg} with a matching lower bound which excludes any fundamental
improvements of the result for \WMCSshort.
Notably, in \cref{thm:whard:weighted-vc} we use the ETH to exclude any algorithm solving
{\WMCSshort} in time $n^{o(\vc+k)}$, even on instances with only $2$ colors.
Since it is well-known that $\tw \leq \vc$, this means that the second term in the running time
of \cref{thm:twalg} is tight.
Among others, this implies that it is impossible to lift the known
$\vc^{\bO(\vc)}\cdot n^{\bO(1)}$ algorithm~\cite{BanikPRRS26} for {\MCSshort} to the
weighted setting.

Third, we show that the running time of \cref{thm:twalg} is essentially tight even when applied
in the easier unweighted setting of \MCSshort.
In particular, under the ETH we exclude:
\begin{enumerate}
    \item any $n^{o(\tw+k)}$ algorithm for {\MCSshort} (\cref{thm:whard:unweighted-c-pw-fvs-k}),
        which implies that the second exponential term is unavoidable. This lower bound relies on
        a highly non-trivial reduction, and the statement holds even when there are only $2$ colors
        and $\tw$ is replaced by more restrictive graph parameters such as \emph{pathwidth} and the
        \emph{feedback vertex number};

    \item any $2^{o(c)}\cdot n^{\bO(1)}$ (and, in fact, any $2^{o(n)}$) algorithm for {\MCSshort} even when restricted to trees of bounded depth (\cref{thm:nphard:trees}).
        To obtain our reduction, we both strengthen and significantly simplify the ideas used in previous work on \MCSshort~\cite[Section 3]{fsttcs/BanikDMMNMRRS24}.
        %        Our reduction is loosely inspired by previous work~\cite[Section 3]{fsttcs/BanikDMMNMRRS24}, which we both strengthen and significantly simplify.
        We thereby exclude any sublinear dependence on the first exponential term.
\end{enumerate}

Fourth, via a small adaptation of \cref{thm:twalg} we obtain a
$2^{\bO(c\cdot \td+\td^2)}\cdot n^{\bO(1)}$ algorithm for \MCSshort, where $\td$
stands for an alternative measure of tree-likeness called \emph{treedepth} (\cref{cor:td}).
Treedepth can be seen as an intermediate step between treewidth and the vertex
cover number: it is small on graphs that are structurally similar to a shallow tree.
One immediately notices that the parameter dependence on the treedepth in the exponent is,
unlike in the previous algorithms, quadratic.
We complement this with a fine-grained lower bound (\cref{thm:treedepth:lower-bound})
which shows that this quadratic dependence in \cref{cor:td} is unavoidable under the ETH.

Finally, with these new insights in hand we turn back towards the recent
$\vc^{\bO(\vc)}\cdot n^{\bO(1)}$ algorithm for \MCSshort~\cite{BanikPRRS26}.
We conclude by improving that result to achieve a single-exponential running time of
$5^\vc\cdot n^{\bO(1)}$ (\cref{thm:vc:unweighted-fpt}),
whereas a single-exponential dependency on $\vc$ can be seen to be optimal under the ETH due to a trivial reduction from \textsc{Dominating Set}~\cite[Section~2]{fsttcs/BanikDMMNMRRS24}.
%and match this with a new ETH lower bound excluding any subexponential dependence on $\vc$ (\cref{thm:vc:unweighted-eth-lower-bound}).
Using the very same idea, in \cref{thm:vc:weighted-xp} we obtain an
$n^{\bO(\vc)}$ algorithm for \WMCSshort, whose asymptotic optimality already follows from
\cref{thm:whard:weighted-vc}.
A \textbf{mindmap} summarizing our results on {\WMCSshort} and {\MCSshort} is provided in \cref{fig:results-overview}.

\begin{figure}
    \centering
    \input{graphics/results_overview}
    \caption{
        %An overview of the complexity of \textsc{(W)MCS} implied by our results.
        An overview of the relations between our algorithms and lower bounds; the latter
        assume the ETH.}
    \label{fig:results-overview}
\end{figure}

%  \todo[inline]{TODO: add table with all results}

\paragraph{Broader Related Work.}
There are many recent articles investigating primarily empirical aspects of supervised clustering---see,
e.g., the recent works on the topic~\cite{YangWZWZL025,clustering1,ZhangRWXDY25,clustering2}.
From a more foundational perspective, there is likewise a large body of works investigating how the
complexity of problems and tasks arising in AI is influenced by structural properties of inputs.
Apart from the already mentioned past works on \MCSshort, the graph parameters considered in this
article were utilized to identify algorithmic upper and lower bounds for a variety of other problems
in many other areas of AI research, such as computational social choice and machine learning---see,
e.g., the dedicated surveys~\cite{ChenHS26,Ganian26}.

%% file: graphics/results_overview.tex
% Overview of the paper's results, arranged in a 2x2 grid.
%
% Only four geometry parameters: \wA, \wB, \hA, \hB (column widths and
% row heights). All result boxes live inside a scope shifted to the
% horizontal dividing line (bgTL.south), so changing \hA / \hB keeps
% twalg centred on the divider and every other box at the same relative
% offset. Headers and row labels are anchored to the bg cells so they
% follow column widths and row heights too.

\begin{tikzpicture}[
    every node/.style={font=\footnotesize},
    algo/.style={
        draw, rounded corners=1.5pt, semithick,
        fill=green!12, draw=green!55!black,
        align=center, inner sep=2pt,
        minimum width=18mm, minimum height=8mm,
    },
    mainalgo/.style={
        algo,
        line width=0.7pt,
        fill=green!22, draw=green!45!black,
        font=\footnotesize\bfseries,
        minimum width=38mm, minimum height=11mm,
        inner sep=2.5pt,
    },
    lb/.style={
        draw, semithick,
        fill=red!10, draw=red!60!black,
        align=center, inner sep=2pt,
        minimum width=38mm, minimum height=7mm,
    },
    backbox/.style={draw=gray!40, fill=gray!4, inner sep=0pt,},
    header/.style={font=\footnotesize\bfseries},
    match/.style={ultra thick, blue},
]

    % ---- the four geometry knobs (mm) ----------------------------------
    \def\wA{44}\def\wB{34}\def\hA{55}\def\hB{23}

    % ---- background cells (anchor-chained so they always touch) -------
    \begin{scope}[on background layer]
        \node[backbox, minimum width=\wA mm, minimum height=\hB mm] (bgTL) at (0mm, 0mm) {};
        \node[backbox, anchor=north, minimum width=\wA mm, minimum height=\hA mm] (bgBL) at (bgTL.south) {};
        \node[backbox, anchor=west,       minimum width=\wB mm, minimum height=\hB mm] (bgTR) at (bgTL.east)       {};
        \node[backbox, anchor=north west, minimum width=\wB mm, minimum height=\hA mm] (bgBR) at (bgTL.south east) {};

      \draw[gray!55!black, line width=1pt]
        ($(bgTL.south west) + (-.35cm, 0)$) -- ($(bgTR.south east) + (0, 0)$);
     \draw[gray!55!black, line width=1pt]
        ($(bgTL.north east) + (0, .45cm)$) -- ($(bgBL.south east) + (0, 0)$);
    \end{scope}

    % ---- headers (anchored to the bg cells) ----------------------------
    \node[header, anchor=south]  at ($(bgTL.north) + (0, 2mm)$) {Bounded \#Colors};
    \node[header, anchor=south]  at ($(bgTR.north) + (0, 2mm)$) {Unbounded \#Colors};
    \node[header, rotate=90, anchor=south] at ($(bgTL.west) + (-0.3mm, 0)$) {\WMCSshort};
    \node[header, rotate=90, anchor=south] at ($(bgBL.west) + (-0.3mm, 0)$) {\MCSshort};

    % ---- all result boxes live in a scope whose origin is the divider
    %      line, so twalg at (0,0) always sits centred on it -------------
    \begin{scope}[shift={(bgTL.south)}]

    % col 1
    \node[mainalgo] (twalg) at (0mm, 0mm)
        {$2^{\bO(c\cdot \tw)}\, n^{\tw+\bO(1)}$\\[1pt]
         {\cref{thm:twalg}}};

    \node[lb] (lbwvc) at (0mm, 15mm)
        {no $n^{o(\tw)}$ and no $n^{o(\vc)}$\\
        even if $c=2$ \\
        {\small \cref{thm:whard:weighted-vc}}};

    \node[lb] (lbpwfvs) at (0mm, -13mm)
        {no $n^{o(\tw)}$ even if $c=2$\\
        {\small \cref{thm:whard:unweighted-c-pw-fvs-k}}};

    \node[algo, minimum width=38mm] (cortd) at (0mm, -25mm)
        {$2^{\bO(\td^2 + c\cdot\td)}\, n^{\bO(1)}$ \\ {\small \cref{cor:td}}};

    \node[lb] (lbtd) at (0mm, -37mm)
        {no $2^{o(\td^2)}$ even if $c=2$ \\ {\small \cref{thm:treedepth:lower-bound}}};

    \node[lb] (lbvc) at (0mm, -49mm)
        {no $2^{o(\vc)}$ even if $c=2$ \\ {\small \cite[Sec.~2]{fsttcs/BanikDMMNMRRS24}}};

    % col 2
    \node[algo, minimum width=30mm] (xpvc) at (39mm, 15mm)
        {$n^{\bO(\vc)}$ \\ {\small \cref{thm:vc:weighted-xp}}};

    \node[lb, minimum width=30mm,] (lbsubexp) at (39mm, 0mm)
        {no $2^{o(c)}$ even
        on trees\\ {\small \cref{thm:nphard:trees}}};
        %\\ {\small\cite{fsttcs/BanikDMMNMRRS24})}};

    \node[algo, minimum width=30mm] (fptvc) at (39mm, -49mm)
        {$2^{\bO(\vc)}\, n^{\bO(1)}$ \\ {\small \cref{thm:vc:unweighted-fpt}}};

    % ---- complementing pairs (dashed) ----------------------------------
    \draw[match] (lbwvc)    -- (twalg);
    \draw[match] ($(twalg.east) + (0, 2mm)$) to[bend left=80] ($(lbsubexp.west) + (0, 2mm)$);
    \draw[match] (lbpwfvs)  -- (twalg);
    \draw[match] (lbtd)     -- (cortd);
    \draw[match] (lbwvc.east) to[bend left=80] (xpvc.west);
    \draw[match] (lbvc.east)  to[bend left=80] (fptvc.west);
    \draw[dashed] (twalg.west) to[bend right=20] (cortd.west);

    \end{scope}

\end{tikzpicture}

%% file: Sections/preliminaries.tex
\section{Preliminaries}
We use standard graph-theoretic notation~\cite{books/Diestel25}.
For a graph $G$ we write $V(G)$ and $E(G)$ to denote its vertex and edge set,
respectively, while $n = |V(G)|$ and $m = |E(G)|$.
For a graph $G$ and vertices $v,w \in V(G)$, we let $\dist_G(v,w)$ denote their distance in $G$,
where $\dist_G(v,v)=0$; whenever $G$ is equipped with an edge-weight function, all distances in
$G$ are measured with respect to the edge weights.
Similarly, for a vertex $v$ and a vertex subset $S\subseteq V(G)$, we define
$\dist_G(v,S) = \min_{s \in S} \dist_G(v,s)$.
% Moreover, we let $N_{G,d}(v)$ denote the set of vertices at distance \emph{precisely} $d$
% from $v$ in $G$.
% We abbreviate $N_{G,1}(v)$ to $N_G(v)$ and omit $G$ from the subscript if it is clear from context.

%\paragraph{ETH.}
Our lower bounds are based on the \emph{Exponential Time Hypothesis} (ETH)~\cite{ImpagliazzoPZ01}, and specifically its well-known consequence that {\ThreeSAT} with $n$ variables
and $m$ clauses cannot be solved in time $2^{o(n+m)}$; see, e.g., the survey of Lokshtanov, Marx and Saurabh~\cite{LokshtanovMS11}.
%By the sparsification lemma~\cite{ImpagliazzoPZ01}, we may also assume that the input formula is sparse, that is, has linearly many clauses.
%For further background, we refer interested readers to the relevant survey on the topic~\cite{LokshtanovMS11}.

\paragraph{Consistent Subsets.}
For a positive integer $n \in \N$, let $[n] = \{1,2,\ldots, n\}$.
Given a graph $G$ and a coloring function $\col \colon V(G) \to [c]$ for some
$c \in \N$, for each $i \in [c]$ we define
$V_i = \setdef{v \in V(G)}{\col(v)=i}$ to be the vertices of color $i$.

Given a set $S \subseteq V(G)$, we say that a vertex $v$ is \emph{satisfied by $S$} if some
closest vertex of $S$ to $v$ has the same color as $v$; equivalently, if there is $u \in S \cap V_{\col(v)}$ such that 
$
    \dist_G(v, u) = \dist_G(v,S).
$
To provide a more formal underpinning for the definition of \WMCSshort in the Introduction, we call $S$ a \emph{consistent subset} of $G$ if every vertex of $G$ is satisfied by $S$.
In this case, we call all vertices in $S$ \emph{selected}.
% We can now state \WMCSshort in formal terms:
% \problemdef{{\WMCS} (\WMCSshort)}
% {Graph $G$ with an edge-weight function $w \colon E(G) \to \N$, $k \in \N$, and a vertex
% coloring function $\col \colon V(G) \to [c]$ for some $c \in \N$.}
% {Decide if $G$ has a consistent subset of cardinality at most $k$.}
{\MCS} (\MCSshort) is then  the restriction of {\WMCSshort} to the case where all edges have weight~$1$.
%Clearly, both problems are in $\NP$ as we can verify in polynomial time whether a given vertex subset is consistent.
% Thus, whenever we prove $\NP$-hardness for a subset of instances, we indeed obtain
% $\NP$-completeness. \todo{We never speak about NP hardness. Shall we drop that sentence? Or mention here that all our ETH reductions also imply NP hardness? }

\paragraph{Graph Measures.}
The \emph{treewidth} $\tw(G)$ of a graph~$G$ is a well-established measure of how ``tree-like''
it is; for instance, trees have a treewidth of $1$, while an $n$-vertex complete graph has
treewidth $n-1$. The notion is defined via the concept of (nice) tree decompositions.

Formally, a \emph{nice tree decomposition} of $G$ is a pair $(T, \{X_t\}_{t\in V(T)})$, where
$T$ is a tree (whose vertices are called \emph{nodes}) rooted at a node $r$ and
$\{X_t\}_{t\in V(T)}$ assigns to each node~$t$ a set $X_t \subseteq V(G)$ such that:
\begin{itemize}
    \item For every $vw \in E(G)$, there is a node $t$ such that $\{v, w\} \subseteq X_t$.
    \item For every vertex $v \in V(G)$, the set of nodes $t$ satisfying $v \in X_t$ forms a
        subtree of $T$.
    \item $|X_t| = 0$ if $t$ is a leaf of $T$ or $t=r$.
    \item There are only three kinds of non-leaf nodes in $T$:
        \begin{itemize}
            \item \emph{introduce}: a node $t$ with exactly one child~$t'$ such that
                $X_t = X_{t'} \cup \{v\}$ for a vertex $v \notin X_{t'}$.
            \item \emph{forget}: a node $t$ with exactly one child $t'$ such that
                $X_t = X_{t'} \setminus \{v\}$ for a vertex $v \in X_{t'}$.
            \item \emph{join}: a node $t$ with two children $t_1, t_2$ such that
                $X_t = X_{t_1} = X_{t_2}$.
        \end{itemize}
\end{itemize}

We call each set $X_t$ a \emph{bag}.
The width of a nice tree decomposition $(T, \{X_t\}_{t\in V(T)})$ is the size of the largest
bag $X_t$ minus $1$, and the \emph{treewidth of $G$}, denoted by $\tw(G)$, is the minimum width
of a nice tree decomposition of $G$.
For further details on treewidth, we refer the reader to one of the relevant books~\cite[Section~7]{books/CyganFKLMPPS15}.
We remark that there is a broad range of algorithms for computing an optimal or near-optimal
tree decomposition of a graph $G$; for our purposes, it suffices to employ the classical exact $\bO(n^{\tw(G)+2})$ algorithm~\cite{arnborg1987complexity}.

%algorithm of Arnborg, Corneil, and Proskurowski, which runs in time $\bO(n^{\tw(G)+2})$~\cite{arnborg1987complexity}.

The \emph{pathwidth} $\pw(G)$ of $G$ is defined analogously to treewidth, but with the
distinction that there are no \emph{join} nodes.

The \emph{vertex cover number} $\vc(G)$ of a graph $G$ is defined as the minimum size of a vertex
cover of $G$, that is, the minimum size of a vertex subset $X$ such that each edge of $G$ has at
least one endpoint in $X$.
It is well-known that a vertex cover of minimum size can be computed in time
$\bO(2^{\vc(G)}\cdot |V(G)|)$~\cite{books/CyganFKLMPPS15}.

The \emph{treedepth} $\td(G)$ of a graph $G$ is a graph parameter that characterizes how close a
graph is to a forest of stars.
While several equivalent definitions of the notion exist~\cite{sparsity}, for our algorithmic
applications it suffices to use the \emph{elimination distance} characterization: the treedepth
of a one-vertex graph is $1$, the treedepth of a disconnected graph is the maximum treedepth
among its connected components, and the treedepth of a connected graph with more than one vertex
is $1$ plus the minimum treedepth of a graph $G'$ obtained by removing a single vertex from $G$.
%: the $\td(G)$ of a graph $G$ is the minimum depth of a recursion tree in which, at each step, one deletes a single vertex and recurses independently on the newly formed connected components until each such component is a singleton.
We note the well-known facts~\cite{sparsity,GanianO18} that (1) an edge-unweighted graph cannot
contain any path of length greater than $2^{\td(G)}-1$, and
(2) $\vc(G)+1\geq \td(G) \geq \tw(G)$.

%% file: Sections/tw-algo.tex
\section{A Treewidth-Based Algorithm for \WMCSshort}

In this section we present our main algorithmic result.
Similarly to virtually all tree-width based algorithms, our algorithm also proceeds via dynamic programming over a nice tree
%The algorithm follows the standard dynamic-programming framework over a nice tree
decomposition---however, in our case the information stored at a bag is global. In particular, for every bag vertex we
will be storing its ``promised'' distance to the final solution, not only to the part of the solution
already seen in the current subtree.
We will additionally store the colors that will appear at that distance, together with the subset of
these colors that has already been witnessed inside the processed part of the decomposition.
This is what allows us to ``safely'' process forget nodes: a vertex can be forgotten even if some of its closest
witnesses lie outside the current subtree, as their effect is still represented by the
remaining bag promises.

At join nodes, the two children will be contributing two already-witnessed color sets. To compute the new records, we will view the update operation as the computation of a \emph{cover product} and use this to maintain the desired single-exponential dependence on the number of colors.
%At join nodes, the two children contribute two already-witnessed color sets whose union becomes the parent set; we compute this operation by a cover product, which keeps the exponential dependence on the number of colors mild.

\twalg*

\begin{proof}
    Let $(G, w, k, \col)$ be an instance of \WMCS.
    It is enough to describe the algorithm for connected graphs.
    If $G$ is disconnected, then each connected component has to be satisfied independently,
    and the optimum size is the sum of the optimum sizes of the components.

    As a first step, we compute the shortest distance between all pairs of vertices in $G$ via
    the Floyd--Warshall algorithm, which runs in $\bO(n^3)$ time even for binary-encoded weights~\cite{floyd1962algorithm,warshall1962theorem}.
    From this, derive a set $A_v = \setdef{\dist(v,u)}{u \in V(G)}$ for every vertex $v \in V(G)$,
    and note that $|A_v| \le n$ and $0 \in A_v$ since $\dist(v,v)=0$.%
    \footnote{Throughout the proof, all distances are with respect to graph $G$, thus we omit the subscript from $\dist$.}

    We compute a nice tree decomposition $(T, \{X_t\}_{t\in V(T)})$ of $G$ of width $\tw$, using, e.g.,
    the algorithm of Arnborg, Corneil, and Proskurowski~\cite{arnborg1987complexity} that runs in time $n^{\tw + \bO(1)}$.
    Let $r$ denote the node that is the root of the decomposition, where $X_r = \varnothing$.
    For a node $t \in V(T)$, we denote by $V_t \subseteq V(G)$ the union of the bags of $t$ and all its descendants,
    and we put $Y_t \coloneq V_t \setminus X_t$.
    We will repeatedly use the following standard separation property of tree decompositions:
    if a vertex lies in one side of a decomposition node and another vertex lies outside that
    side, then every path between them meets the bag of the node.
    In particular, in an introduce node $X_t=X_{t'}\cup\{v\}$ every path from $v$ to
    $Y_t=Y_{t'}$ meets $X_{t'}$, and in a join node every path between the two child sides
    meets the common bag.

    \paragraph{Signatures.}
    Fix a node $t \in V(T)$ and write $X_t = \{v_1,\ldots,v_b\}$ with $b \le \tw+1$.
    A \emph{signature} $\sigma$ at $t$ is a tuple
    \[
        (s, \, (f_i)_{i\in[b]}, \, (R_i)_{i\in[b]}, \, (C_i)_{i\in[b]}),
    \]
    such that $s\in\{0,\ldots,k\}$ and for all $i\in[b]$,
    $f_i \in A_{v_i}$ and $C_i \subseteq R_i \subseteq [c]$, while $\col(v_i) \in R_i$.
    Moreover, if $f_i = 0$, then $C_i = \varnothing$ and $R_i = \{\col(v_i)\}$.
    Let $\Sigma_t$ denote the set of all signatures at node $t$.
    We have
    \[
        |\Sigma_t| \le (k+1) \cdot \prod_{i=1}^{b} |A_{v_i}| \cdot (3^{c})^{b}
        \le (k+1) \cdot n^{b} \cdot 3^{cb},
    \]
    since each color is either in $C_i$, in $R_i\setminus C_i$, or outside $R_i$.

    The entries $(f_i,R_i)$ are promises about the final closest distance of $v_i$ to the
    solution and the colors seen at this distance, while $C_i$ records the part of this
    color set that is already witnessed by selected vertices inside $Y_t$.
    For $S_t \subseteq Y_t$ and a signature $\sigma$, define, for every $u \in V_t$,
    \[
        \delta_\sigma(u,S_t)
        \coloneq
        \min\left\{
            \dist(u,S_t),
            \min_{i\in[b]} (\dist(u,v_i)+f_i)
        \right\}.
    \]
    Here $\dist(u,\varnothing)=\infty$, and the minimum over an empty bag is also $\infty$.
    Also define the set of colors available to $u$ at this promised closest distance as
    \[
        \begin{aligned}
        \Gamma_\sigma(u,S_t)
        \coloneq{}&
        \setdef{\col(x)}{x\in S_t,\ \dist(u,x)=\delta_\sigma(u,S_t)}
        \\
        &{}\cup
        \bigcup_{\substack{i\in[b]\\ \dist(u,v_i)+f_i=\delta_\sigma(u,S_t)}} R_i.
        \end{aligned}
    \]
    We say that the entries of two bag vertices $v_i,v_j$ are \emph{coherent from
    $v_j$ to $v_i$} if
    \[
        \dist(v_i,v_j)+f_j\ge f_i,
    \]
    and, in the case of equality, $R_j\subseteq R_i$ and $C_j\subseteq C_i$.

    We say that $S_t \subseteq Y_t$ \emph{witnesses} $\sigma \in \Sigma_t$ at $t$ if the following conditions hold:
    \begin{enumerate}
        \item\label{item:tw-witness-size} $|S_t|=s$;
        \item\label{item:tw-witness-coherence} all ordered pairs of bag vertices are coherent;
        \item\label{item:tw-witness-boundary-distance} for every $i\in[b]$, $\dist(v_i,S_t)\ge f_i$;
        \item\label{item:tw-witness-processed-colors} for every $i\in[b]$,
        \[
            C_i=\setdef{\col(x)}{x\in S_t,\ \dist(v_i,x)=f_i};
        \]
        \item\label{item:tw-witness-satisfaction} for every $u\in Y_t$, $\col(u)\in\Gamma_\sigma(u,S_t)$.
    \end{enumerate}
    Equality in Condition~\ref{item:tw-witness-boundary-distance} is not required, as the
    vertices realizing the final distance $f_i$ may all lie outside of $Y_t$.
    We call $\sigma$ \emph{valid} at $t$ if there exists a set $S_t\subseteq Y_t$ that
    witnesses $\sigma$ at $t$.
    The dynamic program stores a Boolean table $\DPtable_t$ over $\Sigma_t$.
    The proof below shows that every true entry is valid, and that every global solution
    induces true entries at all nodes.

    \paragraph{Leaf Node.}
    If $t$ is a leaf, then $X_t=Y_t=\varnothing$.
    For $\sigma = (s,\varnothing,\varnothing,\varnothing) \in \Sigma_t$,
    we set
    \[
        \DPtable_t[(s,\varnothing,\varnothing,\varnothing)] =
        \begin{cases}
            \true,  & s=0,\\
            \false, & s\in[k].
        \end{cases}
    \]

    \paragraph{Introduce Node.}
    Let $t$ be an introduce node with child $t'$ and $X_t=X_{t'}\cup\{v\}$.
    We inspect every signature $\sigma\in\Sigma_t$.
    Let $\sigma^{\downarrow}$ be the restriction of $\sigma$ to the old bag $X_{t'}$ (that is, $s$ is the same as in $\sigma$ and the three sets each drop their entry for $v$),
    and let $(f_v,R_v,C_v)$ be the entries of $\sigma$ for $v$.
    We set $\DPtable_t[\sigma]=\false$ if $\DPtable_{t'}[\sigma^{\downarrow}]=\false$, or if one of
    the following tests fails.
    First, $v$ and every vertex of $X_{t'}$ must be coherent in both directions.
    Second, the processed colors seen by $v$ must be exactly
    \begin{equation}\label{eq:tw-introduce-cv}
        C_v =
        \bigcup_{\substack{v_i\in X_{t'}\\ \dist(v,v_i)+f_i=f_v}} C_i.
    \end{equation}
    If all tests pass, we set $\DPtable_t[\sigma]=\DPtable_{t'}[\sigma^{\downarrow}]$.

    \paragraph{Forget Node.}
    Let $t$ be a forget node with child $t'$ and $X_{t'}=X_t\cup\{v\}$.
    We initialize $\DPtable_t$ to $\false$.
    Consider a signature $\sigma'\in\Sigma_{t'}$ with $\DPtable_{t'}[\sigma']=\true$,
    and write $(f_v,R_v,C_v)$ for the entries of the forgotten vertex.
    We first require that $v$ and every remaining bag vertex are coherent with respect to $\sigma'$ in both directions.
    Next we finalize the colors promised for $v$ by checking that
    \begin{equation}\label{eq:tw-forget-rv}
        R_v =
        \begin{cases}
            \{\col(v)\}, & f_v=0,\\[1mm]
            C_v \cup
            \displaystyle\bigcup_{\substack{v_i\in X_t\\ \dist(v,v_i)+f_i=f_v}} R_i,
            & f_v>0.
        \end{cases}
    \end{equation}
    If any of these checks fails, $\sigma'$ produces no parent signature.
    Otherwise, we obtain a parent signature by dropping the entries of $v$ from $\sigma'$
    and updating the remaining fields as follows.
    If $f_v>0$, the size and all remaining $C_i$ entries stay unchanged.
    If $f_v=0$, then $v$ is selected when it enters $Y_t$: we increase the size $s$ by one and,
    for every $v_i\in X_t$ with $\dist(v_i,v)=f_i$, we add $\col(v)$ to $C_i$.
    If the resulting tuple is a signature $\sigma\in\Sigma_t$ with size at most $k$, this is
    the parent signature produced from $\sigma'$, and we set $\DPtable_t[\sigma]=\true$.

    \paragraph{Join Node.}
    Let $t$ be a join node with children $t_1,t_2$ and
    $X_t=X_{t_1}=X_{t_2}$.
    We combine pairs of signatures $\sigma_1\in\Sigma_{t_1}$ and
    $\sigma_2\in\Sigma_{t_2}$ with true table entries that agree on all $(f_i,R_i)$.
    The candidate parent tuple produced from $(\sigma_1,\sigma_2)$ keeps these common
    $(f_i,R_i)$ values, has size $s=s_1+s_2$, and has
    \begin{equation}\label{eq:tw-join-ci}
        C_i=C_i^{(1)}\cup C_i^{(2)}
        \qquad\text{for every }v_i\in X_t.
    \end{equation}
    If this tuple is a signature $\sigma\in\Sigma_t$ and $s\le k$, we set $\DPtable_t[\sigma]=\true$.

    \paragraph{Correctness.}
    \begin{claim}[Soundness]\label{claim:tw-soundness}
        If $\DPtable_t[\sigma]=\true$, then $\sigma$ is valid.
    \end{claim}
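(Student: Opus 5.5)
We argue by induction over the nice tree decomposition in bottom-up order, showing for each node type that if the table entry was set to $\true$ from true child entries, then a witness can be built from the child witnesses. For a leaf, $\sigma=(0,\varnothing,\varnothing,\varnothing)$ is witnessed by $S_t=\varnothing$; all conditions are vacuous.

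\emph{Introduce node.} Here $Y_t=Y_{t'}$, and we take the child witness $S_{t'}$ for $\sigma^{\downarrow}$ as the candidate $S_t$. Conditions~\ref{item:tw-witness-size} and~\ref{item:tw-witness-coherence} follow from the child and from the coherence tests on $v$.

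For $v$ we need $\dist(v,S_t)\ge f_v$ and the formula for $C_v$. By the separation property, any path from $v$ to $x\in S_t\subseteq Y_{t'}$ passes through some $v_i\in X_{t'}$. Hence
\[
\dist(v,x)=\min_i \bigl(\dist(v,v_i)+\dist(v_i,x)\bigr)\ge \min_i\bigl(\dist(v,v_i)+f_i\bigr)\ge f_v,
\]
where the last step uses coherence from $v_i$ to $v$. Equality $\dist(v,x)=f_v$ holds exactly when, for some $i$, $x$ realizes $\dist(v_i,x)=f_i$ and $\dist(v,v_i)+f_i=f_v$. This gives $\{\col(x): x \in S_t,\ \dist(v,x)=f_v\}$ equal to the union in~\eqref{eq:tw-introduce-cv}.

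Condition~\ref{item:tw-witness-satisfaction} for $u\in Y_t$ needs care, since adding $v$ to the bag adds the term $\dist(u,v)+f_v$ to $\delta_\sigma$. The path $u\to v$ meets some $v_i\in X_{t'}$, and coherence from $v$ to $v_i$ gives $\dist(u,v_i)+f_i\le \dist(u,v_i)+\dist(v_i,v)+f_v$. So the new term never decreases $\delta$. If it ties, coherence with equality gives $R_v\subseteq R_i$ along that route, so $\Gamma$ is unchanged. I expect this "new promises do not interfere" argument to be the delicate part, and it reappears in the join.

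\emph{Forget node.} Let $S_{t'}$ witness $\sigma'$, and set $S_t=S_{t'}\cup\{v\}$ if $f_v=0$ and $S_t=S_{t'}$ otherwise. Size and the updated $C_i$ follow directly. Boundary distances are preserved: when $f_v=0$, coherence from $v$ to $v_i$ gives $\dist(v_i,v)\ge f_i$.

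The main new obligation is satisfaction of $v$ itself, which now lies in $Y_t$. We need $\delta_\sigma(v,S_t)=f_v$ and $\col(v)\in\Gamma$. The distance part follows from Condition~\ref{item:tw-witness-boundary-distance} and coherence toward $v$. The color set $\Gamma_\sigma(v,S_t)$ is exactly $C_v\cup\bigcup R_i$ over ties, which equals $R_v$ by~\eqref{eq:tw-forget-rv}, and $\col(v)\in R_v$. Remaining vertices $u\in Y_{t'}$ lose the promise term for $v$. Since $(f_v,R_v)$ is now realized by $S_t$ together with the remaining promises, their $\delta$ and $\Gamma$ are unchanged. Here I must check that any colour of $R_v$ reached via $v$ is still reachable at the same distance, either through $C_v$-witnesses in $S_t$ or through some $R_i$ of a remaining bag vertex.

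\emph{Join node.} Take $S_t=S_1\cup S_2$, which is disjoint since $Y_{t_1}\cap Y_{t_2}=\varnothing$. The size, coherence, boundary distances (a minimum over both sides) and $C_i$ (a union, by~\eqref{eq:tw-join-ci}) follow directly.

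For $u\in Y_{t_1}$, a vertex $x\in S_2$ contributes $\dist(u,x)$. A shortest path to $x$ passes through some $v_i$ and $\dist(v_i,x)\ge f_i$, so $\delta$ does not drop. On a tie, $\dist(v_i,x)=f_i$, so $\col(x)\in C_i^{(2)}\subseteq R_i$, which was already counted in $\Gamma$.

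Across all node types, the main obstacle is this repeated verification that promise entries dominate real contributions from across the separator, including in the tie cases.
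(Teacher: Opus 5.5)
Your proposal is correct and follows the paper's proof essentially step for step: the same bottom-up induction, the same witnesses ($S'$ at introduce nodes, $S'$ or $S'\cup\{v\}$ at forget nodes, $S_1\cup S_2$ at join nodes), and the same separator-plus-coherence and finalization arguments. The forget-node check for older vertices, which you announce but do not carry out, is closed as in the paper by the triangle inequality: a $C_v$-witness $x\in S'$ has $\dist(u,x)\le\dist(u,v)+f_v$, and a tight remaining $v_i$ has $\dist(u,v_i)+f_i\le\dist(u,v)+f_v$. For $v$ itself, it is $\col(v)\in R_v\neq\varnothing$ together with the finalization equation that upgrades $\delta_\sigma(v,S_t)\ge f_v$ to equality.
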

    \begin{claimproof}
        We argue by induction over the nice tree decomposition.
        Let $\sigma \in \Sigma_t$ be a signature with $\DPtable_t[\sigma]=\true$.
        The case where $t$ is a leaf node is immediate.
        Assume that the claim holds for all children of $t$.

        Consider an introduce node $t$ with child $t'$ and $X_t=X_{t'}\cup\{v\}$.
        Let $\sigma^{\downarrow}$ be the restriction of $\sigma$ to $X_{t'}$,
        and let $(f_v,R_v,C_v)$ be the entries of $\sigma$ for $v$.
        It holds that $\DPtable_{t'}[\sigma^{\downarrow}]=\true$,
        so by induction there exists a set $S \subseteq Y_{t'} = Y_t$ that witnesses
        $\sigma^{\downarrow}$.
        Since every path from $v$ to $Y_t$ crosses $X_{t'}$,
        it holds that
        \begin{align*}
          \dist(v,S) &= \min_{v_i\in X_{t'}} \left(\dist(v,v_i)+\dist(v_i,S)\right)\\
                    &\ge \min_{v_i\in X_{t'}} \left(\dist(v,v_i)+f_i\right)\\
                    &\ge f_v. \tag{I}\label{eq:tw-introduce-distance-bound}
        \end{align*}
        Here the first inequality is Condition~\ref{item:tw-witness-boundary-distance} for
        the child witness, and the second is coherence from old bag vertices to $v$.
        By \cref{eq:tw-introduce-distance-bound}, no processed vertex in $S$ is closer to $v$ than
        $f_v$.
        A processed vertex $x\in S$ is at distance $f_v$ from $v$ exactly when some shortest
        $v$--$x$ path crosses an old bag vertex $v_i$ such that
        $\dist(v,v_i)+f_i=f_v$ and $\dist(v_i,x)=f_i$.
        Hence \cref{eq:tw-introduce-cv} is exactly
        Condition~\ref{item:tw-witness-processed-colors} for the introduced vertex.
        Adding the boundary promise of $v$ cannot make any vertex of $Y_t$ lose
        satisfaction: if a shortest route from such a vertex to the promise of $v$ first
        crosses an old bag vertex $v_i$, then the promise of $v_i$ is no farther; in the
        tight case, coherence gives $R_v\subseteq R_i$.
        The remaining conditions of a witness are inherited from the child, together with
        the coherence tests involving $v$.
        Thus the same set $S$ witnesses the parent signature $\sigma$.

        Consider a forget node $t$ with child $t'$ and $X_{t'}=X_t\cup\{v\}$.
        Since $\DPtable_t[\sigma]=\true$, the forget transition produced $\sigma$ from some
        child signature $\sigma'\in\Sigma_{t'}$ with $\DPtable_{t'}[\sigma']=\true$.
        By induction, let $S'$ witness this child signature $\sigma'$.
        Let $(f_v,R_v,C_v)$ be the entries of $\sigma'$ for $v$.
        If $f_v=0$, set $S=S'\cup\{v\}$; otherwise set $S=S'$.
        For every remaining bag vertex $v_i$,
        Condition~\ref{item:tw-witness-boundary-distance} for the child witness gives
        $\dist(v_i,S')\ge f_i$.
        If $v$ is added to the processed set, coherence from $v$ to $v_i$ gives
        $\dist(v_i,v)\ge f_i$ as well.
        If $v$ is added to $S$, the update adds $\col(v)$ exactly to those $C_i$ for which
        $\dist(v_i,v)=f_i$; otherwise the processed contribution to the $C_i$ sets is
        unchanged.
        Hence Conditions~\ref{item:tw-witness-boundary-distance} and
        \ref{item:tw-witness-processed-colors} hold for the parent witness.

        It remains to check satisfaction after the promise of $v$ is removed from the
        boundary.
        For the newly forgotten vertex $v$, \cref{eq:tw-forget-rv} says that every color in
        $R_v$ at distance $f_v$ is represented either by a processed vertex in $S'$ or by a
        remaining bag promise tight for $v$.
        Since every signature satisfies $\col(v)\in R_v$, this gives
        $\col(v)\in\Gamma_\sigma(v,S)$.
        Now take an older vertex $u\in Y_{t'}$.
        If a child witness for $\col(u)$ did not use the promise of $v$, the same witness is
        still present in the parent.
        Otherwise $\dist(u,v)+f_v=\delta_{\sigma'}(u,S')$ and $\col(u)\in R_v$.
        By finalization, this color is supplied either by some $x\in S'$ with
        $\dist(v,x)=f_v$, or by a remaining bag vertex $v_i$ with
        $\dist(v,v_i)+f_i=f_v$ and $\col(u)\in R_i$.
        In the first case $x$ is at distance at most $\dist(u,v)+f_v$ from $u$; in the second
        case the promise through $v_i$ is at distance at most $\dist(u,v)+f_v$.
        Since the parent only removes the boundary option of $v$ (or replaces it by the
        selected vertex $v$ when $f_v=0$), this replacement is at the new promised closest
        distance.
        Thus the parent signature still supplies $\col(u)$ at that distance.
        Therefore $S$ witnesses $\sigma$.

        Finally, consider a join node.
        Since $\DPtable_t[\sigma]=\true$, the join transition produced $\sigma$ from
        compatible child signatures $\sigma_1\in\Sigma_{t_1}$ and
        $\sigma_2\in\Sigma_{t_2}$ with true table entries.
        By induction, let $S_1$ and $S_2$ witness $\sigma_1$ and $\sigma_2$, respectively,
        and set $S=S_1\cup S_2$; note that $S_1 \cap S_2 = \varnothing$.
        The common $(f_i,R_i)$ promises agree, and \cref{eq:tw-join-ci} gives the exact
        processed colors seen from each bag vertex.
        Pairwise coherence is preserved as well: the distance and $R$-set conditions are
        shared by the two children, and every tight inclusion $C_j\subseteq C_i$ holds after
        taking the corresponding unions.

        It remains to justify satisfaction.
        Let $u\in Y_{t_1}$; the other case is symmetric.
        For every $x\in S_2$, a shortest path from $u$ to $x$ meets some bag vertex
        $v_i\in X_t$.
        Since $S_2$ witnesses $\sigma_2$, we have $\dist(v_i,x)\ge f_i$, and therefore
        \[
            \dist(u,x)\ge \dist(u,v_i)+f_i\ge \delta_{\sigma_1}(u,S_1).
        \]
        Thus adding $S_2$ cannot decrease the promised closest distance of $u$ below the
        distance already present in the first child; in fact
        $\delta_\sigma(u,S)=\delta_{\sigma_1}(u,S_1)$.
        The colors certifying $u$ in $\sigma_1$ are still available in the parent, because
        the parent keeps the same $(f_i,R_i)$ and contains $S_1$.
        Hence all vertices satisfy Condition~\ref{item:tw-witness-satisfaction}.
        Thus $S$ witnesses $\sigma$.
    \end{claimproof}

    \begin{claim}[Completeness]\label{claim:tw-completeness}
        Let $S \subseteq V(G)$ be a consistent subset of size at most $k$.
        For every node $t$, define $\sigma \in \Sigma_t$ as follows:
        $S_t = S\cap Y_t$, $s=|S_t|$, and for each $v_i\in X_t$,
        \begin{align*}
            f_i&=\dist(v_i,S),\\
            R_i&=\setdef{\col(x)}{x\in S,\ \dist(v_i,x)=f_i},\\
            C_i&=\setdef{\col(x)}{x\in S_t,\ \dist(v_i,x)=f_i}.
        \end{align*}
        Then the dynamic program sets $\DPtable_t[\sigma]=\true$.
    \end{claim}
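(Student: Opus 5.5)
We argue by induction over the nice tree decomposition, bottom-up, showing that the signature $\sigma_t$ induced by $S$ at every node $t$ is a genuine signature and that the corresponding transition sets its entry to \true\ from the induced signature(s) at the child(ren). Membership $\sigma_t\in\Sigma_t$ is immediate. We have $s=|S\cap Y_t|\le k$, and $f_i=\dist(v_i,S)\in A_{v_i}$ is attained by some vertex of $S$. Also $C_i\subseteq R_i$ because $S_t\subseteq S$, and $\col(v_i)\in R_i$ because $S$ is consistent. If $f_i=0$ then $v_i\in S$, so $R_i=\{\col(v_i)\}$; moreover $v_i\notin Y_t$ and no other vertex of $S$ is at distance~$0$, so $C_i=\varnothing$. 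For leaves the claim is trivial since $s=0$.

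Next we record two global facts that every transition test reduces to. \emph{Coherence}: by the triangle inequality, $\dist(v_i,v_j)+\dist(v_j,S)\ge \dist(v_i,S)$. In the equality case, every $x\in S$ with $\dist(v_j,x)=f_j$ satisfies $\dist(v_i,x)\le f_i$, hence $\dist(v_i,x)=f_i$. This gives $R_j\subseteq R_i$, and restricting to $x\in S_t$ gives $C_j\subseteq C_i$. So all coherence tests pass at every node, using only the definitions with the global $S$.

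For the \emph{introduce} node, $\sigma_t^{\downarrow}=\sigma_{t'}$ because $Y_t=Y_{t'}$, and induction gives $\DPtable_{t'}[\sigma_{t'}]=\true$. We must verify \cref{eq:tw-introduce-cv}. For $x\in S_t\subseteq Y_{t'}$ with $\dist(v,x)=f_v$, a shortest $v$--$x$ path meets some $v_i\in X_{t'}$. Then $f_v=\dist(v,v_i)+\dist(v_i,x)\ge\dist(v,v_i)+f_i\ge f_v$, so both inequalities are tight and $\col(x)\in C_i$ for a tight $i$. Conversely, if $i$ is tight and $x\in S_t$ with $\dist(v_i,x)=f_i$, then $\dist(v,x)\le f_v$, forcing equality.

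For the \emph{forget} node, we show that the child signature $\sigma_{t'}$ produces $\sigma_t$. The case $f_v=0$ is handled by the definitions: $v\in S$ moves into $Y_t$, so $s$ increases by one, and $\col(v)$ is added exactly to those $C_i$ with $\dist(v_i,v)=f_i$. The main step is \cref{eq:tw-forget-rv} when $f_v>0$. Take $x\in S$ with $\dist(v,x)=f_v$. If $x\in Y_{t'}$, then $\col(x)\in C_v$. Otherwise $x\notin V_{t'}$, or $x\in X_t$; since $v\notin X_t$, the vertex $v$ lies only in the subtree below $t$. So a shortest $v$--$x$ path meets some $v_i\in X_t$ (possibly $x=v_i$), and the tightness argument from before gives $\dist(v,v_i)+f_i=f_v$ and $\col(x)\in R_i$. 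The reverse inclusion follows from coherence. The remaining fields of $\sigma_t$ coincide with those of $\sigma_{t'}$ restricted to $X_t$.

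For the \emph{join} node, $Y_{t_1}$ and $Y_{t_2}$ are disjoint with union $Y_t$, so $s=s_1+s_2$. The $(f_i,R_i)$ entries are defined globally and therefore agree in both children, and $C_i=C_i^{(1)}\cup C_i^{(2)}$ by definition. The step we expect to require the most care is the forget case, specifically the separation argument showing that every closest selected vertex of $v$ outside $Y_{t'}$ is represented by some tight bag promise.
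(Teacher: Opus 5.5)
Your proof is correct and follows essentially the same route as the paper: you check that the induced tuple is a signature and derive every coherence test from the triangle inequality, with the tight case giving the $R$- and $C$-inclusions. You then show by induction that the introduce, forget and join transitions reproduce the induced signatures, using the separation property of the bags. The paper additionally verifies that $S_t$ witnesses the induced signature, including the satisfaction condition, which the claim does not need because the transitions never test it; your handling of the $C_v$ equation at introduce nodes and the $R_v$ finalization at forget nodes is more explicit than the paper's.
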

    \begin{claimproof}
        The displayed tuple is a signature in $\Sigma_t$:
        consistency of $S$ gives $\col(v_i)\in R_i$; if $f_i=0$, then $v_i\in S$ and,
        since bag vertices are not in $Y_t$, we have
        $C_i=\varnothing$ and $R_i=\{\col(v_i)\}$.
        Coherence follows from the triangle inequality for the global distances to $S$,
        with the tight color inclusions following from the same equality case.
        The set $S_t$ witnesses this signature.
        Indeed, Conditions~\ref{item:tw-witness-size}--\ref{item:tw-witness-processed-colors}
        hold by definition.
        For the satisfaction condition, take $u\in Y_t$ and let $x\in S$ be a closest
        selected vertex of color $\col(u)$, which exists because $S$ is consistent.
        If $x\in X_t$, then the bag promise for $x$ itself has value $0$.
        If $x\notin V_t$, then every path from $u$ to $x$ leaves $V_t$ through some bag
        vertex $v_i$, and $f_i\le \dist(v_i,x)$.
        Thus, unless $x\in S_t$, some bag promise gives a route of length at most
        $\dist(u,x)=\dist(u,S)$ and contains the color $\col(x)$ in the tight case.
        Conversely, each processed vertex and each bag promise used in
        $\delta_\sigma(u,S_t)$ is backed by a vertex of $S$, so
        $\delta_\sigma(u,S_t)\ge \dist(u,S)$.
        Hence $\delta_\sigma(u,S_t)=\dist(u,S)$.
        If $x\in S_t$, then $\col(x)$ appears directly in $\Gamma_\sigma(u,S_t)$; otherwise
        it appears through a tight bag promise.
        Thus $\col(u)\in\Gamma_\sigma(u,S_t)$.
        We prove by induction that the transition at every node preserves the induced
        signature.
        The leaf case is trivial.

        At an introduce node, let $\sigma$ be the signature induced by $S$ at $t$ and
        let $\sigma'$ be the signature induced by $S$ at the child $t'$.
        Then $\sigma'$ is precisely the restriction of the parent signature $\sigma$ to the
        old bag.
        Since every path from the introduced vertex $v$ to $Y_t$ crosses $X_{t'}$, the
        selected vertices of $S_t$ at distance $f_v$ from $v$ are precisely accounted for
        by old bag vertices $v_i$ with $\dist(v,v_i)+f_i=f_v$ and processed witnesses
        counted in $C_i$.
        Hence the equality defining $C_v$ holds.
        The coherence checks are just the triangle inequality applied to the global closest
        distances in $S$, so the introduce transition accepts.

        At a forget node, let $\sigma'$ be the signature induced by $S$ at the child $t'$,
        and let $\sigma$ be the signature induced by $S$ at $t$.
        The child signature $\sigma'$ passes the same coherence checks.
        If $f_v=0$, then $v\in S$ and the transition adds $v$ to the processed set;
        otherwise $v\notin S$ and the size is unchanged.
        Every closest selected vertex to $v$ is either already in $Y_{t'}$ or is reached
        through a remaining bag vertex $v_i$ with $\dist(v,v_i)+f_i=f_v$; hence the
        finalization equation for $R_v$ holds and the transition applied to $\sigma'$
        produces exactly the parent induced signature $\sigma$.

        At a join node, let $\sigma_1,\sigma_2$ be the signatures induced by $S$ at the two
        children and let $\sigma$ be the signature induced at $t$.
        The induced selected sets on the two child sides are disjoint and
        their union is $S\cap Y_t$.
        The $(f_i,R_i)$ values are global and therefore agree on both sides, while the
        processed color sets satisfy $C_i=C_i^{(1)}\cup C_i^{(2)}$.
        Thus the join transition applied to $(\sigma_1,\sigma_2)$ produces the induced
        parent signature $\sigma$.
    \end{claimproof}

    At the root $r$ we have $X_r=\varnothing$ and $Y_r=V(G)$.
    Hence the only information in a root signature is the size $s$, and soundness says that
    a true root entry yields a consistent subset of size $s$.
    Conversely, completeness shows that every consistent subset of size at most $k$ induces
    a true root entry.
    We accept iff some root entry with $s\le k$ is true.

    \paragraph{Runtime.}
    For a bag of size $b\le\tw+1$, the number of signatures is at most
    \[
        (k+1)n^b3^{cb}\le 3^{c(\tw+1)}n^{\tw+\bO(1)}.
    \]
    Leaf, introduce, and forget transitions inspect a polynomial number of conditions per
    relevant signature.

    \begin{claim}[Cover product]\label{claim:tw-join-cover-product}
        For a fixed join node and a fixed choice of $(f_i,R_i)_{i\in[b]}$, all compatible
        parent entries for this choice can be computed in
        $2^{\sum_i |R_i|}\cdot n^{\bO(1)}$ time.
    \end{claim}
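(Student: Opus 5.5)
Fix the join node $t$ and the common promises $(f_i,R_i)_{i\in[b]}$. For this choice, each child $j\in\{1,2\}$ contributes a table $\DPtable_{t_j}$ indexed by a size $s_j\in\{0,\ldots,k\}$ and a tuple $\mathbf{C}=(C_1,\ldots,C_b)$ with $C_i\subseteq R_i$. We identify such tuples with subsets of the disjoint union $U=\biguplus_i R_i$ (color $a$ of vertex $v_i$ becomes the element $(i,a)$). Under this identification the componentwise union in \cref{eq:tw-join-ci} is exactly set union in $2^U$. The parent entry for $(s,\mathbf{C})$ is then the OR over all $s_1+s_2=s$ and all $\mathbf{C}^{(1)}\cup\mathbf{C}^{(2)}=\mathbf{C}$ of $\DPtable_{t_1}[s_1,\mathbf{C}^{(1)}]\wedge\DPtable_{t_2}[s_2,\mathbf{C}^{(2)}]$. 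This is precisely a cover product over the ground set $U$ with $|U|=\sum_i|R_i|$, combined with an ordinary convolution in the size coordinate.

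To compute it, view the Boolean tables as $0/1$ integer-valued functions $g_j(s_j,\cdot)\colon 2^U\to\{0,1\}$. For each fixed pair $(s_1,s_2)$ (there are $\bO(k^2)\le n^{\bO(1)}$ of them), apply the standard cover product algorithm~\cite[Section~10]{books/CyganFKLMPPS15}. First compute the zeta transforms $\hat g_j(s_j,Y)=\sum_{X\subseteq Y} g_j(s_j,X)$ by the Yates/fast zeta transform in $\bO(2^{|U|}|U|)$ arithmetic operations. Then multiply pointwise, and apply the Möbius transform to recover $h_{s_1,s_2}(Y)=\#\{(X_1,X_2): X_1\cup X_2=Y,\ g_1(s_1,X_1)=g_2(s_2,X_2)=1\}$. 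The parent entry for $(s,Y)$ is true iff $h_{s_1,s_2}(Y)>0$ for some $s_1+s_2=s$. We may also keep only those pairs in which the signature constraints hold, for instance $C_i=\varnothing$ when $f_i=0$, and $s\le k$. Since every child entry is a $0/1$ value and all counts are at most $4^{|U|}$, the integers have $\bO(|U|)$ bits, so each arithmetic operation costs $n^{\bO(1)}$ time. The transforms may be folded over $s$ to save factors, but this is unnecessary: the total time is $k^2\cdot 2^{|U|}\cdot|U|^{\bO(1)}\le 2^{\sum_i|R_i|}\cdot n^{\bO(1)}$.

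The only step needing care is checking that the identification with $2^U$ respects the signature structure. The constraint $C_i\subseteq R_i$ is built into the ground set, and the parent $(f_i,R_i)$ are fixed, so every subset of $U$ corresponds to a legitimate candidate tuple. The tuples that fail to be signatures (via the $f_i=0$ condition) are simply discarded afterwards, which matches the join transition. Coherence of the parent tuple needs no separate check here, since the transition only requires the result to be a signature. The agreement of children on $(f_i,R_i)$ is enforced by fixing this choice before the product.
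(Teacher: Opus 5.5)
Your proposal is correct and follows essentially the same route as the paper: you encode the tuples $(C_i)_i$ as subsets of the disjoint union of the $R_i$, compute for each pair of child sizes the cover product of the two $0/1$ indicator tables (zeta transform, pointwise product, Möbius transform), and declare a parent entry of size $s$ true iff some product with $s_1+s_2=s$ is positive; the $\bO(k^2)$ size pairs contribute only a polynomial factor. Your extra remarks on bit-lengths and on discarding non-signature tuples afterwards are valid details that the paper leaves implicit in its citation of the standard cover-product theorem.
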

    \begin{claimproof}
        Let
        \[
            \Omega=\bigcup_{i\in[b]}(\{i\}\times R_i).
        \]
        A tuple $(C_i)_{i\in[b]}$ with $C_i\subseteq R_i$ is equivalently a subset
        $C\subseteq\Omega$.
        For each child $\ell\in\{1,2\}$ and each size $p$, let
        $A_\ell^p(C)\in\{0,1\}$ indicate whether the child has a true entry with
        size $p$ and encoded color tuple $C$.
        For fixed values of $p$ and $q$, the compatible pairs of child entries are counted by
        the cover product
        \[
            H_{p,q}(C)=
            \sum_{C_1\cup C_2=C}
            A_1^p(C_1) \cdot A_2^q(C_2).
        \]
        This is a cover product rather than a subset convolution because $C_1$ and $C_2$
        need not be disjoint: the same color can be witnessed at the same distance from
        both child sides.
        A parent entry of size $s$ and encoded color tuple $C$ is true exactly when
        $H_{p,q}(C)>0$ for some $p,q$ with $p+q=s$.

        By the standard fast cover-product algorithm
        \cite[Theorem~10.14]{books/CyganFKLMPPS15}, all values of $H_{p,q}$ for fixed
        $p,q$ can be computed in $2^{|\Omega|}\cdot |\Omega|^{\bO(1)}$ arithmetic
        operations.
        We do this for all $p,q\in\{0,\ldots,k\}$.
        Since $k \le n$, the guesses of $p$ and $q$ contribute only a polynomial factor.
        Hence the total time for this fixed choice of $(f_i,R_i)_{i\in[b]}$ is
        $2^{|\Omega|}\cdot n^{\bO(1)}$.
    \end{claimproof}

    Summing the bound of \cref{claim:tw-join-cover-product} over all choices of the $R_i$ gives
    \[
        \prod_{i=1}^{b}
        \sum_{\substack{R_i\subseteq[c]\\ \col(v_i)\in R_i}}2^{|R_i|}
        =
        (2\cdot 3^{c-1})^b
        \le 3^{cb}.
    \]
    The choices of the distance values contribute a factor of at most $n^b$.
    Including the choices of the distances and the $\bO(n)$ nodes of the decomposition,
    the total running time is
    $3^{c(\tw+1)}\cdot n^{\tw+\bO(1)}$.
\end{proof}

We leverage \cref{thm:twalg} to obtain the aforementioned treedepth-based algorithm for \MCSshort.
\begin{corollary}\label{cor:td}
    {\MCS} can be solved in time $2^{\bO(\td^2 + c\cdot\td)} \cdot n^{\bO(1)}$ on graphs of treedepth $\td$.
\end{corollary}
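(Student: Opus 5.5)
The plan is to apply \cref{thm:twalg} to an unweighted instance with all edge weights $1$, and then bound the polynomial factor $n^{\tw+\bO(1)}$ more carefully. The whole $n^{\tw}$ factor in that theorem comes from a single source: in a signature, each bag vertex $v_i$ picks its promised distance $f_i$ from the set $A_{v_i}=\setdef{\dist(v_i,u)}{u\in V(G)}$. Everything else is $3^{c(\tw+1)}$ times a polynomial. So the key is to show that in graphs of small treedepth, $|A_v|$ is much smaller than $n$.

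First, I would use fact (1) from the preliminaries: a graph of treedepth $\td$ contains no path with more than $2^{\td}-1$ edges. By the initial reduction in the proof of \cref{thm:twalg}, we may work on each connected component separately, and each component has treedepth at most $\td$. Within a component, every shortest path is a path, so every finite distance is at most $2^{\td}-1$. Hence $A_v\subseteq\{0,1,\ldots,2^{\td}-1\}$ and $|A_v|\le 2^{\td}$.

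Second, I would redo the runtime count of \cref{thm:twalg} with $|A_{v_i}|\le 2^{\td}$ in place of $n$. For a bag of size $b\le\tw+1\le\td+1$ (using fact (2), $\tw\le\td$), the number of signatures is at most
\[
(k+1)\cdot 2^{\td\cdot b}\cdot 3^{cb}\le 2^{\bO(\td^2+c\cdot\td)}\cdot n^{\bO(1)}.
\]
At join nodes, \cref{claim:tw-join-cover-product} is applied once for each choice of the $(f_i)$ and $(R_i)$. The sum over the $R_i$ still gives $3^{cb}$, and the choices of the distances now contribute $2^{\td\cdot b}$ instead of $n^b$. Leaf, introduce and forget nodes cost polynomial time per signature.

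Third, I would handle the computation of the decomposition, since the cited exact algorithm costs $n^{\tw+\bO(1)}$, which is too slow here. Instead I would compute an elimination forest of depth $\td$ in time $2^{\bO(\td^2)}n^{\bO(1)}$, which is known to be possible. From it I would derive a nice tree decomposition of width at most $\td$, with bags given by root-to-node paths. The dynamic program is correct for any nice tree decomposition, and its cost depends only on the width. So the total running time is $2^{\bO(\td^2+c\cdot\td)}n^{\bO(1)}$.

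The step needing the most care is this last one. The runtime of \cref{thm:twalg} includes the decomposition cost, and we must avoid the $n^{\tw}$ term there as well. If citing a treedepth-decomposition algorithm is undesirable, a fallback is a $2^{\bO(\tw^2)}n$-time or constant-factor approximate treewidth algorithm, giving width $\bO(\td)$. That still yields $2^{\bO(\td^2+c\td)}$, because the exponent is only affected by constant factors.
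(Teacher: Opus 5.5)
Your proposal is correct and follows essentially the same route as the paper. The paper likewise computes an elimination forest of depth $\td$ in $2^{\bO(\td^2)}n^{\bO(1)}$ time, turns it into a nice tree decomposition of width at most $\td-1$, runs the algorithm of \cref{thm:twalg}, and replaces the estimate $|A_v|\le n$ by $|A_v|\le 2^{\td}$ using the bounded distances within each connected component, giving $3^{c\cdot\td}\cdot(2^{\td})^{\td}\cdot n^{\bO(1)}$.
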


\begin{proof}
    We first compute a tree witnessing the elimination distance of $G$ of depth $\td$ in time
    $2^{\bO(\td^2)} n^{\bO(1)}$~\cite{esa/NadaraPS22}.
    Given such a tree, one can construct a nice tree decomposition of $G$ of width at most $\td-1$.
    We apply the algorithm of \cref{thm:twalg} on the unweighted input graph $G$.
    Every two vertices in the same connected component of $G$ are at distance at most $2^{\td}-1$,
    so $|A_v| \le 2^{\td}$ for every $v \in V(G)$.
    Replacing the previous $|A_v| \le n$ estimate,
    the running time of \cref{thm:twalg} becomes
    $3^{c \cdot \td}\cdot (2^{\td})^{\td} \cdot n^{\bO(1)}$.
    This completes the proof.
\end{proof}

%% file: Sections/unweighted-reduction.tex
\begin{lemma}\label{lem:reduction:unweighted-c-pw-fvs-k}
    There is a yes- and no-instance preserving polynomial-time reduction which, given an instance of {\MCI} with $q$ vertex classes of $n$ vertices each returns an instance $(H,k,\col)$ of {\MCS} with $c=2$, $k + \pw(H) + \fvs(H) \in \bO(q)$ and $\td(H) \in \bO(q+\log n)$.
\end{lemma}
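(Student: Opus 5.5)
The reduction will encode each vertex class $V_i$ by a gadget whose only low-cost consistent configurations correspond to choosing exactly one vertex of $V_i$. Each edge of $G$ between different classes is then enforced through distances in $H$: choosing both endpoints must leave some vertex unsatisfied. With $c=2$ (say red and blue), the key tool is \emph{distance encoding}. Identify each $v\in V_i$ with an index $j\in[n]$. Represent the choice $j$ for class $i$ by selecting a red vertex at a prescribed distance along a long path $P_i$ of length $\Theta(n)$. We do not want one selected vertex per path position, since that would blow up $k$. Instead, a selection "rotates" the positions of the nearest red and blue solution vertices. So each class gadget will be a path $P_i$ (for $\pw$ and $\fvs$, paths are ideal) with a constant number of forced selections, such as pendant leaves of unique color that must be selected. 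This keeps $k\in\bO(q)$.

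Concretely, I would build $P_i$ with red and blue "anchor" vertices at the two ends. Every consistent set of budget $\bO(1)$ per gadget must select one red vertex $r_i$ and one blue vertex $b_i$ on $P_i$, adjacent or at a fixed offset. The position $j$ of the red/blue boundary along $P_i$ then encodes the chosen vertex. Consistency within the path forces all vertices on the red side of the boundary to be nearer a red selected vertex, which holds exactly when the boundary is consistent. For each non-edge pattern, i.e.\ each edge $uv$ with $u\in V_i$, $v\in V_{i'}$, we attach \emph{checker} vertices. These are long pendant paths joining a vertex of $P_i$ at position depending on $u$ to a vertex of $P_{i'}$ at position depending on $v$, with lengths tuned so that the checker's midpoint is closer to a wrong-colored selected vertex exactly when both boundaries sit at $u$ and $v$.

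A checker path of length about $2n$ joins two different paths, which creates cycles. To keep $\fvs\in\bO(q)$, I would route all checkers through a bounded set of hub vertices instead, one hub per pair of classes or per class. A hub $h_{i}$ adjacent to $P_i$ via subdivided paths makes every cycle pass through hubs, so deleting the $\bO(q)$ hubs leaves a forest. Deleting the hubs together with $\bO(1)$ vertices per path yields bounded-pathwidth components: caterpillar-like trees of long paths hanging from paths. Hence $\pw(H)\in\bO(q)$. For treedepth, each remaining component has diameter $\bO(n)$ but is path-like, and a path of length $N$ has treedepth $\log N$. So after removing the $\bO(q)$ hubs and anchors, we get $\td\in\bO(q+\log n)$ provided the residual pieces are trees of depth $\bO(\log n)$ treedepth, i.e.\ paths with short or path-like attachments. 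I would aim to ensure each residual component is a subdivided star or a path, both of which have treedepth $\bO(\log n)$.

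The main obstacle is the checker design with only two colors. Choosing both endpoints of an edge must create an unsatisfied vertex, while every other combination, including the interplay of many checkers sharing a hub, must satisfy all checker vertices without extra budget. My first attempt would make checker vertices all blue with a forced blue selected leaf at distance $L$ from each checker midpoint. Red selections on $P_i$ then reach the midpoint at distance $\dist(\text{pos}_u, r_i)+\text{const}$, and that distance drops below $L$ only when $r_i$ is at $u$ and simultaneously $r_{i'}$ is at $v$. Achieving the "both" condition needs an additive sum of two distances, so the midpoint must sit on a route through both paths. I would verify the threshold arithmetic, $d_i+d_{i'}<L$ iff both are $0$, using distance offsets scaled by $n$, which requires the positions to be spaced far apart. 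Finally, I would check that hub vertices and anchor vertices remain satisfied in all intended solutions, padding with forced same-colored pendant selections where needed.
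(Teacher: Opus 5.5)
Your proposal has a genuine gap exactly where you locate the main obstacle, and the checker you sketch cannot work. The distance from a checker vertex to the solution is a \emph{minimum} over selected vertices, and each selected vertex's position depends on one class only. So in your first attempt, the red distance to the blue checker drops below $L$ as soon as \emph{either} $r_i$ sits at $u$ \emph{or} $r_{i'}$ sits at $v$, or it never does if neither alone suffices. That is an OR of equality tests, which would forbid choosing any endpoint of an edge. No ``route through both paths'' turns this minimum into a sum $d_i+d_{i'}$.

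The logic has to be dualized: the checker must be satisfied iff $\sigma_i\neq u$ or $\sigma_{i'}\neq v$. So each class must supply a selected vertex of the checker's colour whose distance to the checker is at most a fixed threshold \emph{exactly when its index differs} from the edge's endpoint. A fixed selected vertex of the other colour sitting at that threshold then wins only when both classes miss it.

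The paper builds this inequality detector as follows. Each class gets two terminals $\beta^i_1,\beta^i_2$ (and $\gamma^i_1,\gamma^i_2$) at distances $s+1$ and $n-s+2$ from the choice vertex. The conflict vertex is joined to these terminals by paths of complementary lengths, giving distance $3n+6-|\sigma_i-s|$ to the chosen vertex. The blue $v^\star$ at distance $3n+5$ then leaves the conflict vertex unsatisfied precisely when $(\sigma_i,\sigma_j)=(s,t)$. This also resolves your routing problem: conflict gadgets attach only to $\bO(1)$ terminals per class and carry the index in path lengths, not in index-dependent attachment positions on $P_i$. Note also that one hub per pair of classes would already give $\Theta(q^2)$ feedback vertices.

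The selection gadget and the budget accounting are also not in place. With $c=2$ there are no ``pendant leaves of unique colour'' with which to force selections. Moreover, on a path whose colours are fixed, a selected red/blue pair at a fixed offset has its midpoint pinned to a colour change of $P_i$, which is part of the input. So the pair's position cannot encode a choice among $n$ options.

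The paper replaces both ideas with \emph{marked} vertices. A marked vertex carries $k+1$ pendant leaves of the other colour, each with $k+1$ leaves of its own colour. It can never be selected, and it must see both colours at its minimum distance to the solution. This has three effects:
\begin{itemize}
\item it forces $r_g,b_g$ into the solution;
\item it pins each $m^i_{\mathrm b}$ at distance $n+1$ from the red $r_g$, so a blue vertex $v^i_s$ must be chosen among $n$ symmetric branches;
\item via $m^i_{\mathrm r}$, it forces a red $u^i_t$ at the same distance.
\end{itemize}
Together with $v^\star$, this exhausts $k=2q+3$. Since the conflict vertices are themselves marked, each class needs both a red and a blue choice at equal distances.

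Your parameter argument (delete $\bO(q)$ vertices, leaving subdivided stars or paths of treedepth $\bO(\log n)$) is in the right spirit. It only becomes meaningful once gadgets with these forcing properties exist.
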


\begin{proof}
    Let $(G,q)$ be an instance of \MCI, where $G$ is a graph with a partition $V_1, \ldots, V_q$ of its
    vertex set into $q$ cliques, each of size $n$.
    For every $1 \le i < j \le q$, let $C^{i,j}$ denote the set of edges between $V_i$ and $V_j$ in $G$.
    We construct an instance $\mathcal{J}=(H,k,\col)$ of {\MCS} with
    $\col \colon V(H) \to \{\red,\blue\}$ and $k = 2q + 3$.
    Throughout the construction, whenever we say that we connect two vertices by a path
    with $t$ red (respectively blue) vertices, we mean a path whose $t$ internal
    vertices all receive color red (respectively blue); in particular, the distance
    between the two endpoints of such a path is $t+1$.
    All paths introduced below are internally vertex-disjoint.
    A local view of the construction is depicted in \cref{fig:improved-whard:reduction-v2}.

    \paragraph{Construction.}
    We use a small forcing structure called a \emph{marked vertex} several times.

    \minipara{Marked vertices.} When we say that we introduce a \emph{marked} vertex $z$,
    we give $z$ an arbitrary color $\gamma\in\{\red,\blue\}$, attach $k+1$ first-level leaves of the other color,
    and then attach $k+1$ second-level leaves of color $\gamma$ to each first-level leaf.
    The vertices in this two-level leaf structure are unnamed and are used only for forcing.

    The role of a marked vertex is the following. In any consistent subset of size at
    most $k$, neither $z$ nor any vertex in its attached two-level leaf structure can
    be selected. Moreover, if $d$ is the minimum distance from $z$ to a selected
    vertex, then the selected vertices at distance $d$ from $z$ must contain both
    colors. We prove this formally in the reverse direction; in the forward direction
    we only use the immediate converse, namely that such a pair of closest selected
    vertices satisfies $z$ and all vertices in its attached leaf structure.

    \minipara{Global gadget.}
    Introduce a marked vertex $m_g$, a blue vertex $b_g$ and a red vertex $r_g$.
    Add the edges $\{m_g,b_g\}$ and $\{m_g,r_g\}$.

    \minipara{Selection gadgets.}
    For every $i\in [q]$ we introduce a selection gadget with two components.

    For the blue component,
    introduce marked vertices $m^{i}_{\mathrm b}$ and $m^{i}_{\mathrm r}$ and two blue terminal vertices $\beta^i_1,\beta^i_2$.
    Connect $m^{i}_{\mathrm b}$ to $r_g$ by a path with $n$ red internal vertices.
    For every $s\in[n]$, introduce a blue vertex $v^i_s$ and connect it to $m^{i}_{\mathrm b}$ by a path with $n$ blue internal vertices and to $m^{i}_{\mathrm r}$ by a path with $2n+5$ blue internal vertices. Finally, connect each $v^i_s$ to $\beta^i_1$
    by a path with $s$ blue internal vertices and to $\beta^i_2$ by a path with
    $n-s+1$ blue internal vertices.

    For the red component, introduce two red terminal vertices $\gamma^i_1$ and $\gamma^i_2$. For every $s\in[n]$, introduce a red vertex $u^i_s$ and connect it to $m^{i}_{\mathrm r}$ by a path with $2n+5$ red internal vertices, to $\gamma^i_1$ by a path with $s$ red internal vertices, and to $\gamma^i_2$ by a path with $n-s+1$ red internal vertices.

    \minipara{Forcing gadget.}
    Introduce a marked vertex $m^\star$ and a blue vertex $v^\star$. Connect
    $m^\star$ to $r_g$ by a path with $n$ red internal vertices, and to
    $v^\star$ by a path with $n$ blue internal vertices.

    \minipara{Conflict gadgets.}
    For every pair $1\le i<j\le q$ and every $(s,t)\in C^{i,j}$ we introduce a marked vertex $w^{i,j}_{s,t}$, called a \emph{conflict vertex}, together with a blue vertex $b^{i,j}_{s,t}$ and the following paths.
    \begin{itemize}
        \item Blue paths from $b^{i,j}_{s,t}$ to $w^{i,j}_{s,t}$, $m^\star$, and $v^\star$ with $n$, $2n+5$, and $2n+3$ blue internal vertices, respectively.
        \item Blue paths from $b^{i,j}_{s,t}$ to
            $\beta^{i}_2,\beta^{i}_1,\beta^{j}_2,\beta^{j}_1$
            with, respectively, $n+s+2$, $2n-s+3$, $n+t+2$, $2n-t+3$ blue internal
            vertices.
        \item Red paths from $w^{i,j}_{s,t}$ to
            $\gamma^{i}_2,\gamma^{i}_1,\gamma^{j}_2,\gamma^{j}_1$
            with the same respective numbers of red internal vertices, each increased by $n+1$.
    \end{itemize}

    \begin{figure}[ht!]
        \centering
        \resizebox{.9\linewidth}{!}{\input{graphics/whardness_linearFVS}}
        \caption{A local view of the construction in
        \cref{thm:whard:unweighted-c-pw-fvs-k} for a single conflicting
        pair $(s,t)\in C^{i,j}$ with $1\le i<j\le q$. Individual edges are unlabeled, black and bold. Paths are drawn as single
        colored edges labeled by the number of internal vertices where highlighted.
        Vertices with edges facing outside this local view are $r_g$ (to other selection gadgets) as well as the $\beta$ and $\gamma$ vertices (to other conflict gadgets). Squares indicate selected vertices in a hypothetical solution.}
        \label{fig:improved-whard:reduction-v2}
    \end{figure}

    This completes the construction of $H$. Clearly it is computable in polynomial time.

    \paragraph{Bounds on parameters.}
    Let $F=\{m^\star\}\cup \setdef{m^i_{\mathrm b}, \beta^i_1, \beta^i_2, \gamma^i_1, \gamma^i_2}{i\in [q]}$.
    Then, every connected component in $H-F$ is a subdivided star (the global gadget and the forcing gadget), two subdivided stars connected by a path (the conflict gadgets), or $2n$ subdivided stars all connected to a central vertex $m^i_{\mathrm r}$ (the selection gadgets).
    Thereby, $F$ is a feedback vertex set witnessing $\fvs \le |F| = 5q+1$ for $H$ and also implies the existence of a path decomposition of width $\bO(|F|)=\bO(q)$.

    The treedepth of $H$ is at most $|F|$ plus the largest treedepth among the connected components in $H-F$. Recall from above the shape of these components. For a subdivided star, removing the center contributes one to the count, leaving only paths of length $\bO(n)$. For two connected subdivided stars, removing both centers has the same effect. For the selection gadget, removing the central vertex $m^i_{\mathrm r}$ gives a set of pairwise disjoint and disconnected subdivided stars; we apply the above argument to the one with largest treedepth. Thus, after eliminating at most $|F|+2$ vertices we end up with a path of length $\bO(n)$. It has treedepth $\bO(\log n)$ as seen by recursively removing the central vertex and recursing in the larger half.
    Thus the treedepth of $H$ is bounded by $\bO(|F|+\log n) = \bO(q+\log n)$.

    \paragraph{Correctness: \MCIshort $\to$ \MCSshort.}
    Let $\sigma\colon[q]\to[n]$ be such that
    $C=\setdef{v^i_{\sigma(i)}}{i\in[q]}$ is a multi-colored independent set in $G$. Let $\sigma_i = \sigma(i)$ for brevity.
    We define
    \[
        S = \{b_g,r_g,v^\star\}
        \cup \setdef{v^i_{\sigma_i}}{i\in[q]}
        \cup \setdef{u^i_{\sigma_i}}{i\in[q]}.
    \]
    Then $|S|=3+q+q=k$. We show that $S$ is a consistent subset of $H$.

        \minipara{Marked vertices.}
        We first note how marked vertices are satisfied by $S$. Let $z$ be a marked
        vertex, and let $d$ be the minimum distance from $z$ to a vertex in $S$. Suppose
        that $S$ contains a red and a blue vertex at distance $d$ from $z$. Since no
        vertex in the two-level leaf structure at $z$ belongs to $S$, every shortest
        path from a first-level or second-level unnamed leaf to a selected vertex passes
        through $z$. Hence $z$ is satisfied by the closest selected vertex of its own
        color, every first-level leaf is satisfied by the closest selected vertex of its
        color, and every second-level leaf is satisfied in the same way.

        The marked vertex $m_g$ has the selected blue vertex $b_g$ and the selected red vertex $r_g$ at distance one.
        The marked vertex $m^\star$ has the selected blue vertex $v^\star$ and the selected red vertex $r_g$ at distance $n+1$ and no closer selected vertices.

        The marked vertex $m^{i}_{\mathrm b}$ for each $i\in [q]$ has distance
        $n+1$
        to the selected red vertex $r_g$ and the selected blue vertex $v^i_{\sigma_i}$.
        The marked vertex $m^{i}_{\mathrm r}$ has distance $2n+6$ to the selected blue and red vertices $v^i_{\sigma_i}$ and $u^i_{\sigma_i}$.
        Again, for both cases there are no closer selected vertices (all closer vertices are unnamed internal vertices on the attached paths).

        The only remaining marked vertices are conflict vertices. Consider any conflict vertex $w=w^{i,j}_{s,t}$, and let $b=b^{i,j}_{s,t}$ be the other named vertex in the conflict gadget.
        The distance from $w$ to $v^i_{\sigma_i}$ via $b$ and $\beta^i_2$ is
        \[
            (n+1) + (n+s+3) + (n-\sigma_i+2) = 3n+6 - (\sigma_i - s),
        \]
        and via $\beta^i_1$ it is
        \[
            (n+1) + (2n-s+4) + (\sigma_i+1) = 3n+6 + (\sigma_i - s).
        \]
        The shorter of the two equals $3n+6-|\sigma_i-s|$. The same calculation on the
        $j$-side gives $\dist(w,v^j_{\sigma_j}) = 3n+6-|\sigma_j-t|$, where we abbreviate $\dist_H(\cdot)$ to $\dist(\cdot)$ for the remainder of the proof.
        On the red side, $u^i_{\sigma_i}$ is reached from $w$ directly via $\gamma^i_x$
        (without the intermediate $b$ hop), so
        \[
            \dist(w,u^i_{\sigma_i}) = \min\bigl((2n+s+4)+(n-\sigma_i+2),\ (3n-s+5)+(\sigma_i+1)\bigr)
            = 3n+6 - |\sigma_i-s|,
        \]
        and symmetrically $\dist(w,u^j_{\sigma_j}) = 3n+6 - |\sigma_j-t|$.
        Since $C$ is an independent set, $\{v^i_{\sigma_i},v^j_{\sigma_j}\}\notin E(G)$, hence
        $(\sigma_i,\sigma_j)\notin C^{i,j}$, hence $(s,t)\ne(\sigma_i,\sigma_j)$.
        Therefore at least one of $|\sigma_i-s|, |\sigma_j-t|$ is positive.
        Without loss of generality, let $|\sigma_i-s|$ be the larger one and
        \[
            d^\star = 3n+6 - |\sigma_i-s| \le 3n+5.
        \]
        Then $w$ sees both a blue and a red vertex ($v^i_{\sigma_i}$ and $u^i_{\sigma_i}$) at distance $d^\star$.

        Moreover, no other selected vertex is closer to $w$:
        The forcing gadget is reached through $b$:
        $\dist(w,v^\star) = \dist(w,b)+\dist(b,v^\star) = (n+1)+(2n+4) = 3n+5$
        and
        $\dist(w,m^\star) = \dist(w,b)+\dist(b,m^\star) = (n+1)+(2n+6) = 3n+7$.
        Reaching $r_g$ (and via it $b_g$) without using the forcing gadget requires passing through some $v^i_s$ and then $m^i_{\mathrm b}$,
        thus $\dist(v^i_s,w) < \dist(r_g,w) < \dist(b_g,w)$;
        via $m^\star$ and $b$ we get $\dist(w, r_g) = 4n+8 > d^\star$.
        Selected vertices in other selection gadgets are also not closer:
        going via another $b'$ and some $\beta^j_y$ in another conflict gives, for some $\beta^i_x$, distance at least
        \[\dist(w,b)+\dist(b,\beta^i_x)+\dist(\beta^i_x,b')+\dist(b',\beta^j_y),\]
        each summand at least $n+1$, so a total of at least $4(n+1) > 3n+5 \ge d^\star$.
        Going via another conflict vertex $w'$ similarly gives, for some $\gamma^i_x$ in this gadget and $\gamma^j_y$ in another, distance at least
        \[\dist(w,\gamma^i_x)+\dist(\gamma^i_x,w'),\]
        with each summand exceeding $2n$, hence a total exceeding $4n \ge d^\star$.

        Thus all marked vertices and their attached unnamed vertices are satisfied by the observation above.

        \minipara{Non-marked vertices.}
        Observe that after removing all marked vertices and their attached unnamed vertices, every connected component is monochromatic.
        Consider any vertex $v$ in such a component. If a shortest path from $v$ to $S$ does not pass through a marked vertex, then it leads to a selected vertex of the same color, so $v$ is satisfied. Otherwise, since every marked vertex sees both colors at its minimum distance to $S$, the same holds for $v$.

        Therefore every vertex of $H$ is satisfied, and $\mathcal{J}$ is a yes-instance of \MCS.

    \paragraph{Correctness: \MCSshort $\to$ \MCIshort.}
    Let $S \subseteq V(H)$ be a consistent subset of $H$ of size at most $k$. Before describing how to derive a solution to \MCI, we first establish some structural properties of $S$.

    Let $M$ be the set of marked vertices and their attached unnamed vertices in $H$.

    \begin{claim}
        It holds that $S \cap M = \varnothing$.
        Furthermore, for every marked vertex $z$, the vertices of $S$ at minimum distance from $z$
        include both a red vertex and a blue vertex.
    \end{claim}

    \begin{claimproof}
        Let $z$ be a marked vertex and let $\gamma \in \{\red,\blue\}$ be its color;
        write $\bar\gamma$ for the other color. Let $L_{\bar\gamma}$ be the set of
        the $k+1$ first-level unnamed leaves attached to $z$, and for
        $x\in L_{\bar\gamma}$ let $L_\gamma(x)$ be the set of the $k+1$
        second-level unnamed leaves attached to $x$.

        We first prove that none of these vertices is selected. If $z\in S$, then
        some $x\in L_{\bar\gamma}$ is not selected. Its closest selected vertex is then $z$, of color $\gamma$, at distance one; since all neighbors of $x$ have color $\gamma$, no selected
        vertex of color $\bar\gamma$ is at distance at most one, contradicting consistency. Hence $z\notin S$.

        If some first-level leaf $x\in L_{\bar\gamma}$ belongs to $S$, then some
        $y\in L_\gamma(x)$ is not selected. The vertex $y$ sees the selected vertex
        $x$ of the opposite color at distance one, and since $x$ is its only neighbor, no selected vertex of color
        $\gamma$ is at distance at most one---a contradiction. Thus no first-level
        leaf is selected.

        Finally, if some second-level leaf $y\in L_\gamma(x)$ belongs to $S$, then
        $x$ sees the selected vertex $y$ of the opposite color at distance one, but
        no selected vertex of color $\bar\gamma$ at distance at most one. This is
        impossible, so no second-level leaf is selected either. Therefore
        none of the vertices in the marked structure rooted at $z$ belongs to $S$.
        Since $z$ was arbitrary, $S\cap M=\varnothing$.

        Since no vertex in the two-level leaf structure at $z$ is selected, every
        shortest path from a first-level or second-level leaf to a selected vertex
        passes through $z$. Hence the closest selected vertices to any first-level
        leaf are exactly the vertices of $S$ at distance $\dist(z,S)$ from $z$. Since $z$ has color $\gamma$ and first-level leaves have color $\bar\gamma$,
        consistency forces these vertices to include both a red and a blue one.
    \end{claimproof}

    \begin{claim}
        Ignoring $b_g, r_g$ and vertices in marked structures,
        every blue vertex is at distance at least $n+2$ and at most $4n+7$ from $r_g$ and
        every red vertex is at distance at most $n$ or more than $4n+9$ from $r_g$.
    \end{claim}
    \begin{claimproof}
        Inspecting the construction, the relevant vertices at distance at most $n$ are all red, and the vertices at distance $n+1$ are all marked ($m^i_{\mathrm b}$ for $i\in [q]$ and $m^\star$).
        Fix $i\in [q]$.
        Using the attached paths via $m^i_{\mathrm b}$, we have $\dist(r_g, v_s^i) = (n+1)+(n+1)= 2n+2$ for every $s\in [n]$, which proves the claim for these blue vertices and the internal vertices on the traversed paths.
        The blue paths from $v_s^i$ to $m^i_{\mathrm r}$ have $2n+5$ internal vertices each, so the furthest such vertex is at distance $(2n+2)+(2n+5)=4n+7$ from $r_g$.

        Any internal vertex on a path between some $v_s^i$ and $\beta_1^i$ (or $\beta_2^i$) is at most $n$ steps from $v_s^i$ and hence at distance at most $3n+2 \le 4n+7$ from $r_g$.
        The two $\beta$ vertices have distance $(2n+2)+2=2n+4$ from $r_g$, via $v_1^i$ and $v_n^i$ respectively.

        For a conflict gadget with vertices $b = b^{i,j}_{s,t}$ and $w = w^{i,j}_{s,t}$, the vertex $b$ has distance at most $(n+1)+(2n+6)=3n+7$ from $r_g$ via $m^\star$ and the attached path (all internal vertices on that path are even closer).
        The $n$ blue internal vertices on the $b$--$w$ path therefore have distance at most $4n+7$ from $r_g$.

        Since each $\beta$ vertex has distance $2n+4$ from $r_g$, any of the at most $2n+2$ internal vertices on the $b$--$\beta_1^i$ (or $b$--$\beta_2^i$) path has distance at most $(2n+4)+(2n+2) = 4n+6$ from $r_g$ via the corresponding $\beta$ vertex.

        The vertex $v^\star$ has distance $2n+2$ from $r_g$ via $m^\star$. Every internal vertex on the two paths attached to $v^\star$ has distance at most $2n+3$from $v^\star$, hence at most $(2n+2)+(2n+3)=4n+5$ from $r_g$. Every one of the $2n+5$ internal vertices on the path connecting $m^\star$ to some $b$ has distance at most $(2n+5)+\dist(m^\star,r_g) = 3n+6$ from $r_g$.
        This completes the bound for the blue vertices.

        Any non-marked red vertex at distance more than $n$ from $r_g$ can only be reached via a marked $m^i_{\mathrm r}$ or a marked conflict vertex $w$. The bounds above place these marked vertices at distance $4n+8$, so any remaining red vertex is at distance at least $4n+9$.
        \[
        \begin{array}{r@{\,}c@{\,}l|c}
            \multicolumn{3}{c|}{\text{distance interval from } r_g}
                & \text{color of vertices in } V(H)\setminus M \\ \hline
            1            & ,\ldots, & n         & \red \\
            n+2        & ,\ldots, & 4n+7     & \blue\\
            4n+9    & ,\ldots, &  \infty & \red
        \end{array}
        \]
        The omitted layers contain only vertices of $M$.
    \end{claimproof}

    \begin{claim}
        It holds that $r_g,b_g \in S$.
    \end{claim}

    \begin{claimproof}
        By the first claim, the closest selected vertices to $m_g$ contain both
        colors. Let $d = \dist(m_g,S)$. Since $m_g\notin S$, $d\ge 1$, and we show that $d=1$.

        Suppose for a contradiction that $d>1$. Then neither $b_g$ nor $r_g$ belongs to
        $S$. As $b_g$ is a leaf outside the marked structure at $m_g$, every closest selected vertex to $m_g$ is reached through $r_g$.

        Consider the distance layers from $r_g$ among the vertices of $V(H)\setminus M$, ignoring the leaf $b_g$. The previous claim shows that these layers are monochromatic (the omitted ones contain only marked vertices). Hence no distance layer larger than one contains selected vertices of both colors, contradicting the marked-vertex claim for $m_g$.
        Consequently $d=1$, and the only selectable vertices at distance one from $m_g$ are $b_g$ and $r_g$. Since both colors must appear in this layer, $b_g, r_g \in S$.
    \end{claimproof}

    \begin{claim}
        $S = \{r_g, b_g, s^\star, v^1_{s_1}, u^1_{t_1}, \ldots, v^k_{s_q}, u^k_{t_q}\}$, where $s_i, t_i \in [n]$ for $i\in [q]$ and $s^\star$ lies in the forcing gadget or on a path connecting $m^\star$ to some conflict gadget.
    \end{claim}

    \begin{claimproof}
        The previous claim gives $\{r_g, b_g\} \subseteq S$.
        We also know that every marked vertex has both a selected blue and a selected red vertex at its minimum distance to $S$.
        Fix $i \in [q]$. The marked vertex $m^i_{\mathrm b}$ has a closest selected red vertex at distance at most $n+1$, namely $r_g \in S$. Every blue vertex within that distance from $m^i_{\mathrm b}$ lies in the blue component of selection gadget $i$, so at least one blue vertex is selected from each selection gadget.
        Such a vertex must be $v^i_{s_i}$ for some $s_i \in [n]$ (if $r_g$ is the closest selected red vertex to $m^i_{\mathrm b}$) or an internal vertex on one of the $n$-vertex paths connecting some $v^i_{s_i}$ to $m^i_{\mathrm b}$ (if a selected red vertex is closer).

        Similarly, at least one vertex $s^\star$ must be selected for the forcing gadget.
        Now consider $m^i_{\mathrm r}$: it has distance $2n+6$ to every $v^i_{s}$, hence distance at most $(2n+6)+n$ to some selected blue vertex inside the gadget. Every red vertex within that distance from $m^i_{\mathrm r}$ lies either in the gadget or on a path between the gadget and a conflict vertex; moreover, no red vertex can serve $m^i_{\mathrm r}$ for two distinct values of $i$.

        Consequently at least two vertices are selected from each selection gadget, plus two from the global gadget and one from the forcing gadget, exhausting the budget $k=2q+3$. Thus no further vertex is selected. In particular, $r_g$ is the closest selected red vertex to $m^i_{\mathrm b}$, at distance $n+1$, so the selected blue vertex from selection gadget $i$ must be some $v^i_{s_i}$. This vertex is at distance exactly $2n+6$ from $m^i_{\mathrm r}$, so the selected red vertex in the gadget lies at the same distance, namely some $u^i_{t_i}$.
    \end{claimproof}

    \begin{claim}
        Let $u^1_{t_1}, \ldots, u^q_{t_q}$ be the selected red vertices from the selection gadgets. Then $v^1_{t_1}, \ldots, v^q_{t_q}$ form a multi-colored independent set in $G$.
    \end{claim}

    \begin{claimproof}
        We first show that $s^\star = v^\star$. By the previous claim $s^\star$ lies in the forcing gadget or on the $m^\star$--$b^{i,j}_{s,t}$ path of some conflict gadget. In every case $s^\star$ is blue and $r_g$ is the closest selected red vertex to $m^\star$ at distance $n+1$, so the marked-vertex condition for $m^\star$ forces $\dist(s^\star, m^\star)=n+1$. If $s^\star$ lies on the $m^\star$--$v^\star$ path this immediately gives $s^\star=v^\star$.

        Suppose instead that $s^\star$ lies on the $m^\star$--$b$ path for some conflict gadget with conflict vertex $w=w^{i,j}_{s,t}$ and blue hub $b=b^{i,j}_{s,t}$. Since the path has $2n+5$ internal vertices and $s^\star$ is the $(n+1)$-th, $\dist(s^\star, b) = (2n+6)-(n+1) = n+5$, and hence
        \[
            \dist(w,S) \le \dist(s^\star,b)+\dist(b,w) = (n+5)+(n+1) = 2n+6.
        \]
        The marked-vertex condition for $w$ then requires a selected red vertex within distance $2n+6$. But $r_g$ is much further (the shortest route via $m^\star$ and $b$ has length $4n+8$), and for any $a\in[n]$,
        \[
            \dist(w, u^i_a) = 3n+6 - |a-s| \ge 2n+7,
        \]
        with the symmetric bound on the $j$-side; no other selected red vertex is closer. This contradicts $\dist(w,S)\le 2n+6$, so $s^\star=v^\star$.

        We now show that no edge of $G$ joins two of the vertices $v^1_{t_1},\ldots,v^q_{t_q}$. Suppose for a contradiction that $\{v^i_{t_i}, v^j_{t_j}\}\in E(G)$ for some $1\le i<j\le q$. Setting $s=t_i$ and $t=t_j$, the conflict vertex $w=w^{i,j}_{s,t}$ exists in $H$. Since $v^\star\in S$ and $\dist(w,v^\star)=3n+5$, we have $\dist(w,S)\le 3n+5$, so the marked-vertex condition for $w$ requires a selected red vertex within distance $3n+5$. The vertex $r_g$ and any $u^{i'}_{t_{i'}}$ with $i'\notin\{i,j\}$ are at distance more than $4n$ from $w$, so the only candidates are $u^i_{t_i}$ and $u^j_{t_j}$. However, going via $\gamma^i_1$ or $\gamma^i_2$,
        \[
            \dist(w,u^i_{t_i}) = \min\bigl((3n-s+5)+(s+1),\ (2n+s+4)+(n-s+2)\bigr) = 3n+6,
        \]
        and symmetrically $\dist(w,u^j_{t_j})=3n+6$. Hence no selected red vertex is within distance $3n+5$ of $w$, a contradiction.
    \end{claimproof}
\end{proof}

%% file: graphics/whardness_linearFVS.tex
% Graphic created using the help of Claude Opus 4.7
% Illustration of the construction in \cref{thm:improved-whard:unweighted-c-pw-fvs-k}.
% Paths are compressed to single colored edges; labels indicate the number of
% internal vertices on the represented path.
%
% Drawing order: vertices, background boxes, edges (unlabeled), labels,
% selection highlights. Labels are emitted last so they sit above every edge
% in their vicinity.
\begin{tikzpicture}[
    x=1cm, y=1cm,
    font=\scriptsize,
    vertex/.style={circle, minimum size=6.2mm, inner sep=0pt, thick},
    marked/.style={vertex, draw=black, fill=black, text=white},
    redv/.style={vertex, draw=red!70!black, fill=red!18, text=black},
    bluev/.style={vertex, draw=blue!70!black, fill=blue!14, text=black},
    faintred/.style={redv, draw=red!35, fill=red!6, text=black!60},
    faintblue/.style={bluev, draw=blue!35, fill=blue!5, text=black!60},
    ordinary/.style={very thick},
    rpath/.style={red!65!black, very thick},
    bpath/.style={blue!65!black, very thick},
    mutedrpath/.style={red!25!gray},
    mutedbpath/.style={blue!25!gray},
    len/.style={inner sep=1.2pt, font=\scriptsize, rounded corners=1pt},
    rlen/.style={len, text=red!65!black},
    blen/.style={len, text=blue!65!black},
    box/.style={draw=gray!55, rounded corners=4pt, dashed, inner sep=5pt},
    selected/.style={draw=black, rounded corners=1pt, line width=0.6pt, inner sep=1.1mm},
    >=stealth,
]

% =========================================================================
% Vertices
% =========================================================================

% Global gadget
\node[marked] (mg) at (-1.35, 1.05) {$m_g$};
\node[bluev]  (bg) at (-2.7,  1.05) {$b_g$};
\node[redv]   (rg) at (0,     1.05) {$r_g$};

% Selection gadget i (blue side)
\node[marked]    (mib) at (-3.85, -0.55) {$m^{i}_{\mathrm b}$};
\node[faintblue] (vin) at (-2.65, -2.05) {$v^{i}_n$};
\node[bluev]     (vis) at (-3.85, -2.05) {$v^{i}_s$};
\node[faintblue] (vi1) at (-5.05, -2.05) {$v^{i}_1$};
\node[bluev]     (bi2) at (-3.20, -3.65) {$\beta^{i}_2$};
\node[bluev]     (bi1) at (-4.45, -3.65) {$\beta^{i}_1$};

% Selection gadget j (blue side)
\node[marked]    (mjb) at (3.85,  -0.55) {$m^{j}_{\mathrm b}$};
\node[faintblue] (vjn) at (5.05,  -2.05) {$v^{j}_n$};
\node[bluev]     (vjt) at (3.85,  -2.05) {$v^{j}_{t}$};
\node[faintblue] (vj1) at (2.65,  -2.05) {$v^{j}_1$};
\node[bluev]     (bj2) at (3.20,  -3.65) {$\beta^{j}_2$};
\node[bluev]     (bj1) at (4.45,  -3.65) {$\beta^{j}_1$};

% Selection gadget i (red side)
\node[marked]    (mir) at (-3.85, -5.5)  {$m^{i}_{\mathrm r}$};
\node[faintred]  (uin) at (-2.65, -7)    {$u^{i}_n$};
\node[redv]      (uis) at (-3.85, -7)    {$u^{i}_s$};
\node[faintred]  (ui1) at (-5.05, -7)    {$u^{i}_1$};
\node[redv]      (gi2) at (-3.20, -8.6)  {$\gamma^{i}_2$};
\node[redv]      (gi1) at (-4.45, -8.6)  {$\gamma^{i}_1$};

% Selection gadget j (red side)
\node[marked]    (mjr) at (3.85,  -5.5)  {$m^{j}_{\mathrm r}$};
\node[faintred]  (ujn) at (5.05,  -7)    {$u^{j}_n$};
\node[redv]      (ujt) at (3.85,  -7)    {$u^{j}_{t}$};
\node[faintred]  (uj1) at (2.65,  -7)    {$u^{j}_1$};
\node[redv]      (gj2) at (3.20,  -8.6)  {$\gamma^{j}_2$};
\node[redv]      (gj1) at (4.45,  -8.6)  {$\gamma^{j}_1$};

% Forcing gadget
\node[marked] (mstar) at (-0.85, -0.55) {$m^{\star}$};
\node[bluev]  (vstar) at ( 0.85, -1.5)  {$v^{\star}$};

% Conflict gadget
\node[bluev]  (bij) at (0, -5.5) {$b^{i,j}_{s',t'}$};
\node[marked] (w)   at (0, -9.8) {$w^{i,j}_{s',t'}$};

% =========================================================================
% Background boxes (on background layer, drawn behind everything)
% =========================================================================
\begin{scope}[on background layer]
    \node[box, fill=gray!4,
          fit=(mib)(vin)(vis)(vi1)(bi2)(bi1)(mir)(uin)(uis)(ui1)(gi2)(gi1),
          label={[font=\small, xshift=-3mm, yshift=4.5cm, rotate=90]left:selection gadget $i$}] {};
    \node[box, fill=gray!4,
          fit=(mjb)(vjn)(vjt)(vj1)(bj2)(bj1)(mjr)(ujn)(ujt)(uj1)(gj2)(gj1),
          label={[font=\small, xshift=-3mm, yshift=4.5cm, rotate=90]left:selection gadget $j$}] {};
    \node[box, fill=gray!4,
          fit=(mstar)(vstar),
          label={[font=\small, xshift=-3mm, yshift=1cm, rotate=90]left:forcing gadget}] {};
    \node[box, fill=gray!4,
          fit=(bij)(w),
          label={[font=\footnotesize, xshift=-3mm, yshift=2.5cm, rotate=90]left:conflict for $\{v^i_{s'},v^j_{t'}\}$}] {};
\end{scope}

% =========================================================================
% Edges (unlabeled). Labels are placed in a later block so they sit on top.
% =========================================================================

% Global gadget
\draw[ordinary] (mg) -- (rg);
\draw[ordinary] (mg) -- (bg);

% Selection gadget i: blue side
\draw[rpath]      (rg)  -- (mib);
\draw[mutedbpath] (mib) -- (vin);
\draw[bpath]      (mib) -- (vis);
\draw[mutedbpath] (mib) -- (vi1);
\draw[mutedbpath] (vin) -- (bi2);
\draw[mutedbpath] (vin) -- (bi1);
\draw[mutedbpath] (vi1) -- (bi2);
\draw[mutedbpath] (vi1) -- (bi1);
\draw[bpath]      (vis) -- (bi2);
\draw[bpath]      (vis) -- (bi1);

% Selection gadget j: blue side
\draw[rpath]      (rg)  -- (mjb);
\draw[mutedbpath] (mjb) -- (vjn);
\draw[bpath]      (mjb) -- (vjt);
\draw[mutedbpath] (mjb) -- (vj1);
\draw[mutedbpath] (vjn) -- (bj2);
\draw[mutedbpath] (vjn) -- (bj1);
\draw[mutedbpath] (vj1) -- (bj2);
\draw[mutedbpath] (vj1) -- (bj1);
\draw[bpath]      (vjt) -- (bj2);
\draw[bpath]      (vjt) -- (bj1);

% Selection gadget i: red side and m^i_r connections
\draw[mutedbpath] (mir) to[bend right=35] (vin);
\draw[bpath]      (mir) -- (vis);
\draw[mutedbpath] (mir) to[bend left=35]  (vi1);
\draw[mutedrpath] (mir) to[bend left=45] (uin);
\draw[rpath]      (mir) -- (uis);
\draw[mutedrpath] (mir) to[bend right=45] (ui1);
\draw[mutedrpath] (uin) -- (gi2);
\draw[mutedrpath] (uin) -- (gi1);
\draw[mutedrpath] (ui1) -- (gi2);
\draw[mutedrpath] (ui1) -- (gi1);
\draw[rpath]      (uis) -- (gi2);
\draw[rpath]      (uis) -- (gi1);

% Selection gadget j: red side and m^j_r connections
\draw[mutedbpath] (mjr) to[bend left=35]  (vj1);
\draw[bpath]      (mjr) -- (vjt);
\draw[mutedbpath] (mjr) to[bend right=35] (vjn);
\draw[mutedrpath] (mjr) -- (ujn);
\draw[rpath]      (mjr) -- (ujt);
\draw[mutedrpath] (mjr) -- (uj1);
\draw[mutedrpath] (ujn) -- (gj2);
\draw[mutedrpath] (ujn) -- (gj1);
\draw[mutedrpath] (uj1) -- (gj2);
\draw[mutedrpath] (uj1) -- (gj1);
\draw[rpath]      (ujt) -- (gj2);
\draw[rpath]      (ujt) -- (gj1);

% Forcing gadget
\draw[rpath] (rg)    -- (mstar);
\draw[bpath] (mstar) -- (vstar);

% Conflict gadget
\draw[bpath] (bij) -- (bi2);
\draw[bpath] (bij) -- (bi1);
\draw[bpath] (bij) -- (bj2);
\draw[bpath] (bij) -- (bj1);
\draw[bpath] (bij) -- (mstar);
\draw[bpath] (vstar) -- (bij);
\draw[bpath] (bij) -- (w);
\draw[rpath] (w) --                 (gi2);
\draw[rpath] (w) to[bend left=10]   (gi1);
\draw[rpath] (w) --                 (gj2);
\draw[rpath] (w) to[bend right=10]  (gj1);

% =========================================================================
% Edge labels (placed after edges so they sit above every nearby edge)
% =========================================================================

% Global - selection / forcing connecting edges from r_g
\path (rg) -- node[rlen, pos=.5, right=10pt]    {$n$} (mib);
\path (rg) -- node[rlen, pos=.5, above right=1pt]{$n$} (mjb);
\path (rg) -- node[rlen, pos=.5, right=5pt]     {$n$} (mstar);

% Selection gadget i: blue side
\path (mib) -- node[blen, pos=.5, right=2pt] {$n$}       (vis);
\path (vis) -- node[blen, pos=.55, right=2pt, fill=gray!4] {$n-s+1$}   (bi2);
\path (vis) -- node[blen, pos=.55, left=2pt]  {$s$}       (bi1);

% Selection gadget j: blue side
\path (mjb) -- node[blen, pos=.5, right=2pt] {$n$}       (vjt);
\path (vjt) -- node[blen, pos=.55, left=2pt, fill=gray!4]  {$n-t+1$}   (bj2);
\path (vjt) -- node[blen, pos=.55, right=2pt] {$t$}       (bj1);

% Selection gadget i: red side and m^i_r
\path (mir) -- node[blen, pos=.3, left=1pt]  {$2n+5$}    (vis);
\path (mir) -- node[rlen, pos=.3, left=1pt]  {$2n+5$}    (uis);
\path (uis) -- node[rlen, pos=.55, right=2pt, fill=gray!4] {$n-s+1$}   (gi2);
\path (uis) -- node[rlen, pos=.55, left=2pt]  {$s$}       (gi1);

% Selection gadget j: red side and m^j_r
\path (mjr) -- node[blen, pos=.3, right=1pt] {$2n+5$}    (vjt);
\path (mjr) -- node[rlen, pos=.3, right=1pt] {$2n+5$}    (ujt);
\path (ujt) -- node[rlen, pos=.55, left=2pt, fill=gray!4]  {$n-t+1$}   (gj2);
\path (ujt) -- node[rlen, pos=.55, right=2pt] {$t$}       (gj1);

% Forcing gadget
\path (mstar) -- node[blen, pos=.5, above=2pt] {$n$}     (vstar);

% Conflict gadget: blue side
\path (bij) -- node[blen, pos=.4,  above=12pt] {$n+s'+2$}  (bi2);
\path (bij) -- node[blen, pos=.45, below=10pt] {$2n-s'+3$} (bi1);
\path (bij) -- node[blen, pos=.4,  above=12pt] {$n+t'+2$}  (bj2);
\path (bij) -- node[blen, pos=.5,  below=8pt]  {$2n-t'+3$} (bj1);
\path (bij) -- node[blen, pos=.45, left=3pt]   {$2n+5$}    (mstar);
\path (bij) -- node[blen, pos=.5,  right=2pt]  {$n$}       (w);
\path (vstar) -- node[blen, pos=.4, right=2pt] {$2n+3$}    (bij);

% Conflict gadget: red side
\path (w) --                node[rlen, pos=.4, above=7pt]  {$2n+s'+3$} (gi2);
\path (w) to[bend left=10]  node[rlen, pos=.7, below=10pt] {$3n-s'+4$} (gi1);
\path (w) --                node[rlen, pos=.4, above=7pt]  {$2n+t'+3$} (gj2);
\path (w) to[bend right=10] node[rlen, pos=.7, below=10pt] {$3n-t'+4$} (gj1);

% =========================================================================
% Highlight one coherent selection
% =========================================================================
\foreach \z in {rg,bg,vis,vjt,vstar,uis,ujt} {
    \node[selected, fit=(\z)] {};
}

\end{tikzpicture}

%% file: Sections/weighted-reduction.tex
For the weighted setting, we show that even replacing treewidth with the more restrictive vertex cover number cannot improve the running time.

\begin{theorem}\label{thm:whard:weighted-vc}
    {\WMCS} cannot be solved in time $f(k+\vc)\cdot n^{o(k+\vc)}$ under the \textup{ETH},
    even for $c=2$ and unary-encoded edge-weights, for any computable function $f$.
\end{theorem}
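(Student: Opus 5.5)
We will reduce from \MCI, as in \cref{lem:reduction:unweighted-c-pw-fvs-k}. Given $(G,q)$ with classes $V_1,\ldots,V_q$ of size $n$, we build an edge-weighted two-colored instance $(H,w,k,\col)$ with $k\in\bO(q)$ and a vertex cover of size $\bO(q)$, of size polynomial in $n$ and with weights polynomially bounded in $n$ (so unary encoding is fine). Then an $f(k+\vc)\cdot N^{o(k+\vc)}$ algorithm for \WMCSshort would solve \MCI in time $f'(q)n^{o(q)}$, contradicting the ETH bound quoted before \cref{lem:reduction:unweighted-c-pw-fvs-k}.

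The vertex cover will consist of $\bO(q)$ hub vertices, and every other vertex will be adjacent only to hubs. Weights replace the long paths of the unweighted construction, so the encoding of $s\in[n]$ becomes an edge length, and each vertex outside the cover has all its neighbors in the cover.

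\textbf{Marked vertices.} Forcing gadgets use $k+1$ pendant leaves of the opposite color attached to a hub. Second-level leaves would be independent vertices adjacent to non-cover vertices, so we drop them. Instead, with $k+1$ opposite-colored leaves on $z$, selecting $z$ leaves some unselected leaf whose unique closest selected vertex is $z$, which has the wrong color. We must also rule out selecting a leaf. To handle this, we attach both $k+1$ red and $k+1$ blue leaves to $z$ by edges of tiny weight $\varepsilon$ (scaled to integers). A selected leaf would then be the unique closest selected vertex to the other leaves of the opposite color, unless $k+1$ leaves are selected. This forces the closest selected vertices of $z$ to contain both colors, while $z$ and its leaves are never selected.

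\textbf{Hubs and choice vertices.} For each $i\in[q]$ we create hubs $\beta^i_1,\beta^i_2$ (blue) and $\gamma^i_1,\gamma^i_2$ (red), together with marked hubs $m^i_{\mathrm b},m^i_{\mathrm r}$. The choice vertices $v^i_s$ (blue) and $u^i_s$ (red) are independent. We connect $v^i_s$ to $\beta^i_1$ with weight $s$, to $\beta^i_2$ with weight $n-s+1$, and to the marked hubs with constant large weights; $u^i_s$ is connected analogously. A global gadget with $r_g,b_g$ and a forcing vertex $v^\star$ sets $k=2q+3$ and forces exactly one $v^i_{s_i}$ and one $u^i_{t_i}$ with $s_i=t_i$, via the equal-distance requirement at $m^i_{\mathrm r}$. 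For each conflict edge $(s,t)\in C^{i,j}$ we add a \emph{single} marked non-cover vertex $w^{i,j}_{s,t}$. It is adjacent only to hubs: to $\beta^i_{1},\beta^i_2,\gamma^i_1,\gamma^i_2$ (and the $j$-versions) with weights $L\mp s$ and $L\mp t$ exactly as in the unweighted gadget, and to $v^\star$ or a forcing hub with weight $3L'-1$. Its marked leaves would make it part of the cover, which we cannot afford since there are $\Theta(n^2q^2)$ conflicts. We therefore make $w$ itself unmarked but give it the color that must fail: blue, with its closest blue candidate being $v^\star$ at distance $D-1$, and red selected vertices closer than $D-1$ exactly when $(s_i,s_j)\ne(s,t)$. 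This reproduces the calculation $3n+6-|\sigma_i-s|$ from the proof of \cref{lem:reduction:unweighted-c-pw-fvs-k}, with blue and red witnesses tying when $(s,t)$ is not chosen and only blue appearing at $3n+5$ when it is.

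\textbf{Main obstacle.} Every conflict vertex is outside the cover, yet it must still behave like a marked vertex, needing both colors tied at its minimum distance. This is the step we expect to be hardest. Our first approach is to let $w$ carry a pendant structure of its opposite color consisting of a single leaf per conflict. Such a leaf is adjacent only to $w$, so it is fine only if $w$ lies in the cover, which it does not. As a fallback, we route the conflict check through the hubs themselves: $w$ is colored red and is adjacent only to blue hubs and $v^\star$. Then $w$ is satisfied iff some red selected vertex $u^i$ or $u^j$ is at distance at most $\dist(w,v^\star)$, and the weights are chosen so that this holds iff $|s_i-s|+|s_j-t|>0$. We would then verify the remaining routine distance inequalities with weight scales $L\gg n$, so that no unintended shortcut through hubs exists.
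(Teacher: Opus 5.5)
\textbf{There is a genuine gap.} The conflict gadget is the only part of your construction that encodes $E(G)$, and it is never actually built. Each version you sketch fails.

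\emph{The blue-$w$ variant is reversed.} You say red selected vertices are strictly closer to $w$ than the blue $v^\star$ exactly when $(s_i,s_j)\neq(s,t)$. Then the blue vertex $w$ is unsatisfied precisely for the non-conflicting choices and satisfied for the conflicting one. The next sentence, about blue and red witnesses tying, describes a marked vertex, which $w$ no longer is.

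\emph{The fallback fails as stated.} There $w$ is red and adjacent only to the $\beta$-hubs and $v^\star$. But your red choice vertices hang off the $\gamma$-hubs and $m^i_{\mathrm r}$. So every route from $w$ to a red selected vertex either passes the selected blue $v^\star$, or enters some $\beta^{i'}$-hub and must then use one of your heavy edges to $m^{i'}_{\mathrm b}$ or $m^{i'}_{\mathrm r}$. Meanwhile the selected blue $v^{i'}_{s_{i'}}$ lies within distance $n$ of that hub. Hence $w$ is unsatisfied already in the intended solution, and completeness fails. If you meant to attach $w$ to the $\gamma$-hubs, you do get a one-sided threshold test, which is the right kind of gadget. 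What remains unverified is then exactly the hard part: that no blue selected vertex is closer than $v^\star$, and that the $\Theta(q^2n^2)$ conflict vertices, now two-edge shortcuts between hubs, do not disturb the distances your forcing arguments at $m_g,m^\star,m^i_{\mathrm b},m^i_{\mathrm r}$ rely on.

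\emph{Your marked-vertex gadget is also broken.} With $k+1$ red and $k+1$ blue $\varepsilon$-leaves on $z$, selecting one leaf of each color satisfies $z$ and all its leaves at cost $2$: everyone sees both colors tied at distance $\varepsilon$ or $2\varepsilon$. So ``$z$ and its leaves are never selected'' is false, and the budget counting in the reverse direction collapses. Using only opposite-colored $\varepsilon$-leaves would work, since a selected leaf is then the unique closest selected vertex to $z$.

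\textbf{The missing idea} makes all of this machinery unnecessary: give a single vertex $g$ its own color. Being the only vertex of that color, $g$ lies in every solution. Then every other vertex $v$ is satisfied iff some selected color-$1$ vertex lies within distance $\dist(v,g)$. This is a pure threshold condition, so you need no marked vertices, no tie forcing, and no second (red) selection component.

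The paper's construction is as follows:
\begin{itemize}
\item $k=q+1$ and vertex cover $\{g\}\cup\{a_i,b^1_i,b^2_i : i\in[q]\}$.
\item Selectors $w^i_j$ are joined to $a_i,b^1_i,b^2_i$ by edges of weight $n^2$, $j$, $n-j+1$. Since $a_i$ is at distance $n^2$ from $g$, this forces one selector per class, up to a harmless exchange for $a_i$.
\item For each edge $e=\{v^i_j,v^{i'}_{j'}\}$, an independent vertex $x_e$ has weight $2n$ to $g$ and weights $2n-j+1$, $n+j$ to $b^1_i,b^2_i$, and likewise for $i'$.
\end{itemize}
Then the selected selector $w^i_h$ is at distance $2n+1-|h-j|$ from $x_e$. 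It is within the threshold $2n$ unless $h=j$, in which case it is at distance exactly $2n+1$.
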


\begin{proof}
    We prove by reduction from \MCI.
    Let $(G,q)$ be an instance of \MCI, where $G$ is a graph with a partition $V_1, \ldots, V_q$ of its
    vertex set into $q$ cliques, where $V_i = \{v^i_1, \ldots, v^i_n\}$.
    We may assume that $n\ge 3$.
    We construct an equivalent {\WMCS} instance $(G', w, k, \col)$ with $k \coloneq q+1$ and two colors as follows.

    First, introduce a vertex $g$ to $G'$.
    Then, for every $i\in[q]$ we construct a group $W_i$ of vertices in $G'$,
    consisting of three vertices $a_i, b^1_i, b^2_i$, as well as a \emph{selector vertex} $w^i_j$ for every vertex $v^i_j \in V_i$.
    We connect $a_i$ to $g$ by an edge of weight $n^2$,
    and every vertex $w^i_j$ to $a_i$, $b^1_i$, and $b^2_i$ by edges of weight $n^2$, $j$, and $n-j+1$, respectively.

    Lastly, for every edge $e = \{v^i_j, v^{i'}_{j'}\} \in E(G)$, where $v^i_j \in V_i$ and $v^{i'}_{j'} \in V_{i'}$ with $i \neq i'$,
    introduce an \emph{edge vertex} $x_e$ to $G'$ and add the following incident edges:
    \begin{itemize}
        \item $\{x_e, g\}$ of weight $2n$,
        \item $\{x_e, b^1_i\}$ of weight $2n-j+1$ and $\{x_e, b^1_{i'}\}$ of weight $2n-j'+1$,
        \item $\{x_e, b^2_i\}$ of weight $n+j$ and $\{x_e, b^2_{i'}\}$ of weight $n+j'$.
    \end{itemize}

    Let $\col$ be such that vertex $g$ has color $2$ and all other vertices have color $1$.
    See \cref{fig:whard:weighted-mci-reduction-appendix} for a visualization of the construction.
    We remark that $G'$ can be constructed in polynomial time and has a vertex cover of size $3q+1$ by $\{g\} \cup \setdef{a_i,b^1_i,b^2_i}{i\in[q]}$.
    Hence the claimed lower bound holds assuming that the reduction preserves yes- and no-instances, which we prove in the following.

    \begin{figure}[ht]
        \centering
        \input{graphics/MCI_weighted_reduction}
        \caption{Local view on the construction for the reduction in \cref{thm:whard:weighted-vc}: two selection gadgets $W_i, W_{i'}$ with a conflict gadget for the $j$-th vertex in $V_i$ and the $j'$-th vertex in $V_{i'}$ between them.}
        \label{fig:whard:weighted-mci-reduction-appendix}
    \end{figure}

    \paragraph{Observation on Distances.}
    Before proving correctness, we observe the following properties of the distance between the constructed vertices in $G'$ for edges and vertices in $G$.
    In particular, for every edge $e=\{v^i_j,v^{i'}_{j'}\}$ of $G$ with $i\ne i'$, we have
    \begin{align}
        \dist_{G'}(x_e, w^i_j) &= \dist_{G'}(x_e, w^{i'}_{j'}) = 2n+1;\label{eq:whard:weighted-edge-incident} \\
        \dist_{G'}(x_e, w^i_h) &\le 2n \quad\text{for every } h\in[n]\setminus\{j\},\label{eq:whard:weighted-edge-not-incident-i}\\
        \dist_{G'}(x_e, w^{i'}_h) &\le 2n \quad\text{for every } h\in[n]\setminus\{j'\},\label{eq:whard:weighted-edge-not-incident-ip}\\
        \dist_{G'}(x_e, w^r_h) &> 3n \quad\text{for every } r\in[q]\setminus\{i,i'\}\text{ and }h\in[n].\label{eq:whard:weighted-edge-other-color}
    \end{align}

    For \eqref{eq:whard:weighted-edge-incident}, we argue for $w^i_j$; the case of $w^{i'}_{j'}$ is symmetric.
    The paths through $b^1_i$ and through $b^2_i$ both have length $2n+1$.
    These are shortest paths: any path using $g$ or $a_i$ uses an edge of weight $n^2>2n+1$; any path through another edge vertex must use at least three edges of weight at least $n+1$ before reaching a selector vertex; and any path that switches between $b^1_i$ and $b^2_i$ through some selector $w^i_h$ has length at least $(n+1)+(n+1)+1>2n+1$.

    For \eqref{eq:whard:weighted-edge-not-incident-i}, let $h\ne j$.
    If $h<j$, then the path through $b^1_i$ has length $(2n-j+1)+h\le 2n$.
    If $h>j$, then the path through $b^2_i$ has length $(n+j)+(n-h+1)\le 2n$.
    The proof of \eqref{eq:whard:weighted-edge-not-incident-ip} is identical.
    Finally, for \eqref{eq:whard:weighted-edge-other-color}, any path from $x_e$ to a selector in a group $W_r$ with $r\notin\{i,i'\}$ either goes through $g$, and hence has length more than $2n+n^2$, or goes through another edge vertex, which again requires at least three edges of weight at least $n+1$ before reaching $W_r$.

    \paragraph{Correctness: \MCIshort $\to$ \WMCSshort.}
    First suppose that $G$ contains an independent set $S\subseteq V(G)$ of size $q$. Let $S'\subseteq V(G')$ consist of $g$ and the vertex $w^i_j$ for every $v^i_j\in S$. We show that $S'$ is a solution to the {\WMCS} instance. Clearly, $|S'| = k = q+1$.
    As $g$ itself is contained in $S'$, it remains to argue that for every vertex $u\in V(G')\setminus S'$ there is a vertex $y\in S'\setminus \{g\}$ such that $\dist_{G'}(u,y) \le \dist_{G'}(u,g)$.
    Consider any $i\in [q]$ and let $v^i_s$ be the unique vertex in $S\cap V_i$ (as $G[V_i]$ is a clique, $S$ can only contain one vertex of each $V_i$). Note that $w^i_s\in S'$.
    \begin{itemize}
        \item $a_i$ has distance $n^2$ to both $g$ and $w^i_s$, so it is satisfied.
        \item $b^1_i$ and $b^2_i$ have distance $s \le n$ and $n-s+1 \le n$ to $w^i_s$, respectively. As $g$ has distance at least $2n$ to every other vertex, $b^1_i$ and $b^2_i$ are satisfied.
        \item The case for a vertex $w^i_j$ with $v^i_j\in V_i$ is similar. Its distance to $w^i_s$ (via $b^1_i$ or $b^2_i$) is at most $2n$ and thereby less than its distance to $g$.
    \end{itemize}

    The only remaining vertices are the ones representing an edge.
    Every such vertex $x_e$, for $e=\{v^i_j,v^{i'}_{j'}\}$, has distance $2n$ to $g$.
    As $S$ is an independent set, at least one endpoint of $e$ is not in $S$.
    Assume without loss of generality that $v^i_j\notin S$, and let $v^i_s$ be the unique vertex in $S \cap V_i$.
    Then $s\ne j$, $w^i_s\in S'$, and by \eqref{eq:whard:weighted-edge-not-incident-i} we have $\dist_{G'}(x_e, w^i_s) \le 2n$ as desired.

    \paragraph{Correctness: \WMCSshort $\to$ \MCIshort.}
    For the other direction, suppose there is a solution $S'$ to the {\WMCS} instance. We have $g\in S'$ as it is the only vertex of its color in $G'$.
    Thus, for all $i\in [q]$, vertex $a_i$ (having distance $n^2$ to $g$) is itself in $S'$ or some selector vertex $w^i_j$ is in $S'$.
    As $|S'\setminus \{g\}| \le k-1 = q$, the solution contains exactly one vertex from each set $\{a_i,w^i_1,\ldots,w^i_n\}$ and no other color-$1$ vertices.

    If $a_i \in S'$ for some $i\in [q]$, consider an alternative solution in which $a_i$ is replaced by an arbitrary selector $w^i_j$.
    This is still a solution: $a_i$ itself has distance $n^2$ to $w^i_j$, matching its distance to $g$; every other vertex in $W_i$ has distance at most $2n$ to $w^i_j$ and distance more than $2n$ to $g$; and no vertex outside $W_i$ can rely on $a_i$, since every path from such a vertex to $a_i$ is longer than the path from that vertex to $g$.

    Consequently, there is a solution $S'$ that consists of $g$ and precisely one selector vertex $w^i_{s_i}$ for each $i\in [q]$.
    Consider the corresponding $q$ vertices $v^1_{s_1}, \ldots, v^q_{s_q}$ in $G$ and suppose two of them, say $v^i_{s_i}$ and $v^{i'}_{s_{i'}}$, shared an edge $e$.
    Then $x_e$ has distance $2n$ to $g$ and is not selected, so some selected color-$1$ vertex must have distance at most $2n$ to $x_e$.
    By \eqref{eq:whard:weighted-edge-other-color}, such a selected vertex must lie in $W_i$ or $W_{i'}$.
    The only selected vertices in these two groups are $w^i_{s_i}$ and $w^{i'}_{s_{i'}}$, but both have distance $2n+1$ to $x_e$ by \eqref{eq:whard:weighted-edge-incident}.
    This contradiction shows that $v^1_{s_1}, \ldots, v^q_{s_q}$ form an independent set in $G$.
\end{proof}

%% file: graphics/MCI_weighted_reduction.tex
% adapted from code generated by Claude, Opus 4.6

\begin{tikzpicture}[
    % --- Node styles ---
    col1/.style={draw, circle, minimum size=9mm, inner sep=0pt, fill=blue!10, thick},
    col2/.style={draw, minimum size=10mm, inner sep=1pt, fill=orange!25, thick},
    anode/.style={col1, fill=green!15},
    bnode/.style={col1, fill=red!15},
    vnode/.style={col1, fill=blue!10},
    enode/.style={col1, fill=violet!15},
    every edge/.style={thick},
    weight/.style={midway, inner sep=1pt, rounded corners=5pt, font=\small},
    groupbox/.style={rounded corners=6pt, draw=gray!60, dashed, inner sep=10pt},
    >=stealth,
]
 
% =====================================================
% Global vertex g (color 2 -> square shape)
% =====================================================
\node[col2] (g) at (0, -2) {$g$};
 
% =====================================================
% Color group i (left side) — multiple v-vertices
% =====================================================
\begin{scope}[shift={(-5.8, -3.8)}]
    % a-nodes
    \node[anode] (ai) at (0, 0.5) {$a_i$};
 
    % Multiple v-vertices: first (1), highlighted (j), last (n)
    \node[vnode] (v1) at (-2, -1) {$w^i_1$};
    \node[vnode] (vj) at ( 0, -1) {$w^i_j$};
    \node[vnode] (vn) at ( 2, -1) {$w^i_n$};
    \node[font=\large] at (-1, -1) {$\cdots$};
    \node[font=\large] at ( 1, -1) {$\cdots$};

    % b-nodes
    \node[bnode] (b1i) at (-1, -3) {$b^1_i$};
    \node[bnode] (b2i) at ( 1, -3) {$b^2_i$};

    % --- Edges from a-nodes to g ---
    \draw (ai) -- node[weight, pos=0.7, above =3pt] {$n^2$} (g);

    % --- Edges from first/last selectors (dimmed) ---
    \draw[gray!60] (v1) -- (ai);
    \draw[gray!60] (v1) -- (b1i);
    \draw[gray!60] (v1) -- (b2i);
    \draw[gray!60] (vn) -- (ai);
    \draw[gray!60] (vn) -- (b1i);
    \draw[gray!60] (vn) -- (b2i);

    % --- Edges from highlighted selector w^i_j ---
    \draw (vj) -- node[weight, right = 2pt, pos=0.5] {$n^2$} (ai);
    \draw (vj) -- node[weight, fill=green!5, left=3pt, pos=0.4] {$j$} (b1i);
    \draw (vj) -- node[weight, fill=green!5, right=3pt, pos=0.4] {$n{-}j{+}1$} (b2i);

    % Group box
    \begin{scope}[on background layer]
        \node[groupbox, fill=green!5, fit=(ai)(b1i)(b2i)(v1)(vj)(vn), label={[font=\small]above: $W_i$}] {};
    \end{scope}
\end{scope}
 
% =====================================================
% Color group i' (right side) — multiple v-vertices
% =====================================================
\begin{scope}[shift={(5.8, -3.8)}]
    % a-nodes
    \node[anode] (aip) at (0, 0.5) {$a_{i'}$};
 
    % Multiple v-vertices: first (1), highlighted (j'), last (n)
    \node[vnode] (vl1) at (-2, -1) {$w^{i'}_1$};
    \node[vnode] (vl)  at ( 0, -1) {$w^{i'}_{j'}$};
    \node[vnode] (vln) at ( 2, -1) {$w^{i'}_n$};
    \node[font=\large] at (-1, -1) {$\cdots$};
    \node[font=\large] at ( 1, -1) {$\cdots$};

    % b-nodes
    \node[bnode] (b1ip) at (-1, -3) {$b^1_{i'}$};
    \node[bnode] (b2ip) at ( 1, -3) {$b^2_{i'}$};

    % --- Edges from a-nodes to g ---
    \draw (aip) -- node[weight, pos=0.6, above = 10pt] {$n^2$} (g);

    % --- Edges from first/last selectors (dimmed) ---
    \draw[gray!60] (vl1) -- (aip);
    \draw[gray!60] (vl1) -- (b1ip);
    \draw[gray!60] (vl1) -- (b2ip);
    \draw[gray!60] (vln) -- (aip);
    \draw[gray!60] (vln) -- (b1ip);
    \draw[gray!60] (vln) -- (b2ip);

    % --- Edges from highlighted selector w^{i'}_{j'} ---
    \draw (vl) -- node[weight, right=2pt, pos=0.5] {$n^2$} (aip);
    \draw (vl) -- node[weight, fill=red!5, left = 3pt, pos=0.4] {$j'$} (b1ip);
    \draw (vl) -- node[weight, fill=red!5, right= 3pt, pos=0.4] {$n{-}j'{+}1$} (b2ip);

    % Group box
    \begin{scope}[on background layer]
        \node[groupbox, fill=red!5, fit=(aip)(b1ip)(b2ip)(vl1)(vl)(vln), label={[font=\small]above: $W_{i'}$}] {};
    \end{scope}
\end{scope}
 
% =====================================================
% Edge vertex x_e (for edge e={v^i_j, v^{i'}_{j'}} in G)
% =====================================================
\node[enode] (vjl) at (0, -4.8) {$x_e$};
 
% Edge from v_{j,ell} to g
\draw (vjl) -- node[weight, left = 2pt, pos=0.45] {$2n$} (g);
 
% Edges from v_{j,ell} to b-nodes of color i (left)
\draw (vjl) -- node[weight, pos=0.25, above=4pt] {$2n{-}j{+}1$} (b1i);
\draw (vjl) -- node[weight, pos=0.35, below=6pt] {$n{+}j$} (b2i);
 
% Edges from v_{j,ell} to b-nodes of color i' (right)
\draw (vjl) -- node[weight, pos=0.25, above=4pt] {$n{+}j'$} (b2ip);
\draw (vjl) -- node[weight, pos=0.35, below=7pt] {$2n{-}j'{+}1$} (b1ip);
 
% =====================================================
% Second edge vertex x_f (dimmed, another edge
% between the same two color classes)
% =====================================================
\node[enode, fill=violet!5, draw=gray!50, font=\small] (vjplp) at (0, -7) {$x_f$};
 
% Dimmed edges from v_{j',ell'} to g
\draw[gray!50] (vjplp) to[bend right=25] (g);
 
% Dimmed edges from v_{j',ell'} to b-nodes of color i
\draw[gray!50] (vjplp) to[bend left=15] (b1i);
\draw[gray!50] (vjplp) -- (b2i);
 
% Dimmed edges from v_{j',ell'} to b-nodes of color i'
\draw[gray!50] (vjplp) -- (b1ip);
\draw[gray!50] (vjplp) to[bend right=15] (b2ip);
 
\node[font=\Large] at (0, -6) {$\vdots$};
 
% =====================================================
% Hints at edge vertices to other color classes
% =====================================================
% Hint edge vertex on left (edges from color i to other colors)
\node[enode, fill=violet!5, draw=gray!50, font=\small] (hintEL) at (-5.5, -8.3) {};
\draw[gray!40, thick] (hintEL) -- (b1i);
\draw[gray!40, thick] (hintEL) -- (b2i);
\draw[gray!40, thick] (hintEL) -- ++(.5, -.5);
\draw[gray!40, thick] (hintEL) -- ++(-.5, -.5);
\draw[gray!40, thick] (hintEL) to[bend right=15] ++(1, .5);

\node[font=\Large, gray!50] at (-6.5, -8.3) {$\cdots$};
 
% Hint edge vertex on right (edges from color i' to other colors)
\node[enode, fill=violet!5, draw=gray!50, font=\small] (hintER) at (5.5, -8.3) {};
\draw[gray!40, thick] (hintER) -- (b1ip);
\draw[gray!40, thick] (hintER) -- (b2ip);
\draw[gray!40, thick] (hintER) -- ++(.5, -.5);
\draw[gray!40, thick] (hintER) -- ++(-.5, -.5);
\draw[gray!40, thick] (hintER) to[bend left=15] ++(-1, .5);

\node[font=\Large, gray!50] at (6.5, -8.3) {$\cdots$};

% =====================================================
% Dots to indicate more color classes between the two
% =====================================================
\node[font=\Large] at (-2.5, -3) {$\cdots$};
\node[font=\Large] at ( 2.5, -3) {$\cdots$};
 
\end{tikzpicture}

%% file: Sections/tree-reduction.tex
Moving on, we argue that the exponential dependency on the number of colors in \cref{thm:twalg} is necessary.
To do so, we present a reduction from \VC, loosely inspired by a reduction of~\cite{fsttcs/BanikDMMNMRRS24}.

\begin{theorem}\label{thm:nphard:trees}
    {\MCS} cannot be solved in time $2^{o(n)}$ under the \textup{ETH},
    even on trees of treedepth at most $3$.
\end{theorem}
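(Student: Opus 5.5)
We reduce from \VC, under the ETH in the form that \VC on $N$-vertex graphs with $M=\bO(N)$ edges (e.g.\ cubic or bounded-degree graphs, via the standard linear reduction from \ThreeSAT) cannot be solved in time $2^{o(N)}$. Given such a graph $G$ and budget $\ell$, we construct a tree $T$ of depth at most $3$ (so $\td(T)\le 3$), with $\bO(N+M)$ vertices, and a number of colors linear in $N+M$. This linear size is what matters: a $2^{o(|V(T)|)}$ algorithm for \MCS on $T$ would then solve \VC in $2^{o(N)}$ time.

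The tree is a star-of-stars. A center vertex $z$ receives its own private color. For every vertex $v\in V(G)$ and every edge $e\in E(G)$, color classes are introduced as follows.

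\begin{itemize}
    \item \emph{Edge gadgets.} For every edge $e$, introduce a color $\kappa_e$.
    \item \emph{Vertex gadgets.} For every $v$, attach to $z$ a child $y_v$. Below $y_v$ hang leaves $x_{v,e}$ of color $\kappa_e$, one for each edge $e$ incident to $v$. The vertex $y_v$ gets a color $\lambda_v$, and $y_v$ also receives a private pendant leaf of color $\lambda_v$ if needed.
    \item \emph{Forcing the center.} Attach private leaves so that $z$ must be selected. The base cost of the instance is one vertex per color class that occurs exactly once, since every color appearing in $T$ must have a selected representative.
\end{itemize}

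The intended semantics is that selecting $y_v$ (at the cost of an extra vertex) corresponds to putting $v$ in the vertex cover. Each leaf $x_{v,e}$ must be satisfied: its closest selected vertex is either $x_{v,e}$ itself, or $y_v$ at distance $1$ if selected, or otherwise something at distance $\ge 2$ through $y_v$. The colors are arranged so that, when $y_v$ is selected and the color $\lambda_v$ is made to match via the $y_v$ level, the ties at $y_v$ behave correctly. Concretely, I would give each edge color exactly two leaves, $x_{u,e}$ and $x_{v,e}$, at distance $4$ from each other through $z$. The budget is then set so that one selection per edge color is forced, plus $\ell$ extra selections. An edge leaf that is not selected must find its twin at distance $4$ with nothing of another color closer. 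This is achievable exactly when the $y$-vertex on its side is unselected and the other side's configuration permits it. Choosing which endpoint's leaf is selected encodes the covering endpoint.

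The main obstacle is calibrating these distances so that the cheap solutions correspond precisely to vertex covers of size $\ell$. The same difficulty appears in the gadgets of \cref{lem:reduction:unweighted-c-pw-fvs-k}, except that here the depth must stay bounded, so we cannot use long paths and must use extra colors instead. My first attempt would follow the structure of Banik et al.~\cite[Section~3]{fsttcs/BanikDMMNMRRS24}, simplified to this three-level tree. I would then verify the two directions separately:
\begin{itemize}
    \item \emph{Forward direction.} A vertex cover $X$ yields a solution consisting of $z$, the vertices $y_v$ for $v\in X$, and one leaf per edge color on a covered side. Every leaf is checked by a short case analysis on distances $1,2,3,4$.
    \item \emph{Reverse direction.} A solution within budget must contain $z$ and at least one leaf per edge color. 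Exchange arguments normalize it so that the extra selections are $y$-vertices, and an uncovered edge would leave some leaf $x_{u,e}$ seeing a wrongly colored vertex first.
\end{itemize}
Finally, the parameter count: $|V(T)|=\bO(N+M)=\bO(N)$, the depth is $3$, and the reduction runs in polynomial time.
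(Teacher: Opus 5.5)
Your proposal has a genuine gap. The reduction is never pinned down (you flag ``calibrating these distances'' as the main obstacle yourself), and the gadget you do sketch encodes nothing.

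Once the center $z$, which has a private color, is forced into the solution, every unselected leaf $x_{v,e}$ has a selected vertex of another color at distance $2$. Meanwhile, the only vertex of color $\kappa_e$ within distance $2$ of $x_{v,e}$ is $x_{v,e}$ itself: its siblings carry other edge colors, and $y_v$ and its pendant carry $\lambda_v$. So the twin at distance $4$ can never satisfy it, and \emph{every} edge leaf must be selected in every consistent subset. Selecting $y_v$ makes this worse, since $x_{v,e}$ then sees the wrongly colored $y_v$ at distance $1$.

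Concretely, your forward-direction set is not consistent. For an edge $e=uv$ with $u\notin X$, the unselected leaf $x_{u,e}$ is unsatisfied. For $v\notin X$, the vertex $y_v$ sees $z$ at distance $1$ with no selected vertex of color $\lambda_v$ as close. The intended semantics ``select $y_v$ iff $v$ is in the cover'' is therefore backwards. Selecting a vertex adjacent to edge-colored leaves forces those leaves, and a selected center destroys every long-range tie.

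The idea you are missing is to keep the hub \emph{unselected} yet satisfied, and to run the covering test as a tie at distance $3$ through it. The paper gives the hub $u$ a same-colored twin $u'$ that is selected instead, so vertices adjacent to $u$ see selected vertices only at distance at least $2$.

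\begin{itemize}
\item \emph{Vertex gadget.} Each $v_i$ gets a star centered at $x_i^1$, with $x_i^1$ adjacent to $u$. The star has a pendant $x_i^2$ of the same color $c_i^x$ and two leaves $\ell_i,\bar\ell_i$ of a fresh color. Selecting $x_i^1$ (meaning $v_i\in A$) costs exactly one more vertex than selecting $x_i^2$, because then both $\ell$-leaves are forced.
\item \emph{Edge gadget.} Each edge $e_j=v_av_b$ gets two paths $y_j^a w_j^a u$ and $y_j^b w_j^b u$. The $w$'s share a fresh color, and, crucially, the occurrence vertices $y_j^a,y_j^b$ carry the \emph{vertex} colors $c_a^x,c_b^x$. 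This is the opposite of your coloring, where edge colors hang below vertex gadgets.
\item \emph{The covering test.} Some $w_j^i$ and its leaf $y_j^i$ must be selected. The other occurrence vertex, say $y_j^a$, then sees the selected $w_j^b$ at distance $3$. It is satisfied at no extra cost exactly when $x_a^1$, also at distance $3$, is selected.
\end{itemize}

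With $k=1+2n+q+2m$, counting lower bounds gives $|A|+\kappa\le q$, where $\kappa$ is the number of edges not covered by $A$; this yields a vertex cover. Since $|V(T)|=2+4n+4m$ and $T-u$ is a star forest, the standard $2^{o(n+m)}$ ETH bound for \textsc{Vertex Cover} transfers directly. No restriction to bounded-degree graphs is needed.
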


\begin{proof}
    Let $(G,q)$ be an instance of \VC, where
    $V(G) = \setdef{v_i}{i\in[n]}$ and $E(G) = \setdef{e_j}{j\in[m]}$.

    \paragraph{Construction.}
    First, introduce adjacent vertices $u,u'$ and give them a new common color $c^u$.
    For every $i\in[n]$, introduce a star on vertex set
    $X_i \coloneq \{x_i^1,x_i^2,\ell_i,\bar{\ell}_i\}$,
    centered at $x_i^1$. Furthermore, connect $x_i^1$ to $u$.
    Give $x_i^1,x_i^2$ a fresh common color $c_i^x$,
    and to $\ell_i,\bar{\ell}_i$ another fresh common color $c_i^\ell$.

    For every edge $e_j=v_a v_b$, add two length-two paths
    $(y_j^a,w_j^a,u)$ and $(y_j^b,w_j^b,u)$. Give $w_j^a,w_j^b$ a fresh
    common color $c_j^w$, and give $y_j^a$ and $y_j^b$ colors $c_a^x$ and
    $c_b^x$, respectively. All colors declared fresh are pairwise distinct.
    Let $Q_j \coloneq\{y_j^a,w_j^a,y_j^b,w_j^b\}$.

    This completes the construction of the graph.
    The resulting graph $T$ is a tree on $2+4n+4m=\bO(n+m)$ vertices.
    See \cref{fig:nphard:vertex-integrity-trees-appendix} for an illustration.
    Every vertex is at distance at most two from $u$, so $T$ has diameter at most four.
    Furthermore, $T - u$ is a forest whose every component is a star of order at most four,
    hence $T$ has vertex integrity at most five and treedepth at most three.
    Set $k \coloneq 1+2n+q+2m$.

    \begin{figure}
        \centering
        \resizebox{0.82\textwidth}{!}{\input{graphics/trees-reduction.tex}}
        \caption{Part of the construction in the proof of \cref{thm:nphard:trees}.
        Squared vertices are selected.
        %Fully labeled version of \cref{fig:nphard:vertex-integrity-trees}.
        }
        \label{fig:nphard:vertex-integrity-trees-appendix}
    \end{figure}

    \paragraph{Correctness: \VCshort $\to$ \MCSshort.}
    Let $A\subseteq V(G)$ be a vertex cover of $G$ of size at most $q$.
    We construct a set $S \subseteq V(T)$ as follows.
    Start with $u'\in S$. For each $i\in[n]$, add
    \[
        \begin{cases}
            x_i^1,\ell_i,\bar{\ell}_i, & v_i\in A,\\
            x_i^2,\bar{\ell}_i,        & v_i\notin A.
        \end{cases}
    \]
    For an edge $e_j=v_a v_b$, choose an endpoint in $A$, let $v_i\in\{v_a,v_b\}$ be its
    other endpoint, and add $w_j^i,y_j^i$ to $S$. Consequently,
    \[
        |S|=1+2n+|A|+2m\leq k.
    \]

    It remains to verify consistency. Every selected vertex is satisfied, so
    consider only the unselected vertices. The hub $u$ is satisfied by $u'$ at
    distance one. If $v_i\in A$, the only unselected vertex of $X_i$ is
    $x_i^2$, which is adjacent to the same-colored $x_i^1\in S$. If
    $v_i\notin A$, then $x_i^1$ is satisfied by the adjacent $x_i^2\in S$,
    while $\ell_i$ is at distance two from both $x_i^2$ and the same-colored
    $\bar{\ell}_i\in S$. No selection outside $X_i$ is closer to $\ell_i$.

    Finally, consider an edge $e_j=v_a v_b$, where $v_a \in A$, $w_j^b,y_j^b \in S$,
    and $w_j^a, y_j^a \notin S$. In that case, both $w_j^b$ and $x_a^1$ belong to $S$
    and are neighbors of $u$, thus satisfying $w_j^a$ and $y_j^a$ respectively.
    Consequently, every vertex is satisfied, and $S$ is consistent.

    \paragraph{Correctness: \MCSshort $\to$ \VCshort.}
    Let $S$ be a consistent subset of $T$ of size at most $k$,
    and define $A \coloneq \setdef{v_i \in V(G)}{x_i^1 \in S}$.

    The central color forces at least one of $u,u'$ into $S$.
    For every $i\in[n]$, at least one
    of $\ell_i,\bar{\ell}_i$ is selected, as their color occurs nowhere else.
    At least one of $x_i^1,x_i^2$ is selected as well: otherwise $x_i^1$ would
    see either $\ell_i$ or $\bar{\ell}_i$ at distance one, whereas every selected vertex
    of color $c_i^x$ would be at distance at least three. Finally, if
    $v_i\in A$, then both $\ell_i,\bar{\ell}_i$ are selected, as otherwise
    the unselected leaf would see $x_i^1$ at distance one.
    Thus every vertex gadget contributes at least two vertices, and each
    gadget indexed by a vertex of $A$ contributes at least three. Their total
    contribution is therefore at least $2n+|A|$.

    Consider an edge $e_j=v_a v_b$. Since $c_j^w$ occurs only on
    $w_j^a,w_j^b$, some $w_j^i$, $i \in \{a,b\}$, is selected. Its adjacent vertex
    $y_j^i$ must then also be selected, as it is a leaf attached to $w_j^i$ with different color.
    Thus every edge gadget contributes at least two vertices.

    Suppose in addition that neither endpoint of $e_j$ belongs to $A$. For
    each $r\in\{a,b\}$, every selected vertex of color $c_r^x$ outside $Q_j$
    is at distance at least four from $y_j^r$: the only variable-gadget vertex
    at distance three is the unselected $x_r^1$, while $x_r^2$ and all other
    occurrence vertices are at distance four. On the other hand, the selected
    vertex $w_j^i$ is at distance at most three from both occurrence
    vertices. Hence both $y_j^a,y_j^b$ must be selected, and $Q_j$ contributes
    at least three vertices.

    Let $\kappa$ be the number of edges with neither endpoint in $A$. The central
    pair contributes at least one vertex, so
    \[
        1+(2n+|A|)+(2m+\kappa)\leq |S|\leq k.
    \]
    Hence $|A|+\kappa \leq q$. Add to $A$ one arbitrary endpoint of every edge not
    already covered by $A$. The resulting set is a vertex cover of $G$ of size
    at most $|A|+\kappa \leq q$.

    \paragraph{Wrap-up.}
    Correctness follows from the preceding paragraphs.
    Furthermore, a $2^{o(|V(T)|)}$-time algorithm for \MCSshort would yield a $2^{o(n+m)}$-time algorithm
    for \VC, contradicting the ETH.
\end{proof}

We remark that the construction underlying the proof of \cref{thm:nphard:trees}
in fact produces a slightly stronger result: instead of treedepth, it also bounds the deletion
distance to connected components of bounded size.
More precisely, the result would hold even when bounding a graph-theoretic measure
called \emph{vertex integrity}~\cite{LampisM24,GimaHKMOO25}.
Moreover, the treedepth bound in \cref{thm:nphard:trees} is tight: every graph of treedepth at most two is a star forest,
and {\MCS} can be solved in polynomial time on such graphs as it suffices to consider, independently for each
star, whether its center is selected.

%% file: graphics/trees-reduction.tex
\usetikzlibrary{patterns}
\begin{tikzpicture}[
    pnode/.style={draw, circle, inner sep=0pt, minimum size=3.2mm, thick},
    central/.style={pnode, fill=white, minimum size=5mm},
    varA/.style={pnode, fill=green!45},
    varB/.style={pnode, fill=cyan!45},
    edgecol/.style={pnode, fill=gray!35},
    stab/.style={pnode, fill=white, minimum size=2.5mm,
        pattern=crosshatch},
    stabA/.style={stab, pattern color=green!75!black},
    stabB/.style={stab, pattern color=teal!80!black},
    groupbox/.style={rounded corners=4pt, draw=gray!60, dashed,
        inner sep=4pt},
    selected/.style={draw=black, rectangle, line width=0.65pt,
        inner sep=0.35mm},
    every edge/.style={thick},
    font=\scriptsize,
]

% Central edge and hub.
\node[central] (u) at (0,0) {$u$};
\node[central] (up) at (0,0.8) {$u'$};
\draw (u) -- (up);
\begin{scope}[on background layer]
    \node[groupbox, fill=blue!5, fit=(u)(up),
        label={[font=\tiny]above:central pair}] {};
\end{scope}

% Gadget for v_1 outside the vertex cover.
\node[varA, label={[font=\tiny]below:$x_1^1$}] (A1) at (-2.3,1.0) {};
\node[varA, label={[font=\tiny]right:$x_1^2$}] (A2) at (-2.3,2.0) {};
\draw (A1) -- (A2);
\draw (A1) -- (u);
\node[stabA, label={[font=\tiny]left:$\ell_1$}] (Al) at (-3.1,1.0) {};
\node[stabA, label={[font=\tiny]right:$\bar\ell_1$}] (Abl) at (-1.5,1.0) {};
\draw (Al) -- (A1) -- (Abl);
\begin{scope}[on background layer]
    \node[groupbox, fill=green!5, inner xsep=18pt, inner ysep=15pt,
        fit=(Al)(A2)(Abl),
        label={[font=\scriptsize]above:$v_1\notin A$}] {};
\end{scope}

% Gadget for v_2 in the vertex cover.
\node[varB, label={[font=\tiny]below:$x_2^1$}] (B1) at (2.3,1.0) {};
\node[varB, label={[font=\tiny]right:$x_2^2$}] (B2) at (2.3,2.0) {};
\draw (B1) -- (B2);
\draw (B1) -- (u);
\node[stabB, label={[font=\tiny]right:$\ell_2$}] (Bl) at (3.1,1.0) {};
\node[stabB, label={[font=\tiny]left:$\bar\ell_2$}] (Bbl) at (1.5,1.0) {};
\draw (Bl) -- (B1) -- (Bbl);
\begin{scope}[on background layer]
    \node[groupbox, fill=cyan!5, inner xsep=18pt, inner ysep=15pt,
        fit=(Bbl)(B2)(Bl),
        label={[font=\scriptsize]above:$v_2\in A$}] {};
\end{scope}

\node[font=\large] at (4.2,1.5) {$\cdots$};

% Edge gadget e_1=v_1v_2, covered through v_2.
\node[varA,label={[font=\tiny]west:$y_j^1$}] (ya) at (-1.65,-1.4) {};
\node[edgecol,label={[font=\tiny]east:$w_j^1$}] (wa) at (-0.75,-1.4) {};
\node[edgecol,label={[font=\tiny]west:$w_j^2$}] (wb) at (0.75,-1.4) {};
\node[varB,label={[font=\tiny]east:$y_j^2$}] (yb) at (1.65,-1.4) {};
\draw (ya) -- (wa) -- (u) -- (wb) -- (yb);
\begin{scope}[on background layer]
    \node[groupbox, fill=gray!8, inner xsep=12pt, fit=(ya)(wa)(wb)(yb),
        label={[font=\scriptsize]below:edge $e_j=v_1 v_2$ covered through $v_2$}] {};
\end{scope}
\node[font=\large] at (2.8,-1.4) {$\cdots$};

% Selected vertices.
\node[selected, fit=(up)] {};
\node[selected, fit=(A2)] {};
\node[selected, fit=(Abl)] {};
\node[selected, fit=(B1)] {};
\node[selected, fit=(Bl)] {};
\node[selected, fit=(Bbl)] {};
\node[selected, fit=(wa)] {};
\node[selected, fit=(ya)] {};

\end{tikzpicture}

%% file: Sections/vc-algos.tex
\section{Parameterization by Vertex Cover Number}
In this section we consider the parameterization by vertex cover number.
As the main result of this section, we significantly improve upon the previous $\vc^{\bO(\vc)} n^{\bO(1)}$ algorithm for \MCSshort~\cite{BanikPRRS26}, and present one with running in time $5^{\vc} n^{\bO(1)}$.
%Moreover, from known lower bounds, it follows that the single-exponential dependence on the first term of our algorithm is unavoidable under the ETH.
We then extend our ideas to solve
%using similar ideas, we extend our algorithm to the weighted setting, where we show that
\WMCSshort in time $n^{\bO(\vc)}$, matching the lower bound of \cref{thm:whard:weighted-vc}.

We first  fix some notation.
For non-empty sets $M,S\subseteq V(G)$, the \emph{distance vector from $M$ to $S$} is
$    \bigl(\dist(v,S)\bigr)_{v\in M}.
$

On a high level, our algorithms build on the approach of \cite[Theorem~1.9]{BanikPRRS26}:
enumerate candidate distance vectors from a vertex cover $M$ to a solution, and optimize over the solutions realizing each candidate.
For \MCSshort, we achieve a single-exponential running time by combining a subroutine for fixed distance vectors with a bound on the set of candidate vectors.
%In order to achieve the improved running time we introduce several new ideas and make use of a more refined analysis. For unweighted graphs, we combine a subroutine for a fixed distance vector with a small set of candidate vectors.
For \WMCSshort---where the number of relevant vectors may depend on $n$---we instead give a more direct algorithm based on the same distance-vector viewpoint and a direct enumeration of small witnesses. We capture the unifying insight into distance vectors in the following lemma.

\begin{lemma}\label{lem:vc:weighted-fixed-distance-vector}
    Let $(G,w,k,\col)$ be an instance of \WMCS.
    Furthermore, let $M \subseteq V(G)$ be a vertex cover of $G$,
    and let $\mathbf{d}=(d_u)_{u\in M}$,
    where $g$ entries of $\mathbf{d}$ are zero.
    Then a minimum-size consistent subset of $G$ whose distance vector from
    $M$ equals $\mathbf{d}$ can be found, or its non-existence established,
    in $\bO(2^{|M|-g}\cdot |M|\cdot n^2)$ time.
\end{lemma}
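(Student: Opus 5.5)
The plan is to exploit that, once $\mathbf{d}$ is fixed, almost every distance to the solution is determined. Let $Z=\setdef{u\in M}{d_u=0}$ (so $|Z|=g$), $M^+=M\setminus Z$ and $I=V(G)\setminus M$, an independent set. Any $S$ with distance vector $\mathbf{d}$ satisfies $S\cap M=Z$, so $S=Z\cup Y$ with $Y\subseteq I$. We first compute all-pairs distances in $G$ with the same shortest-path routine used for \cref{thm:twalg}. We keep only \emph{admissible} candidates: vertices $y\in I$ with $\dist(u,y)\ge d_u$ for all $u\in M$. We also check that $\dist(u,z)\ge d_u$ for $u\in M$, $z\in Z$; if this fails, we report non-existence.

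For admissible $y$, let $T(y)=\setdef{u\in M^+}{\dist(u,y)=d_u}$, and define $T(z)$ analogously for $z\in Z$. Since every neighbour of a vertex $x\in I$ lies in $M$, an unselected $x$ has $\dist(x,S)=D_x\coloneq\min_{u\in N(x)}\bigl(w(xu)+d_u\bigr)$, which is independent of $Y$. Call $u\in N(x)$ \emph{tight} for $x$ if it attains $D_x$. Let $R_u$ denote the colours of selected vertices at distance $d_u$ from $u$. The routine verification is that $S$ is a consistent subset with vector $\mathbf{d}$ if and only if three conditions hold:
\begin{itemize}
\item[(i)] $\col(u)\in R_u$ for every $u\in M^+$; this already forces $d_u$ to be attained.
\item[(ii)] Vertices of $Z$ are satisfied trivially.
\item[(iii)] Every unselected $x\in I$ has a tight neighbour $u$ with $\col(x)\in R_u$, or a tight neighbour in $Z$ of colour $\col(x)$.
\end{itemize}
For (iii) we use that a selected vertex at distance $d_u$ from a tight $u$ is at distance exactly $D_x$ from $x$, and conversely every closest selected vertex of $x$ is reached through a tight neighbour.

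The $2^{|M|-g}$ factor will come from guessing a set $P\subseteq M^+$: the vertices $u$ whose set $R_u$ is intended to contain colours other than $\col(u)$. Since (i) forces every $u\in M^+$ to carry its own colour, for $u\notin P$ we intend $R_u=\{\col(u)\}$ exactly. This excludes every admissible candidate $y$ with some $u\in T(y)\setminus P$ and $\col(y)\ne\col(u)$; it likewise kills the guess if some $z\in Z$ violates this. For vertices $u\in P$ we want to maximise their power, so we select, for free in terms of constraints, whatever helps. Given $P$, an unselected $x\in I$ is satisfied if it has a tight $u\notin P$ with $\col(u)=\col(x)$, or a tight $u\in Z$ of its colour. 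Otherwise it must either be selected itself, provided it is admissible and allowed, or have a tight $u\in P$ that receives a selected vertex of colour $\col(x)$ at distance $d_u$. This is where I would try to show that the demands decouple by pairs $(u,\gamma)$ with $u\in P$. For each such pair, the cheapest way to put $\gamma$ into $R_u$ is one allowed candidate. The remaining covering problem should then reduce, after an exchange argument, to a greedy choice processed vertex by vertex of $M^+$. Each of the $|M|$ rounds scans $\bO(n^2)$ candidate/vertex pairs, which gives the $|M|\cdot n^2$ factor per guess.

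The main obstacle is precisely this decoupling step. A single selected $y$ contributes its colour to all of $T(y)$ simultaneously, so the residual problem resembles set cover. Making it polynomial requires arguing that, within the fixed guess $P$, choosing $y$ never hurts any other constraint, since the forbidden interactions were excluded by $P$. We would then need that for each demanding $x$ one may always charge a distinct selected vertex, either $x$ itself or a single witness, without loss of optimality. If this exchange argument fails, the fallback is to strengthen the guess, for example guessing for each $u\in M^+$ whether it is witnessed by a vertex of $I$ or only through $Z$, while still keeping the count at $2^{|M|-g}$.
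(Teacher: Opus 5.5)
There is a genuine gap. Your reformulation (admissible candidates, tight neighbours, conditions (i)--(iii)) is correct and matches the paper's characterization. The algorithmic core is missing, however, and the route you sketch for it cannot work. After guessing $P$ you hope to solve the residual selection problem greedily in polynomial time.

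Take a monochromatic bipartite graph with sides $M$ and $I$, unit weights, and $d_u=1$ for all $u\in M$. There every nonempty set is consistent and the only meaningful guess is $P=\varnothing$. A minimum set realizing $\mathbf{d}$ is then exactly a minimum subset of $I$ dominating $M$, which is an arbitrary \textsc{Set Cover} instance with universe $M$. So a polynomial per-guess routine would solve \textsc{Set Cover} in polynomial time. Condition (i) alone already carries this hardness, and neither your exchange argument nor your fallback guess (whether $u$ is witnessed from $I$ or only via $Z$) removes it.

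The guess $P$ is also aimed at the wrong thing. For admissible selections all conditions are monotone in the sets $R_u$: extra colours at a cover vertex never hurt anyone. Forcing $R_u=\{\col(u)\}$ for $u\notin P$ therefore gains nothing.

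The paper fills this gap with three ideas.
\begin{enumerate}
\item The right decoupling is by colour alone. A cover vertex of colour $x$ must be witnessed at distance $d_u$ by a selected vertex of colour $x$. An unselected independent vertex of colour $x$ needs one of its tight neighbours to be witnessed in this way. Neither condition mentions selected vertices of other colours, so each colour class is optimized independently. Within one colour the pairs $(u,\gamma)$ remain coupled through the sets $T(y)$, which is precisely the set-cover structure you noticed.
\item For each colour $x$, guess the exact set $C\subseteq M_{>0}$ witnessed by the selected independent vertices of colour $x$. This breaks the circularity in your demands. It fixes the set of independent vertices of colour $x$ with no tight neighbour in $W(V_x\cap M_0)\cup C$, and all of these must be selected.
\item Cover the part of $C$ not yet witnessed by these forced vertices with a genuine subset DP for \textsc{Set Cover} over $M_{>0}$. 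This table is computed once per colour in $\bO(2^{|M|-g}\cdot|M|\cdot n)$ time and then only looked up for each of the $2^{|M|-g}$ certificates. The exponential factor is thus paid additively, rather than multiplied by a per-guess procedure.
\end{enumerate}
As a minor point, all-pairs shortest paths via Floyd--Warshall costs $\Theta(n^3)$, which can exceed the claimed bound when $|M|-g$ is small. Running Dijkstra from the vertices of $M$ suffices.
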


\begin{proof}
    We say that a set $S \subseteq V(G)$ \emph{realizes} $\mathbf{d}$ if
    it satisfies $\dist(u,S)=d_u$ for every $u\in M$.

    Let $I \coloneq V(G) \setminus M$.
    Since $M$ is a vertex cover, $I$ is an independent set and
    $N(v) \subseteq M$ for every $v \in I$.
    We compute $\dist(u,v)$ for all $u \in M$ and $v \in V(G)$ by running
    Dijkstra's algorithm once from each vertex of $M$.

    Define $M_0 \coloneq \setdef{u\in M}{d_u=0}$
    and $M_{>0} \coloneq M\setminus M_0$.
    Moreover, define
    \[
        I_\text{out} \coloneq
        \setdef{v\in I}{
            \exists \, u \in M \colon \dist(u,v) < d_u
        }.
    \]
    We reject if $\dist(u,M_0)<d_u$ for some $u\in M$.
    Otherwise, let
    \[
        A \coloneq M_0\cup(I\setminus I_\text{out}).
    \]

    \begin{claim}\label{claim:vc:admissible-range}
        If $S$ realizes $\mathbf{d}$, then $M_0\subseteq S\subseteq A$.
        Conversely, every $S \subseteq A$ satisfies $\dist(u,S)\geq d_u$
        for every $u\in M$.
    \end{claim}

    \begin{claimproof}
        Since all edge weights are positive, $\dist(u,S)=0$ holds if and only
        if $u\in S$. Thus a set realizing $\mathbf{d}$ satisfies
        $S\cap M=M_0$.
        Also, no vertex of $I_\text{out}$ can be selected: if
        $v\in I_\text{out}\cap S$, then some $u\in M$ has
        $\dist(u,S)\leq \dist(u,v)<d_u$.
        Hence $M_0\subseteq S\subseteq A$.

        Conversely, let $S \subseteq A$.
        Since $\dist(u,S_1) \geq \dist(u,S_2)$ whenever $S_1\subseteq S_2$, it suffices to prove
        the claim for $S=A$.
        If $u \in M_0$, then $\dist(u,A) = 0 = d_u$.
        Alternatively, let $u \in M_{>0}$.
        The initial rejection test guarantees that $\dist(u,M_0) \geq d_u$,
        while by the definition of $I_\text{out}$,
        $\dist(u,v) \geq d_u$ for every $v \in I \setminus I_\text{out}$.
        This completes the proof.
    \end{claimproof}

    For each $v\in I$, define
    \[
        \delta_v \coloneq
        \min_{u\in N(v)}
        \bigl\{w(\{v,u\})+d_u\bigr\},
    \]
    and let
    \[
        N^*(v) \coloneq
        \setdef{u\in N(v)}{
            w(\{v,u\})+d_u=\delta_v
        }.
    \]
    Intuitively, $\delta_v$ denotes the distance of $v$ to $S$ when $v \notin S$ (so every path to $S$ goes via $M$),
    while $N^*(v)$ denotes the set of its neighbors through which $v$ realizes $\delta_v$.
    For each $v \in V(G)$, define
    \[
        W(v) \coloneq
        \setdef{u\in M}{\dist(u,v)=d_u}.
    \]
    Thus $W(v)$ is the set of cover vertices that see $v$ at the same distance that they see the solution $S$.
    Thus, if $v\in S$, then every vertex in $W(v)$ sees color $\col(v)$ in $S$. We hence say that $W(v)$ is \emph{witnessed} by selecting $v$.
    For $V' \subseteq V(G)$, write
    \[
        W(V') \coloneq \bigcup_{v \in V'}W(v).
    \]

    \begin{claim}\label{claim:vc:improved-color-characterization}
        Let $M_0\subseteq S\subseteq A$, and define
        $S_x \coloneq S\cap V_x$ for every color $x\in[c]$.
        Then $S$ is consistent and realizes $\mathbf{d}$ if and only if, for
        every color $x\in[c]$:
        \begin{enumerate}[label=(\roman*),itemsep=2pt]
            \item all cover vertices of color $x$ are satisfied via a vertex of color $x$, that is,
            \[
                V_x\cap M \subseteq W(S_x);
            \]\label{item:vc:improved-own-color}
            \item all independent vertices of color $x$ that are not selected see $x$ via a neighbor in the cover,
            that is, for every $v\in V_x\cap(I\setminus S_x)$,
            \[
                N^*(v)\cap W(S_x)\neq\varnothing.
            \]\label{item:vc:improved-independent}
        \end{enumerate}
    \end{claim}

    \begin{claimproof}
        Suppose first that Conditions~\ref{item:vc:improved-own-color} and~\ref{item:vc:improved-independent} hold.
        By \cref{claim:vc:admissible-range}, $\dist(u,S)\geq d_u$ for every
        $u\in M$.
        For $u\in M$, Condition~\ref{item:vc:improved-own-color} with
        $x=\col(u)$ gives a vertex $s\in S_x$ such that
        $u\in W(s)$, that is, $\dist(u,s)=d_u$.
        Hence $\dist(u,S)=d_u$, and $u$ is satisfied.

        Every selected vertex of $I$ is satisfied by itself.
        Consider an unselected vertex $v\in I$ and let $x \coloneq \col(v)$.
        By Condition~\ref{item:vc:improved-independent}, there are vertices
        $u^\star \in N^*(v)$ and $s\in S_x$ such that $u^\star \in W(s)$.
        Every path from $v$ to $S$ starts with a neighbor $u \in N(v) \subseteq M$,
        and we already proved $\dist(u,S)=d_u$. Hence
        \[
            \dist(v,S)
            \geq
            \min_{u\in N(v)}\{w(\{v,u\})+d_u\}
            =
            \delta_v.
        \]
        On the other hand, the path from $v$ to $s$ through $u^\star$ gives
        \begin{align*}
            \dist(v,S)
            &\leq \dist(v,s)\\
            &\leq w(\{v,u^\star\})+\dist(u^\star,s)\\
            &=w(\{v,u^\star\})+d_{u^\star}
            =\delta_v.
        \end{align*}
        It follows that $\dist(v,S) = \delta_v = \dist(v,s)$,
        and since $s$ has color $x$, the vertex $v$ is satisfied.

        Conversely, suppose that $S$ is consistent and realizes $\mathbf{d}$.
        Condition~\ref{item:vc:improved-own-color} follows because every
        vertex of $M$ is satisfied at distance $d_u$ by a selected vertex of
        its own color.

        Now consider $v\in V_x\cap(I\setminus S_x)$.
        Since $v$ is satisfied, there is an $s\in S_x$ such that
        $\dist(v,s)=\dist(v,S)$.
        As $S$ realizes $\mathbf{d}$, every neighbor $u \in N(v)$ satisfies
        $\dist(u,S)=d_u$.
        Since $I$ is independent, $N(v) \subseteq M$, and
        \begin{align*}
            \dist(v,s)
            &= \dist(v,S)\\
            &= \min_{u \in N(v)}\{w(\{v,u\})+\dist(u,S)\}\\
            &= \min_{u\in N(v)}\{w(\{v,u\})+d_u\}\\
            &= \delta_v.
        \end{align*}
        Let $u \in N^*(v)$, that is, $u$ is a neighbor of $v$ attaining the minimum in the last expression.
        Then,
        \begin{align*}
            \dist(v,s)
            &= w(\{v,u\}) + \dist(u,s)\\
            &\ge w(\{v,u\})+\dist(u,S)\\
            &= w(\{v,u\})+d_u\\
            &= \delta_v.
        \end{align*}
        Since $\dist(v,s)=\delta_v$, we have equality throughout, and in particular
        $\dist(u,s)=d_u$, which means $u\in W(s)$.
        Consequently, $u \in N^*(v) \cap W(s)$,
        and Condition~\ref{item:vc:improved-independent} holds.
    \end{claimproof}

    The two conditions of
    \cref{claim:vc:improved-color-characterization} are local to the set
    $S_x$ of one color. Since the color classes partition $V(G)$, the
    objective is additive over the colors. We can therefore optimize each
    non-empty color class independently and take the union of the selected
    sets.

    \paragraph{Per-color subroutine.}
    Fix a non-empty color class $V_x$.
    The vertices of $V_x\cap M_0$ belong to every subset realizing $\mathbf{d}$.
    Let
    \[
        Z_x \coloneq W(V_x \cap M_0)
    \]
    be the set of cover vertices already witnessed by them.
    Note that a vertex of $M_0$ can be witnessed at distance zero only by itself,
    thus
    \[
        Z_x \cap M_0 = V_x \cap M_0.
    \]

    For every subset of $M_{>0}$, we compute a minimum-cardinality subset of
    $V_x \cap (I \setminus I_{\textrm{out}})$ that witnesses it.
    In particular, we build a \textsc{Set Cover} table as follows.
    For every $U \subseteq M_{>0}$, let $\cover_x(U)$ be the
    minimum size of a set
    \[
        Y_x \subseteq V_x \cap(I \setminus I_\text{out})
    \]
    such that $U \subseteq W(Y_x)$.
    If no such set exists, let $\cover_x(U) \coloneq \infty$.
    The values are computed by
    \[
        \cover_x(\varnothing)=0
    \]
    and, for non-empty $U\subseteq M_{>0}$,
    \[
        \cover_x(U)
        =
        1+
        \min_{\substack{
            v \in V_x \cap (I\setminus I_\text{out})\\
            W(v) \cap U \neq \varnothing
        }}
        \cover_x(U \setminus W(v)).
    \]
    An empty minimum is interpreted as $\infty$.
    Since $W(v) \cap U \neq \varnothing$, the recursive call is on a strict
    subset of $U$, so the table can be filled in increasing order of $|U|$.
    We store, for each finite entry, a set attaining the minimum.

    We now enumerate \emph{certificates} $C \subseteq M_{>0}$. The intended meaning is that for a solution $S$ we have $W(S_x\cap I)=C$, that is,
    the selected independent vertices of color $x$ witness precisely set $C$ in $M$. Thus every vertex $u \in Z_x \cup C$ satisfies $\dist(u,S)=d_u = \dist(u, S_x)$.
    % \simon{I adapted the definition and dropped the "while they may also witness additional vertices of $M_{>0}$" here because indeed we are looking for the $C$ where it witnessed exactly, right? It was this half-sentence which cause most of my confusion, since we speak here about intended meaning, and the intention is that $C$ covers it perfectly. Then also the sentence below the definition of $Q$ is correct. Please check!}
    % must \emph{witness} every vertex in
    % $Z_x \cup C$ \textcolor{blue}{(that is, every vertex $u \in Z_x \cup C$ satisfies $\dist(u,S)=d_u = \dist(u, S_x)$)}, while they may also witness additional vertices of $M_{>0}$.

    We only consider certificates satisfying
    \[
        V_x \cap M_{>0}\subseteq Z_x\cup C,
    \]
    because every cover vertex of color $x$ must be satisfied.

    For such a certificate, define
    \[
        Q_x^C \coloneq
        \setdef{v\in V_x\cap I}{
            N^*(v)\cap(Z_x\cup C)=\varnothing
        }.
    \]
    These are exactly the independent vertices of color $x$ that would fail
    Condition~\ref{item:vc:improved-independent} unless they were selected.
    Thus set
    \[
        F_x^C \coloneq (V_x\cap M_0)\cup Q_x^C
    \]
    to be the set of the vertices forced to be selected in this case.
    If $Q_x^C\cap I_\text{out}\neq\varnothing$, discard $C$, since it forces
    a forbidden vertex.

    The forced vertices may already witness part of $C$. Let the part of $C$ that is not yet witnessed by forced vertices be
    \[
        U_x^C \coloneq C\setminus W(F_x^C).
    \]
    To cover $U_x^C$, we use our precomputed table.
    If $\cover_x(U_x^C)$ is finite, let $Y_x^C$ be a stored set attaining $\cover_x(U_x^C)$ and set
    \[
         S_x^C \coloneq F_x^C \cup Y_x^C
    \]
   as well as $\cost_x(C) \coloneq |S_x^C| = |F_x^C| + |Y_x^C|$;
   the latter equality being due to the fact that $W(F_x^C) \cap C = \varnothing$, thus $F_x^C \cap Y_x^C = \varnothing$.
   We remark that $Y_x^C$ might cover vertices outside $Z_x\cup C$ as well, but in this case there is a certificate $C'\supset C$ of the same cost such that exactly the vertices in $Z_x\cup C'$ are witnessed.

    We retain a certificate of minimum cost and denote its set by
    $S_x$. If no certificate has finite cost for some color, we reject
    $\mathbf{d}$.

    \begin{claim}\label{claim:vc:certificate-soundness}
        For every certificate $C$ that is not discarded and has finite cost,
        the set $S_x^C$ satisfies
        Conditions~\ref{item:vc:improved-own-color}
        and~\ref{item:vc:improved-independent}.
    \end{claim}

    \begin{claimproof}
        We have $S_x^C\subseteq V_x\cap A$ and
        $V_x\cap M_0\subseteq S_x^C$.
        Moreover, every vertex of $Z_x\cup C$ is witnessed by $S_x^C$:
        the forced cover vertices witness $Z_x$, the forced independent vertices witness
        $C\setminus U_x^C$, and $Y_x^C$ witnesses $U_x^C$.
        Since
        \[
            V_x\cap M
            =
            (V_x\cap M_0)\cup(V_x\cap M_{>0})
            \subseteq Z_x\cup C,
        \]
        Condition~\ref{item:vc:improved-own-color} follows.

        Now let $v\in V_x\cap(I\setminus S_x^C)$.
        Then $v\notin Q_x^C$, so
        \[
            N^*(v)\cap(Z_x\cup C)\neq\varnothing.
        \]
        Since $Z_x\cup C\subseteq W(S_x^C)$, this implies
        Condition~\ref{item:vc:improved-independent}.
    \end{claimproof}

    \begin{claim}\label{claim:vc:certificate-optimality}
        The retained set $S_x$ has minimum size among all sets
        $S'_x\subseteq V_x\cap A$ such that $S'_x \cap M = V_x \cap M_0$ and
        Conditions~\ref{item:vc:improved-own-color}
        and~\ref{item:vc:improved-independent} hold with $S'_x$ in place of
        $S_x$.
    \end{claim}

    \begin{claimproof}
        Let $S'_x$ be any such feasible set, and put
        \[
            C' \coloneq W(S'_x \cap I).
        \]
        Since $S_x' \cap M = V_x \cap M_0$, we have $Z_x = W(S'_x \cap M)$, thus $W(S_x') = Z_x \cup C'$.

        By Condition~\ref{item:vc:improved-own-color},
        $V_x\cap M_{>0}\subseteq W(S_x') = Z_x\cup C'$, so $C'$ is a valid
        certificate.

        We next show that $Q_x^{C'}\subseteq S'_x$.
        If some $v\in Q_x^{C'}$ were not selected, then
        Condition~\ref{item:vc:improved-independent} would imply
        \[
            N^*(v)\cap W(S'_x)
            =
            N^*(v)\cap(Z_x\cup C')
            \neq\varnothing,
        \]
        contradicting the definition of $Q_x^{C'}$.
        Hence $F_x^{C'}\subseteq S'_x$, and in particular
        $Q_x^{C'}\cap I_\text{out}=\varnothing$; the certificate is not
        discarded.

        Every vertex of $U_x^{C'}$ is witnessed by a vertex of
        $S'_x\setminus F_x^{C'}$. Moreover,
        \[
            S'_x\setminus F_x^{C'}
            \subseteq V_x\cap(I\setminus I_\text{out}),
        \]
        so this set is admissible for the table defining $\cover_x$.
        Therefore
        \[
            \cover_x(U_x^{C'})
            \leq |S'_x\setminus F_x^{C'}|,
        \]
        and, as $F_x^{C'}\subseteq S_x'$,
        \[
            \cost_x(C')
            \leq
            |F_x^{C'}|+|S'_x\setminus F_x^{C'}|
            =
            |S'_x|.
        \]
        The retained certificate has minimum cost, and
        \cref{claim:vc:certificate-soundness} proves that its set is feasible.
        Since retained sets have size equal to their cost, $S_x$ is minimum.
    \end{claimproof}

    After processing every color, let
    \[
        S \coloneq \bigcup_{x\in[c]}S_x.
    \]
    By
    \cref{claim:vc:improved-color-characterization,claim:vc:certificate-soundness},
    the set $S$ is consistent and realizes $\mathbf{d}$.
    By \cref{claim:vc:certificate-optimality} and additivity over color
    classes, it has minimum size among all consistent sets realizing
    $\mathbf{d}$.

    \paragraph{Running time.}
    The distance preprocessing takes $\bO(|M|\cdot(m+n\log n))$ time.
    For each of the $c$ colors, we perform the following.
    First, $Z_x \coloneq W(V_x \cap M_0)$ can be computed in $\bO(|M|^2)$ time via computing $V_x \cap M_0$ in $\bO(|M|)$ time and computing $W(\cdot)$ via $\bO(|M_0| \cdot |M|)$ tests.
    We then fill the \textsc{Set Cover} table, which has $2^{|M_{>0}|}$ entries
    and tests at most $n$ candidates $v$ per entry, and for each of them computes $U\setminus W(v)$ in $\bO(|M|)$ time.
    Thus this step runs in time $\bO(2^{|M_{>0}|} \cdot |M| \cdot n)$.
    We enumerate $2^{|M_{>0}|}$ certificates. For each of them, the test whether $V_x\cap M_{>0} \subseteq Z_x \cup C$, the computation of $Q_x^C$ and $F_x^C$,
    as well as the test against $I_{\textup{out}}$ and the computation of $W(F_x^C)$ and $U_x^C$ can all be done in $\bO(|M| \cdot n)$ time.
    Thus, the subroutine takes $\bO(2^{|M_{>0}|} \cdot |M| \cdot n)$ time per non-empty color class.
    Since $|M_{>0}|=|M|-g$, the total running time is therefore $\bO(2^{|M|-g} \cdot |M| \cdot c \cdot n)$ plus the preprocessing time. For simplicity, we estimate the total by $\bO(2^{|M|-g} \cdot |M| \cdot n^2)$.
\end{proof}

\paragraph{Unweighted graphs.}
In an unweighted graph, all relevant distance vectors can be recovered from only $3^\vc$ signatures, representing whether each vertex cover vertex has distance $0$, $1$, or more than $1$ to the solution.
Combining this enumeration with \cref{lem:vc:weighted-fixed-distance-vector} yields the following FPT algorithm.

\begin{theorem}\label{thm:vc:unweighted-fpt}
    {\MCS} can be solved in time $\bO(5^\vc\cdot\vc\cdot n^2)$.
\end{theorem}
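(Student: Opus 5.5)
We first compute a minimum vertex cover $M$ of the unweighted graph $G$ in time $\bO(2^{\vc}\cdot n)$. Every consistent subset $S$ then has a distance vector $\mathbf{d}=(\dist(u,S))_{u\in M}$. We will enumerate a small family of candidate vectors, call \cref{lem:vc:weighted-fixed-distance-vector} on each, and return the smallest consistent set found. The budget is $\sum_{\mathbf{d}} 2^{|M|-g(\mathbf{d})}$, where $g(\mathbf{d})$ counts the zero entries of $\mathbf{d}$. For this to give $5^{\vc}$, the candidate vectors must correspond bijectively to \emph{types} $\tau\colon M\to\{0,1,{>}1\}$. Summing over types then gives
\[
  \sum_{\tau}2^{|\tau^{-1}(\{1,{>}1\})|}=(1+2+2)^{|M|}=5^{|M|},
\]
which, multiplied by the $\bO(|M|\cdot n^2)$ factor of the lemma, yields $\bO(5^{\vc}\cdot\vc\cdot n^2)$.

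The main step is to show that a type $\tau$ determines the whole distance vector, so that $\mathbf{d}$ can be computed from $\tau$ in polynomial time. Fix a type and let $M_0=\tau^{-1}(0)$ and $M_1=\tau^{-1}(1)$. The entries $d_u$ are $0$ on $M_0$ and $1$ on $M_1$. For $u$ with $\tau(u)={>}1$, no neighbor of $u$ is selected. In particular, any selected vertex of $I=V(G)\setminus M$ lies only next to vertices of type $1$. Every shortest path from $u$ to $S$ has length at least $2$, and its second-to-last vertex $y$ is a neighbor of some selected vertex.

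If the last vertex is in $M$, then it lies in $M_0$. If it is in $I$, the path reaches a selected independent vertex through a neighbor in $M_1$. Moreover, whenever $M_1$ is nonempty in the relevant component, every vertex of $M_1$ is adjacent to a selected vertex. It would be natural to set
\[
  d_u=\min\bigl(\dist(u,M_0),\,\dist(u,M_1)+1\bigr)
\]
for $\tau(u)={>}1$. The intended argument is that every selected vertex not in $M_0$ is adjacent to an $M_1$ vertex, and that each $M_1$ vertex is at distance exactly $1$ from $S$. The first point holds because a selected $v\in I$ has all its neighbors in $M$, and those neighbors have distance $1$; hence they are in $M_1$, or the component is trivial.

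Isolated vertices need separate treatment. A selected isolated vertex cannot affect anything else, and an isolated vertex must be selected. We therefore handle isolated vertices separately and assume $G$ has none.

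Under this assumption, $\dist(u,S)=\min(\dist(u,M_0),\dist(u,M_1)+1)$ exactly, using the triangle inequality and the fact that a selected vertex is either in $M_0$ or adjacent to some vertex of $M_1$. The plan is thus: enumerate the $3^{|M|}$ types, compute $\mathbf{d}$ from each by BFS, discard types whose computed vector contradicts $\tau$ (for instance, entries equal to $1$ where $\tau$ says ${>}1$), and call the lemma. Since the lemma already handles infeasible vectors, the only correctness obligation is that every consistent set's distance vector arises from its own type, which the formula gives.

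I expect the main obstacle to be making this formula airtight. The delicate cases are vertices $u$ whose closest selected vertex lies in a different component from any $M_1$ vertex, and vertices that are unreachable from $S$. Connected components must be handled independently, or we must argue that distance $\infty$ is consistent with the formula. Components are best split off up front, following the reduction to connected graphs used in \cref{thm:twalg}.
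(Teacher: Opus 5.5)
Your proposal is correct and matches the paper's proof: enumerate the $3^{\vc}$ maps $M\to\{0,1,\star\}$, recover the distance vector via $\min\{\dist(u,M_0),\dist(u,M_1)+1\}$, discard inconsistent vectors, apply \cref{lem:vc:weighted-fixed-distance-vector}, and count $(1+2+2)^{\vc}=5^{\vc}$. One slip to fix: the neighbours of a selected independent vertex lie in $M_0\cup M_1$, not necessarily in $M_1$, since they may themselves be selected. The formula is unaffected, because the penultimate vertex of a shortest path from $u$ to $S$ is unselected and hence lies in $M_1$ whenever the endpoint is in $I$; you sketch this argument yourself, and it is exactly the paper's. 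It also makes your separate treatment of isolated vertices and components superfluous.
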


\begin{proof}
    Consider an instance $(G,k,\col)$ of \MCS.
    We first compute a minimum vertex cover $M$ of $G$ in time $\bO(2^\vc \cdot n)$ using the standard branching algorithm,
    where $|M|=\vc$.

    The key observation is that it suffices to record, for each vertex of $M$, whether its distance to a hypothetical solution is $0$, $1$, or at least $2$.
    We represent the last possibility by the symbol $\star$ and define the \emph{signature space} $\Sigma$
    to be the set of all functions from $M$ to $\{0,1,\star\}$.

    We first run breadth-first search from every vertex of $M$.
    This computes all distances with one endpoint in $M$ in
    $\bO(\vc\cdot(n+m))$ time.

    Fix a signature $\sigma \colon M \to \{0,1,\star\}$, and let
    \[
        M_a \coloneq \setdef{u \in M}{\sigma(u)=a}
        \qquad\text{for }a\in\{0,1,\star\}.
    \]
    We construct a numerical vector $\mathbf{d}^{\sigma} = (d^{\sigma}_u)_{u \in M}$ by setting
    $d^{\sigma}_u \coloneq 0$ for $u\in M_0$,
    $d^{\sigma}_u \coloneq 1$ for $u\in M_1$, and, for every $u\in M_\star$,
    \[
        d^{\sigma}_u \coloneq
        \min\left\{
            \min_{u' \in M_0}\dist(u,u'),\
            \min_{u' \in M_1}\bigl(\dist(u,u')+1\bigr)
        \right\}.
    \]
    A minimum over an empty set is understood to be $\infty$.
    If an entry indexed by $M_\star$ is less than $2$ or equal to $\infty$, we discard $\sigma$; otherwise, we retain $\mathbf{d}^{\sigma}$.
    In this way, all candidate vectors can be generated in
    $\bO(3^\vc\cdot\vc^2)$ time.

    We show that, for every consistent subset, this enumeration retains its distance vector.
    Let $S$ be a consistent subset, and define $\sigma_S\in\Sigma$ according to whether $\dist(u,S)$ is $0$, $1$, or at least $2$ for each $u\in M$.
    For $a\in\{0,1,\star\}$, let
    \[
        M_a^S \coloneq \setdef{u\in M}{\sigma_S(u)=a}.
    \]
    Fix $v \in M_\star^S$ and a shortest path from $v$ to $S$.
    Let $r \notin S$ be its penultimate vertex and $s\in S$ its final vertex.
    Since $M$ is a vertex cover, at least one endpoint of the edge $\{r,s\}$ belongs to $M$.
    If $s\in M$, then $s\in M_0^S$, so the first minimum is at most $\dist(v,S)$.
    Otherwise, $r\in M$ and $\dist(r,S)=1$, so $r\in M_1^S$ and the second minimum is at most $\dist(v,S)$.
    Hence
    \[
        d^{\sigma_S}_v\leq\dist(v,S).
    \]
    Conversely, every vertex of $M_0^S$ belongs to $S$, and every vertex of $M_1^S$ has a neighbor in $S$.
    Every term in the two minima therefore gives the length of a path from $v$ to $S$, implying
    \[
        d^{\sigma_S}_v\geq\dist(v,S).
    \]
    Thus $d^{\sigma_S}_v=\dist(v,S)$ for every $v\in M_\star^S$.
    Equality holds by definition on $M_0^S\cup M_1^S$, so
    $\mathbf{d}^{\sigma_S}$ is precisely the distance vector from $M$ to $S$ and is not discarded.

    We apply \cref{lem:vc:weighted-fixed-distance-vector}, with all edge weights equal to $1$, to every retained candidate vector.
    The distances computed above are reused in every application.
    If a signature assigns $0$ to exactly $g$ vertices, then its vector has exactly $g$ zero entries and is processed in
    $\bO(2^{\vc-g}\cdot\vc\cdot n^2)$ time.
    There are $\binom{\vc}{g}2^{\vc-g}$ such signatures: we choose the $g$ zero entries and assign either $1$ or $\star$ to every remaining entry.
    By the binomial theorem,
    \[
        \sum_{g=0}^{\vc}
            \binom{\vc}{g}2^{\vc-g}2^{\vc-g}
        =
         \sum_{g=0}^{\vc}
            \binom{\vc}{g}4^{\vc-g}1^{g}
        =
        5^\vc.
    \]
    The total running time is therefore
    $\bO(5^\vc\cdot\vc\cdot n^2)$.
    The preprocessing and enumeration times are dominated by this bound.
    Since the candidates include the distance vector of an optimal consistent subset, returning the smallest solution found proves the theorem.
\end{proof}

We remark that the exponential dependency in $\vc$ is necessary under the ETH.
Indeed, the reduction of~\cite[Section 2]{fsttcs/BanikDMMNMRRS24} starts from an instance of \textsc{Dominating Set}
and produces an instance of \MCSshort by introducing an additional universal vertex and using only two colors.
Thus, the known ETH lower bound for \textsc{Dominating Set} transfers to \MCSshort and yields an $2^{o(n)}$ lower bound under the ETH,
in particular excluding any algorithm of running time $2^{o(\vc)} n^{\bO(1)}$.

\paragraph{Weighted graphs.}
For \WMCSshort, the number of relevant distance vectors may depend on $n$ (consider, e.g., a star with pairwise distinct edge weights). Nevertheless, we show that the number of candidates can still be bounded by $n^{\vc}$ and combine this with \cref{lem:vc:weighted-fixed-distance-vector} to obtain our final contribution:
% For example, consider a star with pairwise distinct edge weights and let $M$ consist of its center.
% Each leaf then induces a different distance vector from $M$ to the corresponding singleton.
% Thus the number of candidates cannot be bounded solely in terms of $\vc$, but by $n^{\vc}$.

% With positive edge weights, the number of relevant distance vectors may depend on $n$.
% For example, consider a star with pairwise distinct edge weights and let $M$ consist of its center.
% Each leaf then induces a different distance vector from $M$ to the corresponding singleton.
% Thus the number of candidates cannot be bounded solely in terms of $\vc$, but by $n^{\vc}$.

\begin{theorem}\label{thm:vc:weighted-xp}
    {\WMCS} can be solved in time $\bO(2^\vc\cdot\vc\cdot n^{\vc+2})$.
\end{theorem}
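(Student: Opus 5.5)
We enumerate candidate distance vectors from a minimum vertex cover $M$ to a hypothetical solution $S$, then run \cref{lem:vc:weighted-fixed-distance-vector} on each candidate. The goal is to show that at most $n^{\vc}$ candidates are needed.

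First compute a minimum vertex cover $M$ with $|M|=\vc$ in time $\bO(2^\vc\cdot n)$. Then run Dijkstra's algorithm from every vertex of $M$, which gives all distances $\dist(u,v)$ with $u\in M$.

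The key observation is that for any nonempty $S$ and any $u\in M$, the value $\dist(u,S)$ equals $\dist(u,v)$ for some $v\in V(G)$, namely a closest selected vertex. Hence every entry of the true distance vector lies in $A_u=\setdef{\dist(u,v)}{v\in V(G)}$, a set of at most $n$ values. We therefore enumerate all functions $\mathbf{d}\in\prod_{u\in M}A_u$, of which there are at most $n^{\vc}$. We may also encode each $d_u$ by a chosen vertex $v_u$ realizing it. The distance vector of an optimal consistent subset is among these candidates.

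For each candidate we invoke \cref{lem:vc:weighted-fixed-distance-vector}, which either returns a minimum-size consistent subset realizing $\mathbf{d}$ or reports that none exists. We output the smallest set found over all candidates and compare its size with $k$. Correctness is immediate: every returned set is consistent, and the optimum's vector is enumerated. Each call costs $\bO(2^{\vc-g}\cdot\vc\cdot n^2)\le\bO(2^{\vc}\cdot\vc\cdot n^2)$. The distance preprocessing is done once and can be reused, since the lemma's own Dijkstra step is dominated by the bound anyway. The total is therefore $\bO(2^\vc\cdot\vc\cdot n^{\vc+2})$, as claimed.

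The only delicate point is the bookkeeping: the enumeration must be exactly $n^{\vc}$ rather than $(n+1)^{\vc}$. This holds because the empty set is never consistent when $n\ge1$, so no candidate needs an $\infty$ entry. No finer refinement of the counting, such as the binomial trick used in \cref{thm:vc:unweighted-fpt}, is required.
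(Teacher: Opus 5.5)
Your proposal is correct and is essentially the paper's proof: you choose one vertex $s_u$ per $u\in M$ to generate at most $n^{\vc}$ candidate vectors and run \cref{lem:vc:weighted-fixed-distance-vector} on each. The paper additionally discards candidates with an infinite coordinate, which matters only for disconnected graphs, and it groups candidates by the number $g$ of zero entries to get $\sum_g\binom{\vc}{g}(n-1)^{\vc-g}2^{\vc-g}=(2n-1)^{\vc}$; that is no better than your direct bound $n^{\vc}\cdot 2^{\vc}$ for the stated running time, so the refinement is indeed unnecessary here.
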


\begin{proof}
    Consider an instance $(G,w,k,\col)$ of \WMCS.
    We first compute a minimum vertex cover $M$ of $G$ in time $\bO(2^\vc \cdot n)$ using the standard branching algorithm,
    where $|M|=\vc$.

    We first compute $\dist(u,v)$ for every $u\in M$ and $v\in V(G)$, in
    $\bO(\vc\cdot(m+n\log n))$ time using Dijkstra's algorithm.

    For each $u\in M$, choose a vertex $s_u\in V(G)$ and form the candidate vector
    \[
        \mathbf{d} \coloneq \bigl(\dist(u,s_u)\bigr)_{u\in M}.
    \]
    We discard the choice if some coordinate is infinite.
    There are $n^\vc$ choices, which can be enumerated in
    $\bO(n^\vc\cdot\vc)$ time using the precomputed distances.
    We allow the same vector to be generated more than once.

    Every relevant distance vector is generated: given a consistent subset $S$, choose each $s_u$ to be a closest vertex of $S$ to $u$.
    The resulting coordinate is then $\dist(u,s_u)=\dist(u,S)$.

    We apply \cref{lem:vc:weighted-fixed-distance-vector} to every retained candidate vector, reusing the precomputed distances.
    Since all edge weights are positive, the coordinate of $u$ is zero exactly when $s_u=u$.
    Hence there are at most $\binom{\vc}{g}(n-1)^{\vc-g}$ choices producing a vector with exactly $g$ zero entries.
    Each such vector is processed in
    $\bO(2^{\vc-g}\cdot\vc\cdot n^2)$ time.
    By the binomial theorem,
    \[
        \sum_{g=0}^{\vc}
            \binom{\vc}{g}(n-1)^{\vc-g}2^{\vc-g}
        =
         \sum_{g=0}^{\vc}
            \binom{\vc}{g}(2n-2)^{\vc-g} 1^{g}
        =
        (2n-2+1)^\vc
        \leq
        (2n)^\vc.
    \]
    The total running time is therefore
    $\bO(2^\vc\cdot\vc\cdot n^{\vc+2})$.
    This dominates both the distance preprocessing and the enumeration time.
    Since the candidates include the distance vector of an optimal consistent subset, returning the smallest solution found proves the theorem.
\end{proof}

%% file: Sections/conclusion.tex
\section{Concluding Remarks}
Our results provide a comprehensive understanding of the algorithmic upper and lower bounds for computing minimum consistent subsets. We note that while we focus on graphs, weighted graphs provide a more general setting for the problem than any metric space while for unweighted graphs our work completes the picture started in recent investigations of \MCSshort. One promising avenue for future work is to study approximation algorithms for computing consistent subsets. It would also be interesting to see how the foundational results obtained in our work correlate with the performance of empirical solvers on suitably structured instances.
% Open Questions \& Directions: \simon{discuss which of these we keep?}
% \begin{itemize}
%     \item Can we use the bucketing technique on the algorithm of Result 2? Perhaps even only on the unweighted setting?

%     \item What about the parameterization only by cutwidth?

%     \item Perhaps we can consider a more general case, where we are given explicitly a distance per pair of points.
%     What happens in this setting for say the parameterization by $k$ plus the different number of distances?

%     \item We can take a closer look at the motivation of the problem, and try to perhaps capture this motivation even better.
%     For instance, we can consider that there are also a number of vertices for which this constraint is not satisfied (\textit{faulty}).
%     Another interesting direction would be \emph{fairness}.

%     \item We believe that it shouldn't be too difficult to get $\NP$-hardness for the case where both $\Delta , c = \bO(1)$.

%     \item Can we perhaps show any inapproximability result based on the PIH?
%\end{itemize}